\newtheorem{lem}{Lemma}
\newtheorem{thm}{Theorem}
\newtheorem{prp}{Proposition}
\newtheorem{cor}{Corollary}
\newtheorem*{rem*}{Remark}
\newtheorem{defn}{Definition}
\newcommand{\R}{\mathbb{R}}
\newcommand{\E}{\mathbb{E}}
\newcommand{\N}{\mathbb{N}}
\newcommand{\Pb}{\mathbb{P}}
\newcommand{\C}{\mathbb{C}}
\newcommand{\indicator}{\mathds{1}}
\newcommand{\Laplace}{\mathcal{L}}
\newcommand{\mtxLaplace}{\boldsymbol{\mathcal{L}}}
\newcommand{\sinr}{{\sf SINR}}
\newcommand{\sir}{{\sf SIR}}
\newcommand{\blkdiag}{~{\rm blkdiag}}
\begin{document}
\bstctlcite{IEEEexample:BSTcontrol}

\title{A Matrix Exponential Generalization of the Laplace Transform of Poisson Shot Noise}
\author{Nicholas R. Olson and Jeffrey G. Andrews 
\thanks{N. R. Olson and J. G. Andrews are with 6G@UT and WNCG at The University of Texas at Austin, Austin, TX, USA (email: nolson@utexas.edu, jandrews@ece.utexas.edu). }
\thanks{Manuscript received June 7, 2024; revised September 18, 2024.}}

\maketitle

\begin{abstract}
We consider a generalization of the Laplace transform of Poisson shot noise defined as an integral transform with respect to a matrix exponential. We denote this as the {\em matrix Laplace transform} and establish that it is in general a matrix function extension of the scalar Laplace transform. We show that the matrix Laplace transform of Poisson shot noise admits an expression analogous to that implied by Campbell's theorem.
We demonstrate the utility of this generalization of Campbell's theorem in two important applications: the characterization of a Poisson shot noise process and the derivation of the complementary CDF (CCDF) and meta-distribution of signal-to-interference-and-noise (SINR) models in Poisson networks. In the former application, we demonstrate how the higher order moments of Poisson shot noise may be obtained directly from the elements of its matrix Laplace transform. We further show how the CCDF of this object may be bounded using a summation of the first row of its matrix Laplace transform. For the latter application, we show how the CCDF of SINR models with phase-type distributed desired signal power may be obtained via an expectation of the matrix Laplace transform of the interference and noise, analogous to the canonical case of SINR models with Rayleigh fading. Additionally, when the power of the desired signal is exponentially distributed, we establish that the meta-distribution may be obtained in terms of the limit of a sequence expressed in terms of the matrix Laplace transform of a related Poisson shot noise process.
\end{abstract}

\begin{IEEEkeywords}
Stochastic Geometry, Poisson Shot Noise, Laplace Transform, Matrix Functions, Phase-Type Distributions, Meta-Distribution.
\end{IEEEkeywords}


\section{Introduction}
\label{sec:intro} 

{S}{tochastic} geometry has proven to be a powerful mathematical framework for the analysis of wireless networks, and has been used in a wide variety of settings over the past few decades, \cite{Hmamouche21} \cite{Haenggi12} \cite{baccelli10}. The principal utility of this framework is that it allows for tractable models of the spatial locations of transmitters and receivers in a network to be obtained through the machinery of point processes \cite{BaccelliSG}. This enables the development of stochastic models for the received power of direct and interfering transmissions across the network, which can then be used to characterize a variety of metrics -- such as coverage probability, ergodic capacity, and interference statistics, among others \cite{Hmamouche21}. 

The vast majority of the metrics studied in such analyses are functionals of the interference observed by a receiver at a particular location in the network, \cite{Ganti12}, \cite{Ganti09}, \cite{Schilcher16}. In general, the interference as observed at an arbitrary receiver located at $y \in \R^{d}$  may be described in terms of a cumulative shot noise process \cite[Def. 2.4.1]{BaccelliSG} of the form
\begin{align}
I(y) = \int_{\R^d} H(x;y)\Phi(dx),	
\label{eq:gen_shot_noise}
\end{align}
where $\Phi$ is a point process on $\R^{d}$ and $H: \R^{d} \times \R^d \rightarrow \R$ is a non-negative random field modeling received powers between transmitters at points in $\Phi$ and the receiver at $y$. 
Analytical tractability of functionals of interference therefore hinges on tractable characterizations of $I(y)$. 

There has been a large body of work dedicated to methods to maintain tractability  with respect to functionals of $I(y)$, see e.g. \cite{Xiao21}, and we do not attempt to provide a comprehensive summary here. Rather, we consider the common and important case where $I(y)$ corresponds to a Poisson shot noise process. In this case, $\Phi$ is a Poison Point Process (PPP) and $H$ is an independent random field which is also independent of $\Phi$. 
We develop a generalization of Campbell's theorem for the Laplace functional of a PPP which enables an integral transform of $I(y)$ with respect to a matrix exponential to be obtained in integral closed form. 
We term  such a transform the {\em matrix Laplace Transform} of $I(y)$ due to its similarity to the Laplace-Stieltjes transform. As we will show in the example applications considered in Sec. \ref{sec:pois_sn} and Sec. \ref{sec:sinr_analysis}, this generalization of Campbell's theorem allows for significantly improved tractability in a variety of settings pertinent to the analysis of wireless networks.
Poisson shot noise processes have applications to contexts outside of wireless networks, for instance to the several applications detailed in \cite{Lowen90} and to channel models for optical communications in \cite{Chakraborty07}, and thus the generalization Campbell's theorem we obtain is of interest to these applications as well.

\subsection{Campbell's Theorem and Related Work}
One of the key tools in analyzing functionals of Poisson shot noise is Campbell's Theorem, a version of which we restate here for completeness \cite[Prop. 2.1.4]{BaccelliSG}.
\begin{thm}(Campbell's Theorem)
\\
\label{thm:campbells_thm}
Let $\Phi$ be a PPP on $\R^d$ with intensity measure, $\Lambda$, and let $f:\R^d \rightarrow \R$ be a non-negative, measurable function. Then, the Laplace functional of $\Phi$, $\Laplace_{\Phi}(f) = \E\left[\exp\left(-\int_{\R^d} f(x) \Phi(dx) \right) \right]$, may be expressed as
\begin{align}
	\Laplace_{\Phi}(f) = \exp\left(- \int_{\R^d}\left(1 - e^{-f(x)}\right) \Lambda(dx) \right).
\end{align}
	
\end{thm}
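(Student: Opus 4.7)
The plan is to follow the standard two-step measure-theoretic strategy: establish the identity for simple non-negative functions using only the defining properties of a PPP, and then extend to arbitrary non-negative measurable $f$ by a monotone approximation. The only facts about $\Phi$ that I will need are (i) $\Phi(B) \sim \mathrm{Poisson}(\Lambda(B))$ for every measurable $B$ with $\Lambda(B) < \infty$, and (ii) $\Phi(B_1), \ldots, \Phi(B_n)$ are mutually independent whenever the $B_i$ are pairwise disjoint. Both are defining properties of a Poisson point process and do not need to be rederived.

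For the first step, I take $f = \sum_{i=1}^n a_i \indicator_{B_i}$ with $a_i \geq 0$ and pairwise disjoint $B_i$ of finite $\Lambda$-measure. Then $\int_{\R^d} f \, d\Phi = \sum_i a_i \Phi(B_i)$, and the Poisson moment generating function gives $\E[e^{-a_i \Phi(B_i)}] = \exp(-\Lambda(B_i)(1 - e^{-a_i}))$ for each $i$. Independence across $i$ converts the expectation of the product into a product of such exponentials; collapsing the resulting sum in the exponent recognizes it as $\int_{\R^d}(1 - e^{-f(x)}) \Lambda(dx)$, since $1 - e^{-f}$ equals $(1 - e^{-a_i})$ on $B_i$ and vanishes off $\bigcup_i B_i$. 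This is the target identity, restricted to simple $f$.

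For the extension, I choose an increasing sequence of simple functions $f_n \uparrow f$ pointwise. Monotone convergence applied to the counting measure $\Phi$ gives $\int f_n \, d\Phi \uparrow \int f \, d\Phi$ almost surely, so $\exp(-\int f_n \, d\Phi) \downarrow \exp(-\int f \, d\Phi)$, and dominated convergence (with dominating constant $1$) passes this limit under the outer expectation. On the other side, $(1 - e^{-f_n}) \uparrow (1 - e^{-f})$ pointwise, so monotone convergence with respect to $\Lambda$ handles the integral in the exponent, after which continuity of $\exp(-\cdot)$ closes the identity.

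The main difficulty is not conceptual but technical, and concerns the $\sigma$-finiteness of $\Lambda$ on $\R^d$. The approximating simple functions must be chosen with supports contained in an exhausting sequence of sets of finite $\Lambda$-measure so that the Poisson MGF step is legitimate. One must also separately handle the degenerate regime $\int(1 - e^{-f})d\Lambda = +\infty$, in which both sides of the identity equal zero; here I would argue that the divergence of the exponent forces $\int f \, d\Phi = +\infty$ almost surely via a Borel--Cantelli argument applied to the level sets of $f$, so that the left-hand expectation also vanishes. Beyond this bookkeeping, the argument is routine.
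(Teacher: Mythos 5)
The paper does not prove this theorem at all: it is restated verbatim from the literature (cited as \cite[Prop.~2.1.4]{BaccelliSG}) and used as a known building block, so there is no in-paper argument to compare against. Your proof is the standard one and is correct: the Poisson MGF plus independence over disjoint sets handles simple functions, and the monotone approximation (with supports exhausting $\R^d$ through sets of finite $\Lambda$-measure, which local finiteness of the intensity measure of a PPP guarantees) extends it to general non-negative measurable $f$. One small simplification: the separate Borel--Cantelli treatment of the regime $\int_{\R^d}(1-e^{-f(x)})\Lambda(dx)=+\infty$ is unnecessary, since your two-sided limit already yields $\E\left[e^{-\int f\,d\Phi}\right]=\lim_n \exp\left(-\int(1-e^{-f_n})\,d\Lambda\right)=0$ in that case (and, as a byproduct, $\int f\,d\Phi=+\infty$ a.s., which is the content of Lemma~\ref{lem:finiteness} in the appendix); interpreting $e^{-\infty}=0$ closes the identity without any extra argument.
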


As a point of clarification on the terminology we employ in this paper, Campbell's Theorem is more commonly taken to refer to the theorem due to Campbell which allows for the mean of a sum of functions indexed by a point process to be computed as an integral with respect to the mean measure of the point process, also called Campbell's averaging formula \cite[Prop. 1.2.5]{BaccelliSG}. However, Theorem \ref{thm:campbells_thm} and variants thereof are also attributable to Campbell, and are likewise referred to as Campbell's Theorem -- e.g. in \cite[Thm. 4.6]{Haenggi12} and \cite[Thm. 3.2]{Kingman93}. We employ the latter meaning throughout this paper.

Of particular interest to us is the following corollary of Campbell's theorem, which enables the Laplace transform of Poisson shot noise to be obtained.
\begin{cor} (Corollary of Campbell's Theorem)
\label{thm:campbells}
\\
Let $I(y)$ be as defined in (\ref{eq:gen_shot_noise}) such that $\Phi$ is a PPP on $\R^d$ with intensity measure, $\Lambda$, and $H:\R^d \times \R^d \rightarrow \R$ is a non-negative independent random field independent of $\Phi$. Let the Laplace transform of $H(x;y)$ be denoted as $\Laplace_H(\cdot; x,y)$. Then, for $s \in \C$ such that the real part of $s$ is strictly positive, the Laplace transform of $I(y)$, $\Laplace_{I(y)}(s) = \E\left[ e^{ -s I(y)} \right]$, is
\begin{align}
	\Laplace_{I(y)}(s) = \exp\left( -\int_{\R^d}\left(1 - \mathcal{L}_H(s ;x,y)\right) \Lambda(dx)\right).
	\label{eq:campbells_thm}
\end{align}
\end{cor}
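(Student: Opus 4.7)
The plan is to reduce the statement to an application of Theorem \ref{thm:campbells_thm} by conditioning on the point process $\Phi$ and exploiting the independence structure of the random field $H$.

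First, I would write $\Laplace_{I(y)}(s) = \E\left[\E\left[e^{-sI(y)} \mid \Phi\right]\right]$. For a realization $\Phi = \{x_i\}$, the shot noise integral reduces to the (possibly infinite) sum $\sum_i H(x_i; y)$, and the summands are mutually independent because $H$ is an independent random field, independent of $\Phi$. Hence the inner conditional expectation factors as the product $\prod_i \Laplace_H(s; x_i, y)$. Rewriting this product as $\exp\!\left(\sum_i \log \Laplace_H(s; x_i, y)\right)$ and viewing the sum as $-\int_{\R^d} f(x)\,\Phi(dx)$ with $f(x) := -\log \Laplace_H(s; x, y)$, I would then apply Theorem \ref{thm:campbells_thm} to obtain the exponent $-\int_{\R^d}(1 - e^{-f(x)})\,\Lambda(dx) = -\int_{\R^d}(1 - \Laplace_H(s; x, y))\,\Lambda(dx)$, which is exactly (\ref{eq:campbells_thm}).

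The main obstacle, and the step requiring the most care, is that $s$ is allowed to be complex with $\Re(s) > 0$, whereas Theorem \ref{thm:campbells_thm} is formulated for non-negative real $f$. A clean way to handle this is a two-stage argument: first apply Theorem \ref{thm:campbells_thm} directly for real $s > 0$, where $\Laplace_H(s; x, y) \in [0,1]$ guarantees $f \geq 0$ and the factorization step above is unambiguous; then extend the identity to the open right half-plane by analytic continuation. The extension is justified because $H \geq 0$ implies $|e^{-sI(y)}| = e^{-\Re(s)I(y)} \leq 1$, so by dominated convergence $s \mapsto \Laplace_{I(y)}(s)$ is analytic on $\{\Re(s) > 0\}$; a parallel dominated-convergence argument shows the right-hand side of (\ref{eq:campbells_thm}) is likewise analytic there, and the two expressions coincide on the positive real axis.

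A secondary technical point is measurability and finiteness of $\int_{\R^d}(1 - \Laplace_H(s; x, y))\,\Lambda(dx)$ (and of $\int f\,\Phi(dx)$ inside the exponential); these may be addressed by approximating $H$ by a monotone sequence of simple non-negative random fields and passing to the limit via monotone convergence, following the standard derivation of the Laplace functional formula. If $\Laplace_{I(y)}(s) = 0$ for some $s$ in the right half-plane, both sides of (\ref{eq:campbells_thm}) degenerate consistently, which poses no further obstruction.
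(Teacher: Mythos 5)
The paper gives no proof of this corollary --- it is stated with ``proof omitted for brevity as the result is known'' --- so there is nothing internal to compare against; judged on its own, your argument is the standard derivation (condition on $\Phi$, factor the conditional expectation using independence of the marks, recognize the product as the Laplace functional evaluated at $f(x) = -\log\Laplace_H(s;x,y)$, apply Theorem~\ref{thm:campbells_thm}, then extend from real $s>0$ to $\Re(s)>0$ by analytic continuation) and it is correct. The only step you gloss over is the analyticity of the right-hand side on the open half-plane: convergence of $\int_{\R^d}(1-\Laplace_H(s;x,y))\Lambda(dx)$ for complex $s$ does not follow trivially from convergence at $\Re(s)$, but it does follow from the elementary bound $\lvert 1-e^{-z}\rvert \le (1-e^{-\Re(z)}) + \lvert\Im(z)\rvert\,\Re(z)^{-1}\,c\,(1-e^{-\Re(z)/2})$ for $\Re(z)\ge 0$, after which Morera's theorem (or differentiation under the integral) gives analyticity and the identity theorem completes the continuation; the degenerate case where the integral diverges is handled, as you note, by observing that $I(y)=\infty$ a.s.\ so both sides vanish.
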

\begin{proof}
	The proof is omitted for brevity as the result is known. 
\end{proof}
The general formula for the Laplace transform of Poisson shot noise in Corollary \ref{thm:campbells} has enabled tractable analysis in the characterization of a large class of wireless networks when the locations of transmitters follow a PPP. Perhaps the most commonly studied metric is coverage probability, which is the CCDF of the SINR as experienced by the typical receiver \cite{Baccelli06}. Campbell's theorem was employed to characterize the coverage probability of wireless ad hoc networks \cite{Baccelli06}, and later cellular networks \cite{Andrews11}, and is nearly universally employed in prior work for the coverage probability analysis of wireless networks. In what is by now a common analytical approach, when one considers an SINR model with exponentially distributed desired signal power\ (as in the case of Rayleigh fading), it is possible to obtain the CCDF of the SINR with respect to a certain expectation of the Laplace transform of the interference and noise \cite[Thm. 1]{Andrews11}. 

Of course, the Rayleigh fading assumption is not always suitable in many scenarios of interest. In the coverage probability analysis of mmWave \cite{Haichao22, Bai15}, multi-antenna networks \cite{ Li14, Bacha17, Tanbourgi15, Louie11}, and satellite communications \cite{Okati20, Song23, Park24}, 
Nakagami fading is usually more pertinent. Rayleigh fading is a special case of Nakagami fading when the shape parameter is equal to one, and is generally only applicable to single input single output (SISO) channel models with rich scattering. When the fading parameters follow Nakagami distributions with an integer shape parameter, the standard approach is to express the CCDF of the SINR in terms of an expectation of a finite summation of the higher order derivatives of the Laplace transform of the interference and noise \cite[Appendix III]{Ganti09}. This complicates the expression for coverage probability in these contexts, and necessitates a means of attaining some level of tractability for the derivatives. 

The majority of the exact approaches employed in prior work are based in some way on Fa{\`a} di Bruno's formula, which provides an explicit expression for the higher order derivatives of composite functions. In particular, for a function of the form $e^{-g(s)}$ such as (\ref{eq:campbells_thm}) the $j^{th}$ derivative admits the expression \cite{Johnson02}
\begin{align}
\frac{\partial^j}{\partial s^j} e^{-g(s)} = \sum_{k = 0}^{j}  (-1)^{k}e^{-g(s)} B_{j,k} \left(g^{(1)}(s), g^{(2)}(s), \dots, g^{(j - k +1)}(s) \right),
\label{eq:FDB}
\end{align}
where $g^{(k)}$ denotes the $k^{th}$ derivative of $g$ and $B_{j,k}$ is the $j^{th}$ Bell polynomial of order $k$. These may be defined recursively as
\begin{align}
B_{j,k}(s_1, \dots, s_{j -k  1}) = \sum_{i = 1}^{j-k} {j - 1 \choose i} s_{i + 1} B_{j - i,k-1}(s_1, \dots, s_{j - i - k}) && B_{0,0} = 1, ~B_{m,0} = B_{0,n} = 0, ~m,n > 0.
\end{align}
Fa{\`a} di Bruno's formula becomes increasingly difficult to evaluate as the order of the derivative grows large due to the number of terms involved in the recursion. Alternatively, (\ref{eq:FDB}) may be stated directly, but the direct form involves a summation over a number of terms that grow super-linearly with the order of the derivative. These challenges can be alleviated, to a degree, through the usage of more general Taylor series expansions, as in \cite{Ganti12} and \cite{Louie11}; through more concise recursive methods, such as \cite{Li14, Haichao22}; or approximations based on Alzer's lemma \cite{Bai15}.

Campbell's theorem has also proved useful for the characterization of the CCDF of interference modeled as Poisson shot noise. This is of particular interest in the SIR coverage analysis of networks without randomness in the desired signal power and the characterization of interference from wireless networks more broadly. Additionally, as we will show in Sec. \ref{sec:sinr_analysis}, characterizing the CCDF of Poisson shot noise allows for one to characterize the meta-distribution \cite{Haenggi16} of SINR models with exponentially distributed desired signal power. Due to the fact that Campbell's theorem allows for the determination of the characteristic function of Poisson shot noise \cite[Thm. 4.6.]{Haenggi12}, the characteristic function may be used to determine its CCDF via Fourier inversion \cite{Haenggi09}. Unfortunately however, except for a few special cases, the inversion of the characteristic function does not admit a closed form expression. Consequently, one must resort to numerical methods which are often computationally intensive \cite{DiRenzo14}. In light of these challenges, other approaches have instead leveraged bounds based on Chernoff-type approximations \cite{Weber07}, or have employed higher order moments of the shot noise. Similar to the case of coverage probability with Nakagami fading, the higher order moments are typically obtained via the derivatives of the Laplace transform.

Before proceeding, we note that our paper is most closely related to \cite{Schilcher16}, wherein the authors generalize a version of the Campbell-Mecke theorem to facilitate the computation of a broad class of higher order sum-product functionals that includes sum-product functionals of Poisson shot noise as a special case. Our characterization of matrix Laplace transforms in Proposition \ref{prp:mtx_lt} implies that the matrix Laplace transform of Poisson shot noise essentially corresponds to a similarity transform applied to an upper triangular matrix with entries corresponding to the expectation of sum-product functionals of Poisson shot noise of the form 
\begin{align}
	\E\left[\left(\int_{\R^d} H(x;y)\Phi(dx)\right)^{k} \exp\left(-s\int_{\R^d} H(x;y)\Phi(dx)\right) \right] &&k \in \N, s \in \C.
	\label{eq:mtx_lt_element}
\end{align}
When $s$ is real and non-negative, such sum-product functionals are a special  case of the more general sum-product functionals considered in \cite{Schilcher16}, and may be characterized using the results therein. However, the key novelty of our generalization of Campbell's theorem is that we express the matrix Laplace transform directly in terms of the matrix exponential of a matrix integral analogous to Corollary \ref{thm:campbells}. Consequently, our result provides significantly improved tractability for the example applications we consider as it alleviates the need to explicitly evaluate the relatively complicated summations required to characterize expressions such as (\ref{eq:mtx_lt_element}) when they are evaluated individually. 


\subsection{Contributions and Summary}
Motivated by these challenges and gaps in the literature, in this paper we provide a characterization of the matrix Laplace transform of Poisson shot noise and demonstrate how this characterization can provide improved tractability in the settings detailed in the previous subsection.

The development of the matrix exponential generalization of the Laplace transform, termed the matrix Laplace transform, and its characterization for non-negative Poisson shot noise are presented in Sec. \ref{sec:analysis}. Our principle contributions in this section are as follows.
\begin{itemize}

\item We define the matrix Laplace transform for general random variables and characterize the conditions under which it exist. In the case of non-negative random variables, we establish that the matrix Laplace transform is in general the natural matrix function \cite{Higham08}  extension of the typical scalar Laplace transform. These results are summarized in Proposition \ref{prp:mtx_lt} in Sec. \ref{sec:analysis}.
\item Building upon the results in Proposition \ref{prp:mtx_lt}, we derive an expression for the matrix Laplace transform of general non-negative Poisson shot noise. In particular, we establish various conditions under which its matrix Laplace transform may be expressed in terms of a matrix exponential of a certain matrix valued integral in a manner analogous to Corollary \ref{thm:campbells}. These results form the basis for our generalization of Campbell's theorem and are summarized in Theorem \ref{thm:gen_campbells} in Sec. \ref{sec:analysis}. Although similar to Corollary \ref{thm:campbells}, the characterization of the matrix Laplace transform of Poisson shot noise in Theorem \ref{thm:gen_campbells} is not a trivial extension of Corollary \ref{thm:campbells}.
\end{itemize} 


As we show in the applications considered in Sec. \ref{sec:pois_sn} and Sec. \ref{sec:sinr_analysis}, the generalization of Campbell's theorem in Theorem \ref{thm:gen_campbells} enables improved tractability in the important challenges mentioned in the previous sub-section, in addition to other more general settings. 
In Sec. \ref{sec:pois_sn} we demonstrate how Theorem \ref{thm:gen_campbells} may be used to tractably characterize a linear combination of samples of a non-negative Poisson shot noise process. The contributions of our paper pertinent to this section are as follows.
\begin{itemize}
\item We establish that the higher order moments of the aforementioned object may be obtained directly from the elements of its matrix Laplace transform. This is summarized in Corollary \ref{cor:sn_moments} in Sec. \ref{sec:pois_sn}. The resulting expression obtained in Corollary \ref{cor:sn_moments} is significantly more concise than would be afforded by approaches based on differentiating the scalar Laplace transform. Moreover, the general approach to obtaining the higher order moments via the matrix Laplace transform is applicable to any random variable for which the higher order moments in question are finite.
\item We additionally demonstrate how Theorem \ref{thm:gen_campbells} may be used to characterize the CCDF of the aforementioned object in terms of a limit of a sequence of approximations of the CCDF which are expressed in terms of its matrix Laplace transform. Notably, this characterization is such that upper and lower bounds on the CCDF may be obtained in terms of any term in the sequence. This result is summarized in Corollary \ref{cor:sn_dist} in Sec. \ref{sec:pois_sn}. The overall approach to the characterization of the CCDF detailed in Sec. \ref{sec:pois_sn} is applicable to any non-negative random variable, and is established in Proposition \ref{prp:approx_bounds} in Sec. \ref{sec:pois_sn}. This general approach to characterizing the distribution of Poisson shot noise provides a potentially more tractable alternative to the common approach of characterizing the distribution via characteristic function inversion and additionally allows for a straightforward, novel means of obtaining upper and lower bounds on its distribution.
\end{itemize}

In Sec. \ref{sec:sinr_analysis}, we consider the application of Theorem \ref{thm:gen_campbells} to the coverage probability and meta-distribution analysis of the SINR of a potentially non-stationary downlink Poisson cellular network. The contributions of our paper pertinent to this section are as follows.
\begin{itemize}
\item We show how Theorem \ref{thm:gen_campbells} may be used to obtain tractable expressions for the coverage probability of a typical user in the network when the desired signal power follows a general phase-type distribution \cite{Bladt17}. This setting includes the cases of Nakagami fading with an integer shape parameter and Rayleigh fading as special cases. In particular, the  expression for coverage probability we obtain in this case is the same as the well established form encountered in networks with Rayleigh fading -- the only difference being that the scalar Laplace transform in the latter case is replaced with the matrix Laplace transform in the former case. This result is summarized in Corollary \ref{cor:sinr_cov_ph} in Sec. \ref{sec:sinr_analysis}. Corollary \ref{cor:sinr_cov_ph} allows for the various methods developed for the analysis of coverage probability in Poisson networks with Rayleigh fading to be applied to the more general setting where the desired signal power follows a phase-type distribution. Additionally, it provides a more tractable expression for coverage probability than the commonly employed methods based on differentiating the scalar Laplace transform of the interference and noise terms of the SINR model, and holds for the entire class of phase-type random variables.


\item Finally, we additionally consider the application of Theorem \ref{thm:gen_campbells} to the analysis of the meta-distribution of the SINR of a typical user when the desired signal power is exponentially distributed. Namely, we show that the meta-distribution may be equivalently expressed in terms of the CCDF of a related Poisson shot noise process, and thus that Corollary \ref{cor:sn_dist} may be used to provide an alternative means of characterizing the meta-distribution. Notably, this approach allows for the analysis of the meta-distribution in a wider class of SINR models, as opposed to only SIR models, and provides a potentially more tractable alternative to methods relying on the Gil-Paelez theorem as in prior work \cite{Haenggi16}. 
\end{itemize}


Finally, the longer proofs of some of our results are deferred to a series of appendices at the end of the paper. Intermediate lemmas necessary for the generalization of Campbell's theorem in Theorem \ref{thm:gen_campbells} are stated and proved in Appendix \ref{app:int_lemmas}. 
The proof of Theorem \ref{thm:gen_campbells} is detailed in Appendix \ref{app:gen_campbells}. The proof of the approximation guarantees for our proposed method of approximating CCDFs using the matrix Laplace transform is stated in Appendix \ref{app:approx_bounds_proof}.

\subsection{Notation}

Throughout the paper we employ the following notation. Random variables are denoted as capital letters, e.g. $X$. For a random variable, $X$, we denote its Laplace transform as $\Laplace_X(s) = \E\left[ e^{-s X} \right]$. We denote the Erlang distribution with shape parameter $k$ and rate $\lambda$ as Erlang$(k, \lambda)$. 
The exponential distribution with rate $\lambda$ is denoted as Exp$(\lambda)$. We abbreviate {\em almost surely} as a.s. For a random field $H:\R^d \rightarrow \R$ and measure $\Lambda$ on $(\R^d, \mathcal{B}(\R^d))$, $H$ is said to be {\em independent $\Lambda$-$a.s.$} if there exists a set $B \in \mathcal{B}(\R^d)$ such that $(H(x))_{x \in B}$ is an independent random field and $\Lambda(\R^d \setminus B) = 0$.

Vectors are denoted in lowercase bold font, e.g. $\mathbf{x}$, and matrices are denoted in upper case bold font, e.g. $\mathbf{X}$. 
For $i \in \{1, \dots, n\}$, we take $\mathbf{e}_i$ to denote the $i^{th}$ standard basis vector in $\R^{n}$. $\mathbf{1}_{m \times n}$ and $\mathbf{0}_{m \times n}$ denote the  matrices of all ones or all zeros in $\R^{m \times n}$. We take $\mathbf{I}$ to denote the identity matrix in $\R^{n \times n}$. By$\blkdiag(\mathbf{A}_i)_{i = 1}^{k}$ we denote the block diagonal matrix composed by sub-matrices $(\mathbf{A}_i)_{i = 1}^{k}$. We denote the $i,j^{th}$ entry of a matrix, $\mathbf{A}$, as $\mathbf{A}_{i,j}$. We use slice notation to indicate specific rows or columns of $\mathbf{A}$, for instance $\mathbf{A}_{i, :}$ denotes the $i^{th}$ row. For a matrix, $\mathbf{A}$, $\exp(\mathbf{A})$ denotes the matrix exponential of $\mathbf{A}$. By the term {\em Jordan block matrix} we refer to the individual sub-matrices that in general compose a Jordan matrix. That is, matrices which have a single value, $\lambda \in \C$, along the diagonal entries, ones along the first upper off-diagonal entries, and zeros elsewhere. 

For an arbitrary function, $f(x)$, we denote the $j^{th}$ derivative of $f$ with respect to $x$ as $f^{(j)}$. Finally for $z \in \C$, $\Re(z)$ denotes the real part of $z$ and $\Im(z)$ denotes the imaginary part of $z$. We denote the non-negative real numbers as $\R_{+}$. For $y \in \R^d$ and $r \in \R_+$, $B(y, r)$ denotes the set $\{x \in \R^d : \lvert \lvert x - y \rvert \rvert \le r\}$.

\section{Matrix Laplace Transforms and Generalization of Campbell's Theorem}
\label{sec:analysis}
In this section, we establish the generalization of Campbell's theorem for the characterization of the matrix Laplace transform of Poisson shot noise. To that end, we first present the basic definitions and results necessary to characterize the notion of matrix Laplace transforms. Following this, the generalization of Campbell's theorem is presented in Theorem \ref{thm:gen_campbells}. Useful corollaries of Theorem \ref{thm:gen_campbells} are then presented: Corollaries \ref{cor:gen_campbells} and \ref{cor:alt_gen_campbells}. Corollaries \ref{cor:gen_campbells} and \ref{cor:alt_gen_campbells} are direct generalizations of Corollary \ref{thm:campbells} -- the only difference in the resulting expressions is that the scalar integral in Corollary \ref{thm:campbells} is replaced by an analogous matrix integral and matrix function in Corollaries \ref{cor:gen_campbells} and \ref{cor:alt_gen_campbells}, respectively. 

\subsection{Definition and Characterization of Matrix Laplace Transforms}

We define the {\em matrix Laplace transform} of a random variable, $X$, as the expectation of the matrix exponential of an arbitrary matrix, $\mathbf{S}$, scaled by $X$. We formalize this notion in the following definition.

\begin{defn} (Matrix Laplace Transform)
\label{def:mtx_lt}
\\
Let $X$ be a random variable and $\mathbf{S} \in \C^{n \times n}$ be a deterministic matrix. The matrix Laplace transform of $X$, $\mtxLaplace_X(\mathbf{S})$, is
\begin{align}
	\mtxLaplace_X(\mathbf{S}) = \E\left[ \exp\left(- \mathbf{S} X\right)\right].
\end{align}
\end{defn}

Whether or not the matrix Laplace transform of a random variable exists depends on the properties of the eigenvalues of the matrix, $\mathbf{S}$. We formalize this condition for general matrix Laplace transforms in Proposition \ref{prp:mtx_lt}. 
Our characterization leverages the notion of matrix functions as detailed in \cite{Higham08}. Hence, we first summarize the necessary definitions to formalize this concept before proceeding with the statement of Proposition \ref{prp:mtx_lt}.  

It will be useful to exploit the Jordan normal form of the matrix $\mathbf{S}$, which we denote as follows.
\begin{defn} (Jordan Normal Form)
\label{def:jordan}
\\
$\mathbf{S}$ may be expressed in terms of a block diagonal matrix $\mathbf{J}$
\begin{align}
	\mathbf{S} &= \mathbf{P} \mathbf{J} \mathbf{P}^{-1}, \nonumber
	\\
	&= \mathbf{P} \blkdiag\left(\mathbf{J}_i\right)_{i = 1}^{k} \mathbf{P}^{-1}
\end{align}
where $\mathbf{P}$ is a non-singular matrix, $k$ is an integer such that $k \le n$, and $(\mathbf{J}_i)_{i = 1}^{k}$ are Jordan block matrices with eigenvalue, $\lambda_i$, of size $m_i \times m_i$ such that $\sum_{i = 1}^k m_i = n$. We refer to $\mathbf{P}$ as the Jordan basis of $\mathbf{S}$, $\mathbf{J} = (\mathbf{J}_i)_{i = 1}^{k}$ as the Jordan normal form of $\mathbf{S}$, and $\mathbf{J}_i$ as the $i^{th}$ Jordan block matrix of $\mathbf{S}$.
\end{defn}


The matrix exponential is a matrix function in the sense of \cite[Def. 1.1 and 1.2]{Higham08}. We shall exploit properties of general matrix functions in our characterization of the matrix Laplace transform, which we summarize in the following definition.
\begin{defn} (Matrix Function)
\\
\label{def:mtx_function}
Let $f:\C \rightarrow \C$ be an arbitrary function. Let $\mathbf{S} \in \C^{n \times n}$ have $p$ distinct eigenvalues and let $(n_m)_{m = 1}^{p}$ denote the sizes of the largest Jordan block corresponding to each eigenvalue. 
\begin{enumerate}[label=\roman*)]
	\item $f$ is said to be defined on the spectra of $\mathbf{S}$ if $f^{(j)}(\lambda_m)$ exists $\forall j \in \{0, \dots, n_m\}, \forall m \in \{1, \dots, p\}$.
	\item Let $f$ be defined on the spectra of $\mathbf{S}$. Then the matrix function corresponding to $f$, $\boldsymbol{f}$, takes values in $\C^{n \times n}$ and is defined as
	\begin{align}
		\boldsymbol{f}(\mathbf{S}) 
		&= \mathbf{P} \blkdiag\left(\boldsymbol{f}(\mathbf{J}_i)\right)_{i = 1}^{k} \mathbf{P}^{-1}
	\end{align}
	where $\left(\boldsymbol{f}(\mathbf{J}_i)\right)_{i = 1}^{k}$ are upper triangular Toeplitz matrices with first row 
	\begin{align}
		\boldsymbol{f}(\mathbf{J}_i)_{1,:} =\begin{bmatrix}
			f(\lambda_i) &f^{(1)}(\lambda_i) &\frac{1}{2!}f^{(2)}(\lambda_i) &\dots &\frac{1}{(m_i - 1)!}f^{(m_i - 1)}(\lambda_i).
		\end{bmatrix}
	\end{align}
\end{enumerate}
	
\end{defn}

With these concepts defined, we are now ready to state our characterization of matrix Laplace transforms.

\begin{prp}(Characterization of Matrix Laplace Transforms)
\label{prp:mtx_lt}
\\
Let $X$ be a random variable taking values in $\C$ with matrix Laplace transform $\mtxLaplace_X$ and scalar Laplace transform $\Laplace_X$. Let $\mathbf{S} \in \C^{n \times n}$. The following hold.
\begin{enumerate}[label=\roman*)]
	\item $\mtxLaplace_X(\mathbf{S})$ exists iff $\mtxLaplace_X$ is convergent on the spectra of $\mathbf{S}$. This is, if
	\begin{align}
		\E\left[(-X)^{j-1}e^{-\lambda_i X} \right] \in \C && \forall j \in  \{1, \dots, m_i\}, ~\forall i \in \{1, \dots, k\}.
		\label{eq:mtx_lt_jordan_def}
	\end{align}
	If $\mtxLaplace_X(\mathbf{S})$ is convergent on the spectra of $\mathbf{S}$, it is the matrix function defined as
	\begin{align}
		\mtxLaplace_X(\mathbf{S}) = \mathbf{P}\blkdiag\left(\mtxLaplace_X(\mathbf{J}_i)\right)_{i = 1}^{k} \mathbf{P}^{-1}.
		\label{eq:gen_mtx_laplce}
	\end{align}
	where $\left(\mtxLaplace_X(\mathbf{J}_i)\right)_{i = 1}^{k}$ are upper triangular Toeplitz matrices with first row given by 
	\begin{align}
		\mtxLaplace_X(\mathbf{J}_i)_{1,j} = \frac{1}{(j-1)!}\E\left[(-X)^{j-1}e^{-\lambda_i X} \right] &&j \in \forall \{1, \dots, m_i\}.
		\label{eq:mtx_lt_hot_def}
	\end{align}
	\item Assume further that $X$ takes values only in $\R_+$, and that $\mathbf{S}$ is such that its eigenvalues lie in the region of convergence of $\Laplace_X$, and all eigenvalues corresponding to non-scalar Jordan blocks lie in the interior of the region of convergence of $\Laplace_X$. Then, $\Laplace_X$ is defined on the spectra of $\mathbf{S}$ and $\mtxLaplace_X(\mathbf{S})$ is the matrix function corresponding to $\Laplace_X$.
\end{enumerate}
\end{prp}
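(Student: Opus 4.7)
The plan is to prove (i) by a direct computation using the Jordan normal form and the power series definition of the matrix exponential, and then to deduce (ii) by identifying the entries produced in (i) with derivatives of the scalar Laplace transform and appealing to analyticity.

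For (i), I would start from the Jordan decomposition $\mathbf{S} = \mathbf{P}\mathbf{J}\mathbf{P}^{-1}$ with $\mathbf{J} = \blkdiag(\mathbf{J}_i)_{i=1}^{k}$. The power series of the matrix exponential respects similarity and block-diagonal structure, so pointwise in $X$ we have $\exp(-\mathbf{S} X) = \mathbf{P}\,\blkdiag(\exp(-\mathbf{J}_i X))_{i=1}^{k}\,\mathbf{P}^{-1}$. Writing $\mathbf{J}_i = \lambda_i \mathbf{I} + \mathbf{N}_i$, where $\mathbf{N}_i$ is the nilpotent upper shift of size $m_i$, using that $\mathbf{I}$ and $\mathbf{N}_i$ commute, and using $\mathbf{N}_i^{m_i} = \mathbf{0}$, a finite sum yields
\begin{align*}
\exp(-\mathbf{J}_i X) = e^{-\lambda_i X}\sum_{j=0}^{m_i - 1} \frac{(-X)^{j}}{j!}\,\mathbf{N}_i^{\,j},
\end{align*}
which is upper triangular Toeplitz whose first row has entries $\frac{(-X)^{j-1}}{(j-1)!}e^{-\lambda_i X}$ for $j=1,\dots,m_i$. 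Taking expectation entrywise and factoring the deterministic $\mathbf{P},\mathbf{P}^{-1}$ out by linearity, the matrix Laplace transform exists if and only if each of these scalar entries is integrable, which is precisely condition (\ref{eq:mtx_lt_jordan_def}); under this condition the form (\ref{eq:gen_mtx_laplce})--(\ref{eq:mtx_lt_hot_def}) drops out immediately.

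For (ii), when $X$ takes values in $\R_+$, the scalar Laplace transform $\Laplace_X$ is analytic on the interior of its region of convergence and, for non-negative $X$, continuous on those points of the boundary where it is finite. Under the stated hypothesis, any eigenvalue of $\mathbf{S}$ on the boundary corresponds only to a scalar Jordan block, for which only $\Laplace_X(\lambda_i)$ itself is needed, while eigenvalues corresponding to non-scalar Jordan blocks lie in the interior where all derivatives up to the required order exist. A standard dominated-convergence argument -- using that $X^{j}e^{-\Re(\lambda_i) X}$ is bounded by $C_j\, e^{-\epsilon X}$ for any small $\epsilon > 0$ with $\Re(\lambda_i) - \epsilon$ still in the region of convergence -- justifies differentiation under the expectation and yields $\Laplace_X^{(j)}(\lambda_i) = \E[(-X)^{j}e^{-\lambda_i X}]$. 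Substituting into (\ref{eq:mtx_lt_hot_def}) and comparing with Definition \ref{def:mtx_function} shows that $\mtxLaplace_X(\mathbf{S})$ coincides with the matrix function associated to $\Laplace_X$.

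The main technical obstacle is justifying the interchange of derivative and expectation used in (ii), which requires producing a dominating function in a complex neighborhood of each interior eigenvalue from the analyticity of $\Laplace_X$; the distinction between scalar and non-scalar Jordan blocks in the hypothesis is precisely what enables this step, since scalar blocks place no demand on derivatives and therefore tolerate boundary eigenvalues. Everything else in the argument is careful bookkeeping against the Jordan normal form and Definition \ref{def:mtx_function}.
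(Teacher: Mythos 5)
Your proposal is correct and follows essentially the same route as the paper: Jordan decomposition plus linearity of expectation for part i), and differentiation of the scalar Laplace transform under the expectation for part ii). The only difference is that you prove directly (via the explicit nilpotent expansion of $\exp(-\mathbf{J}_i X)$ and a dominated-convergence argument) what the paper delegates to citations of \cite{Higham08} and \cite{Widder29}; note only the minor slip that your dominating function should read $C_j e^{-(\Re(\lambda_i)-\epsilon)X}$ rather than $C_j e^{-\epsilon X}$.
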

\begin{proof} We first prove i). Note that the matrix exponential is a matrix function defined for all $\mathbf{S} \in \C^{n \times n}$ \cite[Thm. 4.7]{Higham08}. Hence, it follows that
\begin{align}
	\mtxLaplace_X(\mathbf{S}) = \E\left[ \mathbf{P} \blkdiag\left(\exp(\mathbf{J}_i X)\right)_{i = 1}^{k} \mathbf{P}^{-1} \right].
\end{align}
When $\mtxLaplace_X(\mathbf{S})$ exists, by linearity of matrix multiplication it follows that
\begin{align}
	\mathbf{P}^{-1} \mtxLaplace_X(\mathbf{S}) \mathbf{P} &= \blkdiag\left(\E\left[\exp(\mathbf{J}_iX)\right]\right)_{i = 1}^{k} \nonumber
	\\
	&= \blkdiag\left(\mtxLaplace_X(\mathbf{J}_i)\right)_{i = 1}^{k},
\end{align}
in which case $\mtxLaplace_X$ is convergent on the spectra of $\mathbf{S}$. On the other hand, when $\mtxLaplace_X$ is convergent on the spectra of $\mathbf{S}$ it again follows by linearity that
\begin{align}
\mathbf{P} \blkdiag\left(\mtxLaplace_X(\mathbf{J}_i)\right)_{i = 1}^{k} \mathbf{P}^{-1} = \mtxLaplace_X(\mathbf{S}).
\end{align} 
Thus, $\mtxLaplace_X(\mathbf{S})$ exists.

	Part ii) is a direct consequence of \cite[Cor. 2]{Widder29}, in light of i). \cite[Cor. 2]{Widder29} states that, for all $\lambda$ in the interior of the region of convergence of $\Laplace_X$,
\begin{align}
\mathcal{L}^{(j)}_X(\lambda) = \E\left[(-X)^{j-1} e^{-\lambda X} \right] \in 
\C && \forall j \in \N
\label{eq:diff_exchange}
\end{align}
This follows by definition for $j = 0$ when $\lambda$ is in the region of convergence of $\Laplace_X$. Hence, so long as $\mathbf{S}$ satisfies the conditions stated in part ii), we have that $\Laplace_X$ is defined on the spectra of $\mathbf{S}$. Moreover, (\ref{eq:diff_exchange}) implies that $\mtxLaplace_X$ is convergent on the spectra of $\mathbf{S}$. Therefore it follows that $\mtxLaplace_X(\mathbf{S})$ is the matrix function corresponding to $\Laplace_X$.
\end{proof}

Part i) of Proposition \ref{prp:mtx_lt} indicates that the matrix Laplace transform of a random variable is simply the matrix function generalization of its scalar Laplace transform. The matrix Laplace transform always entails the evaluation of the scalar Laplace transform on the eigenvalues of $\mathbf{S}$, and when evaluated at non-diagonalizable matrices, the matrix Laplace transform entails the evaluation of the higher order terms of the form stated in (\ref{eq:mtx_lt_hot_def}). Under appropriate conditions on $\lambda$, as detailed for non-negative $X$ in part ii), these higher order terms are the higher order derivatives of the scalar Laplace transform. Hence, the matrix Laplace transform may be seen as an alternative method of obtaining the higher order derives of scalar Laplace transform. This implies its utility in the characterization of the higher order moments and distributional characteristics of $X$, as we demonstrate for the particular case of Poisson shot noise in Sec. \ref{sec:pois_sn}.
Overall, Proposition \ref{prp:mtx_lt} is important for the generalization of Campbell's theorem in Theorem \ref{thm:gen_campbells}, as it ensures that the matrix Laplace transform of a non-negative random variable always exists when $\mathbf{S}$ has eigenvalues with positive real part. Additionally, Proposition \ref{prp:mtx_lt} allows us to leverage established properties of matrix functions, as developed in \cite{Higham08}.

\subsection{Matrix Laplace Transform of Poisson Shot Noise}

Having established a well defined notion of matrix Laplace transforms, we now turn to our main result generalizing Campbell's theorem for the matrix Laplace transform of Poisson shot noise. In some sense, part ii) of Proposition \ref{prp:mtx_lt} implies that the corollary of Campbell's theorem stated in Corollary \ref{thm:campbells} is all that is required to obtain a characterization the matrix Laplace transform. However, this characterization would only hold for matrices with eigenvalues strictly within the region of convergence of the scalar Laplace transform and would further require the derivatives of the scalar Laplace transform to be obtained. Such a characterization would provide no further tractability in practical applications, in comparison to prior work, if used directly. 

Thus, alternatively, we establish an equivalent characterization of the matrix Laplace transform of Poisson shot noise in terms of an integral closed form matrix function that is analogous to Corollary \ref{thm:campbells}. As we will discuss further in Sec. \ref{sec:pois_sn} and Sec. \ref{sec:sinr_analysis}, this formulation provides improved tractability in a variety of practical applications. Moreover, since our proof is direct, it allows for us to consider the characterization of the matrix Laplace transform of Poisson shot noise for matrices that would not necessarily allow for the application of part ii) of Proposition \ref{prp:mtx_lt} to Corollary \ref{thm:campbells}. We first establish the generalization of Campbell's theorem for different types of Jordan block matrices in the following theorem. Note that, without loss of generality, we drop the dependence of the shot noise $I(y)$ presented in (\ref{eq:gen_shot_noise}) on $y$.

\begin{thm}(Matrix Laplace Transform of Poisson Shot Noise)
\label{thm:gen_campbells}
\\
Let $\Phi$ be a PPP on $\R^{d}$ with intensity measure $\Lambda$ and $H: \R^{d} \rightarrow \R_{+}$ be a random field independent of $\Phi$ such that $H$ is independent $\Lambda$-a.s.  Let the matrix Laplace transform of $H(x)$ be denoted as $\boldsymbol{\mathcal{L}}_H(\cdot;x)$. 

Consider the Poisson shot noise $I = \sum_{X_k \in \Phi} H(X_k)$.
The following hold.
\begin{enumerate}[label=\roman*)]
	\item Let $\mathbf{J} \in \mathbb{C}^{n \times n}$ be a Jordan block matrix with eigenvalue $\lambda \in \C$ such that $\Re(\lambda) > 0$.
Then the matrix Laplace transform of $I$ evaluated at $\mathbf{J}$, $\boldsymbol{\mathcal{L}}_I(\mathbf{J})$, exists and may be expressed as
\begin{align}
	\boldsymbol{\mathcal{L}}_I(\mathbf{J}) = \begin{cases}
		\exp\left( -\int_{\R^{d}}\left( \mathbf{I} - \boldsymbol{\mathcal{L}}_H(\mathbf{J};x) \right) \Lambda(dx) \right) &\left \lvert \int_{\R^d}\left(1 - \Laplace_H\left(\lambda;x\right)\right) \Lambda(dx) \right \rvert < \infty
		\\
		\mathbf{0}_{n \times n} &{\rm otherwise}.
	\end{cases}
	\label{eq:gen_campbells_thm}
\end{align}
\item Let $\mathbf{J}_0 \in \R^{n \times n}$, $n > 1$, be a Jordan block matrix with eigenvalue equal to zero. Then, if \\$\int_{\R^{d}}E[H(x)^{n-1}]\Lambda(dx)$ converges,
\begin{align}
	\boldsymbol{\mathcal{L}}_I(\mathbf{J}_0) = \exp\left( -\int_{\R^{d}}\left( \mathbf{I} - \boldsymbol{\mathcal{L}}_H(\mathbf{J}_0;x) \right) \Lambda(dx) \right).
\end{align}
Otherwise, $\boldsymbol{\mathcal{L}}_I(\mathbf{J}_0)$ does not exist.

\item Let $\mathbf{J} \in \C^{n \times n}$ be a Jordan block matrix with eigenvalue $\lambda \in \C$ such that $\Re(\lambda) = 0$. Assume that $\int_{\R^d}\left(1 - \Laplace_H(s;x) \right) \Lambda(dx)$ is convergent for all $s \in \R_{+}$ and that $\int_{\R^{d}}E[H(x)^{n-1}]\Lambda(dx)$ is convergent. Then,
\begin{align}
	\boldsymbol{\mathcal{L}}_I(\mathbf{J}) = 
		\exp\left( -\int_{\R^{d}}\left( \mathbf{I} - \boldsymbol{\mathcal{L}}_H(\mathbf{J};x) \right) \Lambda(dx) \right).
	\label{eq:char_gen_campbells}
\end{align}

\end{enumerate}

\end{thm}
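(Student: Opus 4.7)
The plan is to exploit the commutativity of the matrices $\{\exp(-\mathbf{J} H(X_k))\}_k$. Since each $\exp(-\mathbf{J} t)$ is a polynomial in the single matrix $\mathbf{J}$, it is upper triangular Toeplitz in the standard basis, and any two such matrices commute. Consequently,
\begin{align*}
\exp(-\mathbf{J} I) = \prod_{X_k \in \Phi} \exp(-\mathbf{J} H(X_k)),
\end{align*}
so $\mtxLaplace_I(\mathbf{J})$ reduces to a matrix-valued analog of the probability generating functional of $\Phi$---essentially a matrix lift of Campbell's theorem.

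For part i), I would proceed by finite-region truncation. Let $I_R = \sum_{X_k \in \Phi \cap B(0,R)} H(X_k)$. Conditioning on the Poisson count in $B(0,R)$ and using commutativity together with the $\Lambda$-a.s.\ independence of $H$, a direct summation of the Poisson-weighted matrix power series yields
\begin{align*}
\E[\exp(-\mathbf{J} I_R)] = \exp\left(-\int_{B(0,R)}(\mathbf{I} - \mtxLaplace_H(\mathbf{J};x))\Lambda(dx)\right).
\end{align*}
Since $\Re(\lambda) > 0$ renders every entry of $\exp(-\mathbf{J} t)$ bounded on $t \ge 0$ and $I_R \uparrow I$ a.s., dominated convergence handles the left-hand side as $R \to \infty$. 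When the scalar integral $\int(1-\Laplace_H(\lambda;x))\Lambda(dx)$ converges in $\C$, I anticipate invoking an intermediate lemma from Appendix \ref{app:int_lemmas} to extend integrability to the off-diagonal Toeplitz entries of the matrix integrand (which are derivative-type quantities in $\lambda$), after which continuity of the matrix exponential delivers the stated formula. If instead the scalar integral diverges, the $(1,1)$ entry of $\mtxLaplace_I(\mathbf{J})$---which coincides with $\Laplace_I(\lambda)$---tends to zero, and the upper triangular Toeplitz structure together with uniform bounds on the remaining entries forces $\mtxLaplace_I(\mathbf{J}) = \mathbf{0}_{n\times n}$.

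Part ii) is the nilpotent case: $\exp(-\mathbf{J}_0 t)$ is a polynomial of degree $n-1$ in $t$ with coefficients built from the nilpotent shift. Existence of $\mtxLaplace_H(\mathbf{J}_0;x)$ and $\mtxLaplace_I(\mathbf{J}_0)$ is characterized through Proposition \ref{prp:mtx_lt} i) as finiteness of the relevant moments of $H(x)$ and of $I$, and the standard PPP moment identity ties $\E[I^{n-1}]$ to $\int \E[H(x)^{n-1}]\Lambda(dx)$. Modulo this bookkeeping, the truncation identity from part i) still holds, and the $(n-1)$-th moment hypothesis provides a polynomial dominating function on each entry, enabling entry-by-entry dominated convergence as $R \to \infty$ to close the case.

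Part iii), where $\lambda = i\omega$ lies on the imaginary axis, is the main obstacle, because $\|\exp(-\mathbf{J} t)\|$ no longer decays in $t$ and the domination used in part i) fails. My plan is a perturbation/continuity argument: set $\mathbf{J}_\varepsilon = \mathbf{J} + \varepsilon\mathbf{I}$ with $\varepsilon > 0$, placing it under part i). Pointwise $\exp(-\mathbf{J}_\varepsilon I) \to \exp(-\mathbf{J} I)$ as $\varepsilon \downarrow 0$, with entries dominated in modulus by polynomials in $I$ of degree at most $n-1$; the hypothesis $\int \E[H(x)^{n-1}]\Lambda(dx) < \infty$ yields $\E[I^{n-1}] < \infty$ and hence a uniform dominating function, giving convergence on the left. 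For the right-hand side, the hypothesis that $\int(1-\Laplace_H(s;x))\Lambda(dx)$ converges for every $s \in \R_+$ furnishes continuity of the $(1,1)$ entry of the matrix integrand as $\varepsilon \downarrow 0^+$, while the same moment bound controls the higher derivative-type entries. Passing to the limit in the identity of part i) at $\mathbf{J}_\varepsilon$ then yields (\ref{eq:char_gen_campbells}).
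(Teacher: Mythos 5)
Your overall architecture for parts i) and ii) matches the paper's: truncate $\Phi$ to an increasing sequence of bounded regions, derive the product formula for the finite-intensity case by conditioning on the Poisson count and summing the matrix power series (the paper's Lemma \ref{lem:finite_gen_campbells}), then pass to the limit with dominated convergence for $\Re(\lambda)>0$ and a monotone-convergence-type argument in the nilpotent case. Part iii) is where you genuinely diverge: the paper proves it by the same truncation in $m$, dominating $\lvert \exp(-\mathbf{J} I_m)_{1,j}\rvert = \lvert \exp(-\mathbf{J}_0 I_m)_{1,j}\rvert$ by the corresponding entries of $\exp(-\mathbf{J}_0 I)$, whereas you perturb to $\mathbf{J}_\varepsilon = \mathbf{J}+\varepsilon\mathbf{I}$, invoke part i), and send $\varepsilon\downarrow 0$. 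Your route is sound and arguably cleaner in that it reuses part i) as a black box, but it shifts the technical burden to continuity of the matrix integral in $\varepsilon$: the off-diagonal entries are dominated by $\E[H(x)^{j-1}]$ as you say, but the diagonal entry $\int(1-\Laplace_H(\lambda+\varepsilon;x))\Lambda(dx)$ needs a dominating function uniform in $\varepsilon$, e.g.\ via $\lvert 1-e^{-(\lambda+\varepsilon)h}\rvert \le (1-e^{-\varepsilon h}) + \lvert 1-e^{-\lambda h}\rvert$ and the hypothesis that the scalar integral converges for all real $s$. You should also note that part i) applies to $\mathbf{J}_\varepsilon$ in its \emph{convergent} branch, which requires checking $\bigl\lvert\int(1-\Laplace_H(\lambda+\varepsilon;x))\Lambda(dx)\bigr\rvert<\infty$ for each fixed $\varepsilon>0$.

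The one genuine gap is the divergent sub-case of part i). You claim that once the $(1,1)$ entry $\Laplace_I(\lambda)$ vanishes, ``the upper triangular Toeplitz structure together with uniform bounds on the remaining entries forces $\mtxLaplace_I(\mathbf{J})=\mathbf{0}_{n\times n}$.'' That inference is invalid: a nonzero nilpotent upper triangular Toeplitz matrix has zero diagonal and bounded entries, so nothing about the structure forces the off-diagonal entries $\frac{1}{(j-1)!}\E[(-I)^{j-1}e^{-\lambda I}]$ to vanish. You must kill each entry separately. The paper does this in two pieces: when $I=\infty$ a.s.\ it uses the probabilistic bound $\lvert\mtxLaplace_I(\mathbf{J})_{1,j}\rvert \le \Re(\lambda)^{1-j}\max\left(\Pb(Z_j>I),\Pb(Z_{j-1}>I)\right)$ with $Z_j\sim{\rm Erlang}(j,\Re(\lambda))$ (its Lemma \ref{lem:mtx_bounds}), which is zero; and when $I<\infty$ a.s.\ but $\int(1-\Laplace_H(\lambda;x))\Lambda(dx)$ diverges, it factors $\exp\left(-\int_{A_m}(\mathbf{I}-\mtxLaplace_H(\mathbf{J};x))\Lambda(dx)\right)$ into the scalar factor $\exp\left(-\int_{A_m}(1-\Laplace_H(\lambda;x))\Lambda(dx)\right)$ times a finite polynomial in a \emph{convergent} nilpotent integral, so the diverging real part of the scalar exponent annihilates every entry. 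Your sketch contains neither mechanism, so this case needs to be reworked along one of those lines.
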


\begin{proof}
	See Appendix \ref{app:int_lemmas} and Appendix \ref{app:gen_campbells}.
	\end{proof}
	
	Theorem \ref{thm:gen_campbells}, in tandem with part i) of Proposition \ref{prp:mtx_lt}, leads to a complete characterization of the matrix Laplace transform of Poisson shot noise which analogous to Corollary \ref{thm:campbells}. 
\ifdefined\usePiii
	Before stating this result, it will be useful to define the following extension of the matrix exponential.
	
	\begin{defn} (Extended Matrix Exponential)\\
	\label{def:ext_expm}
	The matrix function, $\overline{\exp}$, is the extension of the matrix exponential to upper triangular matrices with potentially infinite entries along the diagonal. In particular,
	 let $\mathbf{A} = \mathbf{D} + \mathbf{N}$, where $\mathbf{N} \in \C^{n \times n}$ is an upper triangular nilpotent matrix and $\mathbf{D}$ is a $ n \times n$ diagonal matrix with entries in $\bar{\C} = \{z: s.t.~ \Re(z) \in \R \cup \{-\infty\}, \Im(z) \in \R\}$, then
	\begin{align}
		\overline{\exp}(\mathbf{A})= \blkdiag(\exp(\mathbf{D}_{i,i}))_{i = 1}^{n} \exp(\mathbf{N}).
	\end{align}
	\end{defn}
	
	Using this definition, we state our complete generalization of Campbells Theorem for the matrix Laplace transform of a Poisson shot noise process as the following corollary.
	\else
	This is formalized in the following corollary of Theorem \ref{thm:gen_campbells} for matrices with eigenvalues in the interior of the right half complex plane.
	\fi
	
	\begin{cor}(Generalization of Campbell's Theorem.)
	\label{cor:gen_campbells}
	
	\ifdefined\usePiii
\begin{enumerate}[label=\roman*)]
\item 
\fi
Let $\mathbf{S} \in \C^{n \times n}$ be a matrix such that the real parts of its eigenvalues are strictly positive. Assume that $\int_{\R^d}\left(1 - \Laplace_H(s;x) \right) \Lambda(dx)$ is convergent for all $s \in \R_+$. Then, $I$ is finite a.s. and
\begin{align}
		\mtxLaplace_I(\mathbf{S}) = \exp\left( -\int_{\R^{d}}\left( \mathbf{I} - \boldsymbol{\mathcal{L}}_H(\mathbf{S};x) \right) \Lambda(dx) \right).
		\label{eq:comp_gen_campbells}
	\end{align}
	Otherwise, $I$ is infinite a.s. and $\mtxLaplace_I(\mathbf{S}) = \mathbf{0}_{n \times n}$.
	
\ifdefined\usePiii
\item 	 Let $\mathbf{S} \in \C^{n \times n}$ be a matrix with non-negative eigenvalues. Let $\mathbf{J}$ denote the Jordan normal form of $\mathbf{S}$ with corresponding Jordan basis, $\mathbf{P}$. Let $q$ denote the dimension of the largest Jordan block of $\mathbf{S}$ corresponding an eigenvalue with real part equal to zero. If $q \le 1$ assume that $\int_{\R^d}\left(1 - \Laplace_H(s;x) \right) \Lambda(dx) < \infty$ for all $s \in \R_{+}$. Otherwise, if $q > 1$ assume that $\int_{\R^{d}}E[H(x)]^{q-1}\Lambda(dx) < \infty$. Then, the matrix Laplace transform of $I$ exists and is given as
	\begin{align}
		\mtxLaplace_I(\mathbf{S}) = \mathbf{P} ~\overline{\exp}\left( -\int_{\R^{d}}\left( \mathbf{I} - \boldsymbol{\mathcal{L}}_H(\mathbf{J};x) \right) \Lambda(dx) \right) \mathbf{P}^{-1}.
	\end{align}
	When $\int_{\R^{d}} \left( 1- \Laplace_H(s;x) \right) \Lambda(dx)$ is convergent on the eigenvalues of $\mathbf{S}$, then
	\begin{align}
		\mtxLaplace_I(\mathbf{S}) = \exp\left( -\int_{\R^{d}}\left( \mathbf{I} - \boldsymbol{\mathcal{L}}_H(\mathbf{S};x) \right) \Lambda(dx) \right).
	\end{align}
	
\end{enumerate}	
\fi
	\end{cor}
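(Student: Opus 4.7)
The plan is to reduce the computation of $\mtxLaplace_I(\mathbf{S})$ to block-level matrix Laplace transforms via the Jordan decomposition, apply Theorem \ref{thm:gen_campbells}(i) to each Jordan block, and reassemble using Proposition \ref{prp:mtx_lt}(i) together with the similarity invariance of the matrix exponential. First I would write $\mathbf{S} = \mathbf{P}\mathbf{J}\mathbf{P}^{-1}$ with $\mathbf{J} = \blkdiag(\mathbf{J}_i)_{i=1}^{k}$, each $\mathbf{J}_i$ a Jordan block of size $m_i$ with eigenvalue $\lambda_i$ satisfying $\Re(\lambda_i) > 0$ by hypothesis on $\mathbf{S}$.

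The second step is to transfer the hypothesized real-argument convergence to each complex eigenvalue $\lambda_i$, which is needed to invoke Theorem \ref{thm:gen_campbells}(i) at $\mathbf{J}_i$. Using the decomposition $1 - e^{-\lambda h} = (1 - e^{-\Re(\lambda) h}) + e^{-\Re(\lambda) h}(1 - e^{-i \Im(\lambda) h})$ together with the elementary bound $|1 - e^{-i t}| \le \min(2, |t|)$, one can establish a pointwise inequality of the form $|1 - e^{-\lambda h}| \le C_{\lambda}(1 - e^{-\Re(\lambda) h})$ for all $h \ge 0$, where $C_\lambda$ is a finite constant depending only on $\lambda$. Averaging over $H(x)$ and integrating against $\Lambda$ then yields $\int |1 - \Laplace_H(\lambda_i; x)|\Lambda(dx) \le C_{\lambda_i} \int (1 - \Laplace_H(\Re(\lambda_i); x))\Lambda(dx) < \infty$ by the stated hypothesis, which in particular gives the absolute finiteness required by Theorem \ref{thm:gen_campbells}(i); the intermediate lemmas in Appendix \ref{app:int_lemmas} should readily supply such estimates.

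With the per-eigenvalue hypothesis in place, Theorem \ref{thm:gen_campbells}(i) gives $\mtxLaplace_I(\mathbf{J}_i) = \exp(-\int(\mathbf{I} - \mtxLaplace_H(\mathbf{J}_i; x))\Lambda(dx))$ for each $i$. By the block-diagonal structure of matrix functions (Definition \ref{def:mtx_function}), $\mtxLaplace_H(\mathbf{J}; x) = \blkdiag(\mtxLaplace_H(\mathbf{J}_i; x))_{i=1}^{k}$, and by linearity of integration the matrix integral decomposes as $\int(\mathbf{I} - \mtxLaplace_H(\mathbf{J}; x))\Lambda(dx) = \blkdiag(\int(\mathbf{I} - \mtxLaplace_H(\mathbf{J}_i; x))\Lambda(dx))_{i=1}^{k}$. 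Since $\exp$ preserves block-diagonal structure, these per-block exponentials assemble into $\blkdiag(\mtxLaplace_I(\mathbf{J}_i))_{i=1}^{k} = \exp(-\int(\mathbf{I} - \mtxLaplace_H(\mathbf{J}; x))\Lambda(dx))$. Proposition \ref{prp:mtx_lt}(i) then gives $\mtxLaplace_I(\mathbf{S}) = \mathbf{P}\blkdiag(\mtxLaplace_I(\mathbf{J}_i))_{i=1}^{k}\mathbf{P}^{-1}$, and conjugating through using $\mathbf{P}\exp(\mathbf{A})\mathbf{P}^{-1} = \exp(\mathbf{P}\mathbf{A}\mathbf{P}^{-1})$ together with $\mathbf{P}\mtxLaplace_H(\mathbf{J}; x)\mathbf{P}^{-1} = \mtxLaplace_H(\mathbf{S}; x)$ yields equation (\ref{eq:comp_gen_campbells}). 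Almost-sure finiteness of $I$ follows from Corollary \ref{thm:campbells} applied at any real $s > 0$ under the same hypothesis.

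For the contrary case where $\int(1 - \Laplace_H(s_0; x))\Lambda(dx) = \infty$ for some $s_0 \in \R_+$, Corollary \ref{thm:campbells} yields $\Laplace_I(s_0) = 0$, which by non-negativity of $I$ forces $I = \infty$ a.s. Because every eigenvalue of $\mathbf{S}$ has strictly positive real part, each Jordan block's matrix exponential satisfies $\|\exp(-\mathbf{J}_i t)\| \le p_i(t)\, e^{-\Re(\lambda_i) t} \to 0$ as $t \to \infty$ for some polynomial $p_i$, and $\sup_{t \ge 0}\|\exp(-\mathbf{S} t)\| < \infty$. So $\exp(-\mathbf{S} I) \to \mathbf{0}_{n \times n}$ a.s.\ on $\{I = \infty\}$, and dominated convergence yields $\mtxLaplace_I(\mathbf{S}) = \mathbf{0}_{n \times n}$. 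The main obstacle will be the second step: carefully extending the real-argument convergence hypothesis to complex eigenvalues, so that Theorem \ref{thm:gen_campbells}(i) genuinely applies at every $\lambda_i$. The remaining assembly is essentially bookkeeping with the Jordan structure and standard properties of matrix functions.
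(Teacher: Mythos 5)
Your proof is correct, but it reaches the one genuinely delicate intermediate fact by a different route than the paper. Both arguments ultimately reduce the corollary to Theorem \ref{thm:gen_campbells}(i) applied blockwise plus the reassembly via Proposition \ref{prp:mtx_lt}(i), and the sticking point in both is the same: the hypothesis only gives convergence of $\int_{\R^d}(1-\Laplace_H(s;x))\Lambda(dx)$ for \emph{real} $s$, whereas Theorem \ref{thm:gen_campbells}(i) needs $\bigl\lvert \int_{\R^d}(1-\Laplace_H(\lambda_i;x))\Lambda(dx)\bigr\rvert<\infty$ at each possibly complex eigenvalue $\lambda_i$. The paper gets this indirectly, by contradiction: if the integral diverged at some $\lambda$ with $\Re(\lambda)>0$, then $\Laplace_I(\lambda)$ and all its derivatives would vanish there, so $\Laplace_I$ (holomorphic on the open right half-plane by \cite{Widder29}) would be a flat function and hence identically zero, contradicting $\Laplace_I(s)>0$ for real $s$. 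You instead prove the convergence directly through the pointwise domination $\lvert 1-e^{-\lambda h}\rvert\le C_\lambda\,(1-e^{-\Re(\lambda)h})$, which does hold (the ratio is continuous in $h$, tends to $\lvert\lambda\rvert/\Re(\lambda)$ as $h\to 0$ and to $1$ as $h\to\infty$), and comparison with the real case gives absolute convergence at every eigenvalue. Your route is more elementary and self-contained, and yields absolute convergence; the paper's route yields the stronger by-product that $\Laplace_I$ is nowhere zero on the open right half-plane. Be aware that the estimate you hope Appendix \ref{app:int_lemmas} will supply is not actually there --- Lemma \ref{lem:mtx_bounds} bounds entries of $\mtxLaplace_X(\mathbf{J})$, not $\lvert 1-\Laplace_H(\lambda;x)\rvert$ --- so you must carry out that two-line verification yourself, but it goes through.

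Two minor repairs. First, almost-sure finiteness of $I$ does not follow from Corollary \ref{thm:campbells} ``applied at any real $s>0$'': $\Laplace_I(s)>0$ at a single $s$ only gives $\Pb(I<\infty)>0$. You need to let $s\downarrow 0$ (using dominated convergence on the exponent) or simply invoke Lemma \ref{lem:finiteness}, as the paper does. Second, your treatment of the divergent case is fine and in fact more direct than the paper's (which routes it through Lemma \ref{lem:mtx_lt_int}); once $I=\infty$ a.s., the observation that $\exp(-\mathbf{S}t)\to\mathbf{0}_{n\times n}$ with $\sup_{t\ge 0}\lVert\exp(-\mathbf{S}t)\rVert<\infty$ suffices.
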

	\begin{proof}
	\ifdefined\usePiii
	Part ii) is immediate from Theorem \ref{thm:gen_campbells}, Definition \ref{def:ext_expm}, and Proposition \ref{prp:mtx_lt}. 
	
	To establish part i), it
	\else
	Note that the fact that $I$ is finite a.s. when $\int_{\R^d}\left(1 - \Laplace_H(s;x) \right) \Lambda(dx)$ is convergent for all \\ $s \in \R_+$ and infinite a.s. otherwise follows from Lemma \ref{lem:finiteness} in Appendix \ref{app:int_lemmas}. 
	
	First assume that $I$ is finite a.s. It
	\fi
suffices to show that $\Laplace_I(s) \ne 0$, for $s \in \C$ such that $\Re(s) > 0$ when $\int_{\R^d}\left(1 - \Laplace_H(s;x) \right) \Lambda(dx)$ is convergent for all $s \in \R_{+}$. Assume that this is not the case, and let $\lambda \in \C$, $\Re(\lambda) > 0$, be such a that $\Laplace_I(\lambda) = 0$. In light of Corollary \ref{thm:campbells}, this occurs only if $\int_{\R^d}\left(1 - \Laplace_H\left(\lambda;x\right)\right) \Lambda(dx)$ is divergent. Hence, by Theorem \ref{thm:gen_campbells} and Proposition \ref{prp:mtx_lt}, we must have that
	\begin{align}
	 \Laplace_I^{(j)}(\lambda) = 0 && j \in \N.
	\end{align}
Thus, $\Laplace_I(s)$ is a flat function at $s = \lambda$. Moreover, since $\lambda$ is in the interior of the region of convergence of $\Laplace_I$, by \cite[Cor. 2]{Widder29} we have that $\Laplace_I$ is holomorphic. Hence, it follows that $\Laplace_I$ is analytic in the neighborhood of $\lambda$. Consequently, we must have that $\Laplace_I$ is 0 in the neighborhood of $\lambda$. This is a contradiction as this, in turn, implies that $\Laplace_I(s) = 0$ for all $s \in \C$ such that $\Re(s) > 0$. This cannot be the case as the fact that $\int_{\R^d}\left(1 - \Laplace_H(s;x) \right) \Lambda(dx)$ is convergent for all $s \in \R_+$ implies that $\Laplace_I(s) > 0$ for $s \in \R_+$ in light of Corollary \ref{thm:campbells_thm}. 

The fact that $\Laplace_I(s) \ne 0$, for $s \in \C$ such that $\Re(s) > 0$, implies that $\int_{\R^{d}}\left( \mathbf{I} - \boldsymbol{\mathcal{L}}_H(\mathbf{J};x) \right) \Lambda(dx)$ is always well defined when $\int_{\R^d}\left(1 - \Laplace_H(s;x) \right) \Lambda(dx)$ is convergent for all $s \in \R_{+}$. Thus, (\ref{eq:comp_gen_campbells}) follows from Theorem \ref{thm:gen_campbells} and Proposition \ref{prp:mtx_lt}. The fact that $\mtxLaplace_I(\mathbf{S}) = \mathbf{0}_{n \times n}$ when  $I$ is infinite a.s. follows from Lemma \ref{lem:mtx_lt_int} in Appendix \ref{app:gen_campbells} and Proposition \ref{prp:mtx_lt}.
	\end{proof}

Hence, for all shot noise models of practical interest, (\ref{eq:comp_gen_campbells}) always holds. This formulation of the matrix Laplace transform of Poisson shot noise is more tractable than the direct application of Proposition \ref{prp:mtx_lt} as it alleviates the need to explicitly determine the derivatives of the corresponding scalar Laplace transform, $\mathcal{L}_I(s)$. One need only determine the matrix Laplace transforms of the random field at distinct indices, $\boldsymbol{\mathcal{L}}_H(\mathbf{S};x)$, which is usually more straightforward.

Note further that, while the integral in (\ref{eq:comp_gen_campbells}) involves the evaluation over all $n^2$ elements of $\boldsymbol{\mathcal{L}}_H(\mathbf{S};x)$, one need only consider an integral over $n$ total terms. As a consequence of Proposition \ref{prp:mtx_lt}, we may express the right hand side of (\ref{eq:comp_gen_campbells}) in terms of the Jordan normal form of $\mathbf{S}$ as
\begin{align}
	\exp\left( -\int_{\R^{d}}\left( \mathbf{I} - \boldsymbol{\mathcal{L}}_H(\mathbf{S};x) \right) \Lambda(dx) \right) = \mathbf{P} \blkdiag\left(\exp\left( -\int_{\R^{d}}\left( \mathbf{I} - \boldsymbol{\mathcal{L}}_H(\mathbf{J}_i;x) \right) \Lambda(dx) \right)\right)_{i = 1}^{k} \mathbf{P}^{-1}.
\end{align}
By definition, each of the component matrix Laplace transforms, $\boldsymbol{\mathcal{L}}_H(\mathbf{J}_i;x)$, are upper triangular Toeplitz matrices. Hence, we need only integrate the entries corresponding to their first row. More explicitly, if we define the function $T(\mathbf{r})$ to produce the upper triangular Toeplitz matrix with first row given by $\mathbf{r}$, we may express (\ref{eq:comp_gen_campbells}) as
\begin{align}
	\exp\left( -\int_{\R^{d}}\left( \mathbf{I} - \boldsymbol{\mathcal{L}}_H(\mathbf{S};x) \right) \Lambda(dx) \right) = \mathbf{P} {\blkdiag}\left(\exp\left( - T\left(\int_{\R^{d}}\left( \mathbf{e}_1^{\rm T} -  \boldsymbol{\mathcal{L}}_H(\mathbf{J}_i;x)_{1,:} \right) \Lambda(dx) \right)\right)\right)_{i = 1}^{k} \mathbf{P}^{-1}
\end{align}
This only requires at most $n$ integrals to be performed, which can be done in parallel when evaluated numerically.

Finally, note that (\ref{eq:comp_gen_campbells}) admits the equivalent form stated in the following corollary of Corollary \ref{cor:gen_campbells}.
\begin{cor} (Alternative Generalization of Campbell's Theorem)
\label{cor:alt_gen_campbells}
\\
For $s \in \C$ such that $\Re(s) > 0$, let $f(s)$ be defined as
\begin{align}
	f(s) = \int_{\R^d}\left(1 - \Laplace_H(s;x) \right) \Lambda(dx).
\end{align}	
Let $\mathbf{S}$ be a matrix such that the real parts of its eigenvalues are strictly positive. Then, when $I$ is finite a.s. the matrix function extension of $f$, $\boldsymbol{f}(\mathbf{S})$, exists and
\begin{align}
	\mtxLaplace_I(\mathbf{S}) = \exp(-\boldsymbol{f}(\mathbf{S})).
	\label{eq:alt_gen_campbells}
\end{align}
\end{cor}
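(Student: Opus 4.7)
The plan is to leverage Corollary \ref{cor:gen_campbells} together with Proposition \ref{prp:mtx_lt}(ii) and the composition property of matrix functions. First I would verify that $\boldsymbol{f}(\mathbf{S})$ exists by establishing that $f$ is defined on the spectra of $\mathbf{S}$, and then identify $\exp(-\boldsymbol{f}(\mathbf{S}))$ with the matrix Laplace transform delivered by Corollary \ref{cor:gen_campbells}.

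For the existence of $\boldsymbol{f}(\mathbf{S})$, since $I$ is finite a.s. and non-negative, \cite[Cor. 2]{Widder29} implies that $\Laplace_I$ is holomorphic on $\{s \in \C : \Re(s) > 0\}$. The flat-function/analyticity argument already used within the proof of Corollary \ref{cor:gen_campbells} establishes that $\Laplace_I(s) \ne 0$ throughout this same set. Consequently, $f(s) = -\log \Laplace_I(s)$ extends to a holomorphic function on the right half plane for any branch of $\log$ compatible with the integral definition on $\R_+$, and in particular all derivatives $f^{(j)}(\lambda_i)$ exist at every eigenvalue $\lambda_i$ of $\mathbf{S}$. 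By Definition \ref{def:mtx_function}, $\boldsymbol{f}(\mathbf{S})$ is therefore well-defined.

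To conclude, I would combine three facts: (i) by Proposition \ref{prp:mtx_lt}(ii), $\mtxLaplace_I(\mathbf{S})$ coincides with the matrix function extension of the scalar $\Laplace_I$ evaluated at $\mathbf{S}$; (ii) by Campbell's theorem (Corollary \ref{thm:campbells}), $\Laplace_I = \exp \circ (-f)$ as a scalar identity on the right half plane; (iii) matrix functions commute with composition when both factors are defined on the appropriate spectra, as in \cite[Thm. 1.18]{Higham08}, and since $\exp$ is entire no additional hypothesis beyond the existence of $\boldsymbol{f}(\mathbf{S})$ is needed. Assembling (i)--(iii) yields $\mtxLaplace_I(\mathbf{S}) = \exp(-\boldsymbol{f}(\mathbf{S}))$.

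The main obstacle I anticipate is rigorously pinning down the composition step and ensuring that the global branch of $\log$ used to define $f$ is consistent with the Jordan-block formulas from Definition \ref{def:mtx_function}. A fallback, more direct route is to show block-by-block that $\int_{\R^d}(\mathbf{I} - \boldsymbol{\Laplace}_H(\mathbf{J}_i;x))\Lambda(dx) = \boldsymbol{f}(\mathbf{J}_i)$; using the upper-triangular Toeplitz structure supplied by Proposition \ref{prp:mtx_lt}(ii), this reduces to the identity $f^{(j)}(\lambda_i) = -\int_{\R^d}\Laplace_H^{(j)}(\lambda_i;x)\Lambda(dx)$, which follows from differentiation under the integral. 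Since $\Re(\lambda_i) > 0$ gives geometric decay, a dominated-convergence bound in terms of moments of $H(x)$ (finiteness of which is guaranteed by Corollary \ref{cor:gen_campbells} since the matrix integral itself is finite) suffices. Either route yields (\ref{eq:alt_gen_campbells}).
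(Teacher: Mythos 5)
Your proposal is correct and takes essentially the same route as the paper: both identify $\mtxLaplace_I(\mathbf{S})$ with the matrix function of $\Laplace_I$ via Proposition \ref{prp:mtx_lt}, use Corollary \ref{thm:campbells} to write $\Laplace_I = \exp \circ (-f)$ on the right half plane, and conclude with the composition theorem \cite[Thm. 1.18]{Higham08}. The only divergence is in how one shows $f$ is defined on the spectra of $\mathbf{S}$ --- the paper back-solves Fa{\`a} di Bruno's formula from the known finiteness of the derivatives of $\Laplace_I$ together with finiteness of $f$, while you argue holomorphy of $-\log \Laplace_I$ from the nonvanishing of $\Laplace_I$ established inside the proof of Corollary \ref{cor:gen_campbells}; both mechanisms are sound.
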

\begin{proof}
For $\mathbf{S}$ as given in the statement of the corollary, we have that the eigenvalues of $\mathbf{S}$ are in the interior of the region of convergence of $\Laplace_I$.  Thus, using Proposition \ref{prp:mtx_lt} and Corollary \ref{thm:campbells}, we may determine $\mtxLaplace_I(\mathbf{S})$ in terms of the higher order derivatives of (\ref{eq:campbells_thm}) in Corollary \ref{thm:campbells}. Proposition \ref{prp:mtx_lt} further ensures that these derivatives are finite. Corollary \ref{cor:gen_campbells}, ensures that $f(s)$ is convergent for all $s \in \C$ such that $\Re(s) > 0$. Then appealing to Fa{\`a} di Bruno's formula in (\ref{eq:FDB}), finiteness of $f(s)$ implies that $f(s)$ must be defined on the spectra of $\mathbf{S}$. The expression for $\mtxLaplace_I(\mathbf{S})$ in (\ref{eq:alt_gen_campbells}) consequently follows from Proposition \ref{prp:mtx_lt}, Corollary \ref{thm:campbells}, and the composition theorem for matrix functions \cite[Thm. 1.18]{Higham08}.
\end{proof}

Consequently, in light of Definition \ref{def:mtx_function}, Corollary \ref{cor:alt_gen_campbells} implies that we  may equivalently characterize $\mtxLaplace_I(\mathbf{S})$ in terms of the integrals of the scalar Laplace transforms of $H$, $\left( \int_{\R^{d}}\left( 1 -  \Laplace_H(\lambda_i;x) \right) \Lambda(dx)\right)_{i = 1}^{k}$, and their higher order derivatives up to $(m_i)_{i = 1}^{k}$. This provides an alternate formulation for $\mtxLaplace_I(\mathbf{S})$ which alleviates the need to explicitly employ Fa{\`a} di Bruno's formula.


\section{Characterization of the Distribution and Moments of Poisson Shot Noise}
\label{sec:pois_sn}
We now consider several applications of the generalization of Campbell's theorem in Theorem \ref{thm:gen_campbells}, which demonstrate its utility to practical scenarios. In this section, we first show how the result may by used to tractably characterize linear combinations of samples of a general non-negative Poisson shot noise process. To that end, a characterization of the higher order moments is presented in Corollary \ref{cor:sn_moments} in Sec. \ref{subsec:shot_noise_char_moments}, and approximations and bounds for the CCDF of this object are presented in Corollary \ref{cor:sn_dist} in Sec. \ref{subsec:shot_noise_char_dist}. Approximation guarantees for our proposed approximation method, which is applicable to any non-negative random variable, are further established in Proposition \ref{prp:approx_bounds} in Sec. \ref{subsec:shot_noise_char_dist}.



\subsection{Phase-Type Distributions}
\label{subsec:phase_type}
The applications of Theorem \ref{thm:gen_campbells} considered in this section and Sec. \ref{sec:sinr_analysis}  make use of {\em phase-type distributions} in some manner, and we first summarize their key properties before proceeding. A phase-type random variable, $Z$, is characterized by two parameters: a matrix, $\mathbf{G} \in \R^{n \times n}$, called the sub-generator matrix, and a vector, $\mathbf{p} \in [0,1]^{n}$, called the sub-PMF. We concisely denote the distribution of $Z$ as $Z \sim {\rm PH}(\mathbf{G}, \mathbf{p})$. Phase-type distributions are a general class of non-negative random variables which correspond to the hitting time to the absorbing state of an $n+1$ state continuous time Markov chain (CTMC) with $n$ transient states and one absorbing state. Thus, $\mathbf{G}$ is the sub-matrix of the transition rate matrix corresponding to the transient states (termed phases), and $\mathbf{p}$ is the subset of the initial distribution of the CTMC corresponding to the transient phases.

Of particular use to us in the following subsections and Sec. \ref{sec:sinr_analysis}, is that the CCDF of a phase-type distribution may be expressed in terms of a matrix exponential \cite[Thm. 3.1.7]{Bladt17}. Namely for $Z \sim {\rm PH}(\mathbf{G}, \mathbf{p})$ we have
\begin{align}
	\Pb(Z \ge \tau) = \mathbf{p}^{\rm T} \exp\left( \mathbf{G} \tau \right) \mathbf{1}_{N \times 1}.
	\label{eq:phase_type_cdf}
\end{align}
Note that the negation of feasible sub-generator matrices, i.e. $-\mathbf{G}$, have eigenvalues with strictly positive real parts \cite[Cor. 3.1.15]{Bladt17}, and, thus, allow for the application of Corollaries \ref{cor:gen_campbells} and \ref{cor:alt_gen_campbells}. In light of this, in contrast to the typical convention, we shall represent phase-type distributions in terms of the negation of their sub-generator matrices, $\mathbf{S} = -\mathbf{G}$.

The class of phase-type distributions is quite broad and includes the exponential and Erlang distributions as special cases. In particular, an Erlang$(n, \lambda)$ distribution corresponds to a PH$(-\lambda \mathbf{Q}, \mathbf{e}_1)$ phase type model, where $\mathbf{Q} \in \R^{n \times n}$ is an upper triangular Toeplitz matrix with first row $\left(\mathbf{Q}\right)_{1,:} = [	1 ~-1 ~\mathbf{0}_{1 \times n - 2}]$. Thus, in wireless applications with Rayleigh and Nakagami fading with an integer shape parameter, the corresponding fading powers belong to the class of phase type distributions -- the exponential and Erlang distributions, respectively. Moreover, phase-type distributions are dense with respect to random variables with support on $\R_+$ \cite[3.2.10]{Bladt17}. Consequently, they may be used to represent any non-negative random variable to arbitrary precision.

\subsection{Formulation of Shot Noise Model}
\label{subsec:shot_noise_char_mod}
In both of the following sub-sections, we characterize the higher order moments and CCDF of the following shot noise model. Namely, consider the general Poisson shot noise process $I(y)$ as defined in (\ref{eq:gen_shot_noise}) and define
\begin{align}
	\tilde{I} = \sum_{i = 1}^{m} \alpha_i I(y_i) && \alpha_i \ge 0, y_i \in \R^{d}, y_i \ne y_j.
	\label{eq:int_model_v2}
\end{align}
The model we consider for $\tilde{I}$ may be taken to correspond to a space-time interference model as discussed in \cite{Haenggi09}, and allows one to capture spatial correlations of interference arising from a Poisson shot noise process.
It will be useful to observe that $\tilde{I}$ may be expressed as
\begin{align}
\tilde{I} &= \int_{\R^d} \sum_{i = 1}^{m} \alpha_i H(x; y_i) \Phi(dx) \nonumber
\\
&= \int_{\R^d} \tilde{H}(x) \Phi(dx),
\end{align}
where we have defined $\tilde{H}(x) = \sum_{i = 1}^{m} \alpha_i H(x;y_i)$. In light of this formulation, for $m > 1$ we additionally stipulate that $\Lambda$ is diffuse. This ensures that $\tilde{H}$ is independent $\Lambda$-a.s. Hence $\tilde{I}$ is itself an instance Poisson shot noise. Note that for $m =1$ we do not require $\Lambda$ to be diffuse, and $\tilde{I}$ is simply an arbitrary instance of general non-negative Poisson shot noise.

\subsection{Characterization of Higher Order Moments}
\label{subsec:shot_noise_char_moments}

We may directly obtain the higher order moments of $\tilde{I}$, in addition to a necessary and sufficient condition for the higher order moments of $\tilde{I}$ to be finite, using the following corollary of Theorem \ref{thm:gen_campbells}.

\begin{cor}	(Moments of Poisson Shot Noise)
\\
\label{cor:sn_moments}
Let $\tilde{I}$ be Poisson shot noise of the form stated in (\ref{eq:int_model_v2}), and $n \in \N$. Let $\mathbf{J}_0 \in \R^{n \times n}$ be a Jordan block matrix with eigenvalue equal to zero. Then, $\E[\tilde{I}^{n-1}]$ is finite iff
$\int_{\R^{d}}\E[\tilde{H}(x)^{n-1}] \Lambda(dx)$ is convergent, in which case
\begin{align}
	\E[\tilde{I}^{n-1}] = (-1)^{n-1}(n-1)!\left(\exp\left( -\int_{\R^{d}}\left( \mathbf{I} - \prod_{i = 1}^{m}\boldsymbol{\mathcal{L}}_H\left(\alpha_i\mathbf{J}_0 ; x - y_i \right)\right) \Lambda(dx) \right) \right)_{1,n}.
	\label{eq:shot_noise_moments}
\end{align}
\end{cor}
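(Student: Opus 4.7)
The plan is to treat $\tilde{I}$ as itself a Poisson shot noise process with random field $\tilde{H}(x) = \sum_{i=1}^{m} \alpha_i H(x; y_i)$, as already noted in Section \ref{subsec:shot_noise_char_mod}, and then apply part ii) of Theorem \ref{thm:gen_campbells} directly at the nilpotent Jordan block $\mathbf{J}_0$ of size $n \times n$. The corollary will follow by reading off a single entry of the resulting matrix exponential.

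The central identification is that, because $\mathbf{J}_0^n = \mathbf{0}$, the matrix exponential terminates as a finite polynomial $\exp(-\mathbf{J}_0 t) = \sum_{k=0}^{n-1} \frac{(-t)^k}{k!} \mathbf{J}_0^k$. A direct computation of powers of $\mathbf{J}_0$ shows that $(\mathbf{J}_0^k)_{1, j} = 1$ iff $j = k+1$ and $0$ otherwise, so $(\exp(-\mathbf{J}_0 t))_{1, n} = \frac{(-t)^{n-1}}{(n-1)!}$. Setting $t = \tilde{I}$ and taking expectations yields
\begin{align*}
\left(\mtxLaplace_{\tilde{I}}(\mathbf{J}_0)\right)_{1, n} = \frac{(-1)^{n-1}}{(n-1)!}\,\E[\tilde{I}^{n-1}].
\end{align*}
Equating this with the $(1, n)$ entry of the closed-form expression supplied by Theorem \ref{thm:gen_campbells} ii) and rearranging will produce the formula (\ref{eq:shot_noise_moments}).

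To recover the product form of $\mtxLaplace_{\tilde{H}}(\mathbf{J}_0; x)$ that appears in the integrand, I would note that the matrices $\alpha_i \mathbf{J}_0$ pairwise commute as scalar multiples of $\mathbf{J}_0$, so that $\exp(-\mathbf{J}_0 \tilde{H}(x))$ factors as $\prod_{i=1}^{m} \exp(-\alpha_i \mathbf{J}_0 H(x; y_i))$. Independence of the channel coefficients $H(x; y_i)$ across $i$ for fixed $x$ then allows the expectation to factor into $\prod_{i=1}^{m} \mtxLaplace_H(\alpha_i \mathbf{J}_0; x - y_i)$, matching the integrand in the statement.

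For the biconditional on finiteness of $\E[\tilde{I}^{n-1}]$, the $(\Leftarrow)$ direction is immediate from Theorem \ref{thm:gen_campbells} ii) together with the entry identification above. For the $(\Rightarrow)$ direction, since $\tilde{I} \ge 0$ we have $\tilde{I}^k \le 1 + \tilde{I}^{n-1}$ for all $0 \le k \le n-1$, so finiteness of the top moment forces finiteness of each entry $\left(\mtxLaplace_{\tilde{I}}(\mathbf{J}_0)\right)_{1, j} = \frac{(-1)^{j-1}}{(j-1)!}\E[\tilde{I}^{j-1}]$ along the first row of the upper triangular Toeplitz structure; hence $\mtxLaplace_{\tilde{I}}(\mathbf{J}_0)$ exists, and the contrapositive of Theorem \ref{thm:gen_campbells} ii) forces convergence of $\int_{\R^d} \E[\tilde{H}(x)^{n-1}] \Lambda(dx)$. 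I do not foresee a substantive analytical obstacle here; the argument is essentially bookkeeping between entries of a nilpotent Jordan block's matrix exponential and raw moments, with all the heavy lifting already done by Theorem \ref{thm:gen_campbells}.
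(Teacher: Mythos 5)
Your proposal is correct and follows essentially the same route as the paper: identify the $(1,n)$ entry of $\mtxLaplace_{\tilde{I}}(\mathbf{J}_0)$ with $\frac{(-1)^{n-1}}{(n-1)!}\E[\tilde{I}^{n-1}]$ (the paper cites Proposition \ref{prp:mtx_lt} for this, which amounts to the same nilpotent-exponential computation you carry out explicitly) and then equate it with the closed form from Theorem \ref{thm:gen_campbells} part ii). Your additional bookkeeping — the commutativity/independence factorization of $\mtxLaplace_{\tilde{H}}(\mathbf{J}_0;x)$ and the contrapositive argument for the finiteness biconditional — simply fills in details the paper leaves implicit.
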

\begin{proof}
This corollary exploits the fact that, for an arbitrary non-negative random variable, $X$, in light of Proposition \ref{prp:mtx_lt}
\begin{align}
	\mtxLaplace_X(\mathbf{J}_0)_{1, j} = \frac{(-1)^{j-1}}{(j-1)!} \E\left[  X^{j-1}\right] && j \in \{1, \dots, n\}.
	\label{eq:mtx_lt_moments}
\end{align}
The corollary then follows from (\ref{eq:mtx_lt_moments}) and Theorem \ref{thm:gen_campbells}.
\end{proof}

The fact that $\mathbf{J}_0$ is a Jordan block matrix simplifies the form of $\mtxLaplace_{\tilde{H}}(\mathbf{J}_0; x)$. In particular, the integral in (\ref{eq:shot_noise_moments}) is simply
\begin{align}
\left(\int_{\R^{d}}\left( \mathbf{I} - \prod_{i = 1}^{m}\boldsymbol{\mathcal{L}}_H\left(\alpha_i\mathbf{J}_0 ; x - y_i \right)\right) \Lambda(dx) \right)_{1,j} = 
\begin{cases}
\frac{(-1)^{j}}{(j-1)!}\int_{\R^{d}}\E\left[ \tilde{H}(x)^{j-1} \right] \Lambda(dx). & j >1
\\
0 & j = 1
\end{cases}
\end{align}
Thus, the $n^{th}$ moment of $\tilde{I}$ may be obtained as a function of integrals of the first $n$ moments of $\tilde{H}(x)$ with respect to the $\Lambda$. The resulting expression for $\E[\tilde{I}^{n-1}]$ is notably concise and relatively tractable compared to methods based on directly evaluating the higher order derivative of the scalar Laplace transform.

\subsection{Characterization of the CCDF}
\label{subsec:shot_noise_char_dist}
We may further exploit Theorem \ref{thm:gen_campbells} to characterize approximations of and bounds on the CCDF of $\tilde{I}$, $\bar{F}_{\tilde{I}}(\tau)$. Except for a few special cases, the CCDF of $\tilde{I}$ is often not available in closed form. Consequently, we consider characterizing the CCDF as
\begin{align}
	\bar{F}_{\tilde{I}}(\tau) = \lim_{N \to \infty} \Pb\left(\tilde{I} \ge \tau  Z_N \right),
	\label{eq:gen_aprx}
\end{align}
for an appropriately chosen sequence of non-negative random variables, $(Z_N)_{N \in \N}$, which converge to $1$ a.s. and are independent of $\tilde{I}$. This type of approximation has the desirable property that each term in the limit in (\ref{eq:gen_aprx}) corresponds to a feasible CCDF
\begin{align}
	P^{(
N)}_{\tilde{I}}(\tau) = \Pb\left(\tilde{I} \ge \tau  Z_N \right).
\end{align}
The rationale for this approach is that, if we can construct $(Z_N)_{N \in \N}$ such that $P^{(
N)}_{\tilde{I}}(\tau)$ is tractable, for large enough $N$, $P^{(
N)}_{\tilde{I}}(\tau)$ will be a good approximation of $\bar{F}_{\tilde{I}}(\tau)$ which is easier to obtain.

Given that their CDFs may be expressed in terms of a matrix exponential and their generality, we consider candidates for $(Z_N)_{N \in \N}$,  from the family of phase-type distributions. In particular, we construct $(Z_N)_{N \in \N}$ such that
\begin{enumerate}
	\item For each $N \in \N$, $Z_N$ is an $N$-phase PH$(-\mathbf{S}_N, \mathbf{p}_N)$ random variable.
	\item $\E[Z_N]$ = 1.
	\item $\lim_{N \to \infty} Z_N = 1$ a.s.
\end{enumerate} 
Under this general construction, we have that $P^{(N)}_{\tilde{I}}(\tau)$ may be characterized in terms of the matrix Laplace transform of $\tilde{I}$:
\begin{align}
	{P}_{\tilde{I}}^{(N)}(\tau) &= \E\left[\Pb\left(\tilde{I} \ge \tau  Z_N \lvert \tilde{I}\right) \right] \nonumber
	\\
	&= 1 - \mathbf{p}_N^{\rm T} \E\left[ \exp\left( - \tau^{-1} \mathbf{S}_N  \tilde{I}\right)\right] \mathbf{1}_{N \times 1} \nonumber
	\\
	&= 1 - \mathbf{p}_N^{\rm T} \boldsymbol{\mathcal{L}}_{\tilde{I}} \left( \tau^{-1} \mathbf{S}_N\right)\mathbf{1}_{N \times 1}.
	\label{eq:approx_ccdf_def}
\end{align}

To make full use of this approach to characterize $\bar{F}_{\tilde{I}}(\tau)$, it remains to determine the parameters for the phase-type distributions for each term of $(Z_N)_{N \in \N}$. To that end, we establish the following proposition, Proposition \ref{prp:approx_bounds}, which indicates that taking  $Z_N \sim {\rm Erlang}(N,N)$ for all $N \in \N$ is optimal in terms of minimizing an upper bound on the approximation error of ${P}_{\tilde{I}}^{(N)}(\tau)$ with respect to $\bar{F}_{\tilde{I}}(\tau)$. Indeed, more broadly, Proposition \ref{prp:approx_bounds} establishes that this choice of $(Z_N)_{N \in \N}$ is optimal for minimizing an upper bound on the approximation error associated with the application of the method detailed above to the characterization of the CCDF of any arbitrary non-negative random variable. The proposition additionally establishes a method of obtaining upper and lower bounds on the true CCDF in terms of the approximate CCDFs.

\begin{prp}(Approximation Guarantees for ${P}_I^{(N)}(\tau)$)
\label{prp:approx_bounds}
	\\
	Let $X$ be an arbitrary non-negative random variable with CCDF $\bar{F}_{X}(\tau)$, and $(\tilde{Z}_N)_{N \in \N}$ be any sequence of phase-type random variables independent of $X$ such that $\tilde{Z}_N$ has $N$ phases, $\E[\tilde{Z}_N] = 1$, and $(\tilde{Z}_N)_{N \in \N}$ converges to 1 a.s. Moreover, let $\tilde{P}_X^{(N)}(\tau)= \Pb\left({X} \ge \tau  \tilde{Z}_N \right)  $. Then, the following hold
	\begin{enumerate}[label=\roman*)]
		\item $\tilde{P}_X^{(N)}(\tau)$ converges to $\bar{F}_{X}(\tau)$ as $N \rightarrow \infty$.
		\item Define $\tilde{e}^{(N)}(\tau)$ as
		\begin{align}
			\tilde{e}^{(N)}(\tau) = \max\left(\E\left[ \left(\inf_{\theta \ge 0} \mathcal{L}_{\tilde{Z}_N}(-\theta)e^{-\theta X \tau^{-1}}\right) \indicator\{X \ge \tau\}\right], \E\left[ \left(\inf_{\theta \ge 0} \mathcal{L}_{\tilde{Z}_N}(\theta)e^{\theta X \tau^{-1}}\right) \indicator\{X < \tau\}\right]\right).
			\label{eq:error_def}
		\end{align}
		Then
		\begin{align}
		\left \lvert \bar{F}_{X}(\tau) - \tilde{P}_X^{(N)}(\tau)\right \rvert \le \tilde{e}^{(N)}(\tau).
		\label{eq:aprx_err_pii}
		\end{align}
		\item Let $(Z_N)_{N \in \N}$ be a sequence of Erlang$(N,N)$ random variables. Note that $(Z_N)_{N \in \N}$ is a feasible sequence of phase-type random variables with respect to the conditions on $(\tilde{Z}_N)_{N \in \N}$. Let ${P}_X^{(N)}(\tau)= \Pb\left({X} \ge \tau  {Z}_N\right) $ and let ${e}^{(N)}(\tau)$ be as defined in (\ref{eq:error_def}) using to $Z_N$.
		Then
		\begin{align}
			{e}^{(N)}(\tau) \le \tilde{e}^{(N)}(\tau).
			\label{eq:erlang_aprx_err}
		\end{align}
		Moreover, fix $\epsilon >0$ and let $\delta(\epsilon) >0$ satisfy $\delta(\epsilon) - \log(1 + \delta(\epsilon)) = \log\left( \epsilon^{-\frac{1}{N}}\right)$. Then,
		\begin{align}
			{P}_X^{(N)}\left( \frac{\tau}{1-\delta(\epsilon)} \right) - \epsilon \le \bar{F}_X(\tau) \le (1 - \epsilon)^{-1} {P}_X^{(N)}\left( \frac{\tau}{1+\delta(\epsilon)} \right).
			\label{eq:aprx_bounds}
		\end{align}
	\end{enumerate}
\end{prp}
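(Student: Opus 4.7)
I would prove the four claims in order, handling the two arguably routine parts (i) and (ii) first, then the bound-side claims in (iii) (which I find easier to handle before the optimality claim).

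For (i), using independence I would write $\tilde P^{(N)}_X(\tau) = \E[F_{\tilde Z_N}(X/\tau)]$, where $F_{\tilde Z_N}$ is the CDF of $\tilde Z_N$. Since $\tilde Z_N \to 1$ a.s.\ and each $F_{\tilde Z_N}$ is uniformly bounded by $1$, the Portmanteau theorem gives pointwise convergence $F_{\tilde Z_N}(u) \to \indicator\{u \ge 1\}$ at all continuity points $u \ne 1$, and dominated convergence then yields $\tilde P^{(N)}_X(\tau) \to \bar F_X(\tau)$ (atoms of $X$ at $\tau$ require a brief separate argument on $\{X = \tau\}$). For (ii), I would decompose
\begin{align*}
\bar F_X(\tau) - \tilde P_X^{(N)}(\tau) = \E\!\left[\indicator\{X \ge \tau,\, \tilde Z_N > X/\tau\}\right] - \E\!\left[\indicator\{X < \tau,\, \tilde Z_N \le X/\tau\}\right],
\end{align*}
condition on $X$, and apply the two-sided Chernoff inequalities $\Pb(\tilde Z_N > t \mid X) \le \inf_{\theta \ge 0}\mathcal L_{\tilde Z_N}(-\theta)e^{-\theta t}$ and $\Pb(\tilde Z_N \le t \mid X) \le \inf_{\theta \ge 0}\mathcal L_{\tilde Z_N}(\theta)e^{\theta t}$ to the two terms respectively; the absolute value of the difference is then dominated by the larger of the two resulting expectations, which is exactly $\tilde e^{(N)}(\tau)$.

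For the bounds in (iii), I would argue directly from independence without invoking (ii). For the upper bound, the set containment $\{X \ge \tau, Z_N \le 1+\delta\} \subseteq \{X \ge \tau Z_N/(1+\delta)\}$ combined with independence gives $P_X^{(N)}(\tau/(1+\delta)) \ge \bar F_X(\tau)\,\Pb(Z_N \le 1+\delta)$, and rearranging yields the claim as soon as $\Pb(Z_N > 1+\delta) \le \epsilon$. Computing the optimized Chernoff bound for Erlang$(N,N)$ gives $\Pb(Z_N > 1+\delta) \le (1+\delta)^N e^{-N\delta} = e^{-N[\delta - \log(1+\delta)]}$, which matches the hypothesis on $\delta$ exactly. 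For the lower bound, $\{X \ge \tau Z_N/(1-\delta)\} \subseteq \{X \ge \tau\} \cup \{Z_N < 1-\delta\}$ gives $P_X^{(N)}(\tau/(1-\delta)) \le \bar F_X(\tau) + \Pb(Z_N < 1-\delta)$. The Erlang Chernoff bound this time reads $\Pb(Z_N < 1-\delta) \le e^{-N[-\delta - \log(1-\delta)]}$, and since the Taylor expansions of $-\delta - \log(1-\delta)$ and $\delta - \log(1+\delta)$ around zero show the former dominates the latter for all $\delta \in (0,1)$, the assumed condition on $\delta$ is also sufficient here.

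The remaining claim ${e}^{(N)}(\tau) \le \tilde e^{(N)}(\tau)$ in (iii) is, in my view, the main obstacle. By Fubini it reduces to the pointwise Chernoff-bound inequalities
\begin{align*}
\inf_{\theta \ge 0} \mathcal L_{Z_N}(-\theta) e^{-\theta t} \le \inf_{\theta \ge 0} \mathcal L_{\tilde Z_N}(-\theta) e^{-\theta t}, \quad t \ge 1,
\end{align*}
and its mirror version for $t \le 1$ with $\theta \mapsto -\theta$. A sufficient condition is the MGF comparison $\mathcal L_{Z_N}(-\theta) \le \mathcal L_{\tilde Z_N}(-\theta)$ for all $\theta \in [0, N)$ (and the analogous MGF inequality in the opposite regime), i.e.\ Erlang$(N,N)$ is MGF-minimal among all $N$-phase phase-type laws with unit mean. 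I would try to establish this via a convex-order argument exploiting the CTMC-absorption-time representation from Sec.~\ref{subsec:phase_type}: intuitively, ``straightening'' an arbitrary $N$-phase chain into one with a single sequential path and uniform exit rate $N$ produces Erlang$(N,N)$, and an exchange argument on the rate matrix -- combined with Jensen applied to the convex function $\theta \mapsto e^{\theta z}$ -- should show this straightening can only decrease the MGF. An alternative route is to use majorization on the eigenvalues of the sub-generator $\mathbf{S}$ together with the Schur-convexity of the MGF expressed through the matrix exponential of $-\mathbf{S}$; this would tie the argument directly to the matrix machinery developed in Sec.~\ref{sec:analysis}.
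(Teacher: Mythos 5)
Your handling of parts i), ii), and the two-sided bounds in (\ref{eq:aprx_bounds}) matches the paper's proof essentially step for step: part ii) uses the same decomposition of the error into the two one-sided events $\{X \ge \tau,\ \tilde Z_N > X/\tau\}$ and $\{X < \tau,\ \tilde Z_N \le X/\tau\}$ followed by conditioning on $X$ and a Chernoff bound, and the bounds in (\ref{eq:aprx_bounds}) come from the same Erlang tail estimates combined with monotonicity of $\bar F_X$. Your observation that the lower-tail exponent $-\delta-\log(1-\delta)$ dominates $\delta-\log(1+\delta)$, so that the single condition defining $\delta(\epsilon)$ covers both tails, is correct and is in fact glossed over in the paper, which writes the same exponent for both tails. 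Your extra care about a possible atom of $X$ at $\tau$ in part i) is also warranted: the paper's dominated-convergence argument silently asserts $\indicator\{X \ge \tau\tilde Z_N\} \to \indicator\{X\ge\tau\}$ a.s., which is not automatic on the event $\{X=\tau\}$.

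The genuine gap is the inequality ${e}^{(N)}(\tau) \le \tilde e^{(N)}(\tau)$. You correctly reduce it, via convexity of $z \mapsto e^{\pm\theta z}$ and a pointwise comparison of the optimized Chernoff bounds, to the claim that ${\rm Erlang}(N,N)$ is minimal in the convex order among all unit-mean $N$-phase phase-type laws — but you do not prove that claim; you offer only a ``straightening/exchange'' heuristic on the CTMC and a speculative eigenvalue-majorization route. This extremality statement is a substantial theorem in its own right (it strengthens the Aldous--Shepp result that Erlang$(N,N)$ has minimal variance among $N$-phase distributions of given mean), and the paper does not prove it either: it invokes \cite[Thm.~3]{OCinneide91}, which states precisely that any such $\tilde Z_N$ majorizes $Z_N$, and then concludes exactly as you propose. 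Your exchange argument as sketched would need significant work to be made rigorous, and the Schur-convexity route is unlikely to close, since the Laplace transform of a phase-type law is not a function of the subgenerator's eigenvalues alone (it depends on the initial sub-PMF and the Jordan structure). Unless you intend to reprove O'Cinneide's theorem, this step should be closed by citation; everything else in your outline is sound.
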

\begin{proof}
	See appendix \ref{app:approx_bounds_proof}.
\end{proof}
Proposition \ref{prp:approx_bounds} implies that for moderately sized $N$, ${P}_{\tilde{I}}^{(N)}(\tau)$ is a reasonable approximation of $\bar{F}_{\tilde{I}}(\tau)$. For instance at $N = 100$ and $\epsilon = 10^{-4}$, we have $\delta(\epsilon) = 0.4927$. This corresponds to shifting ${P}_{\tilde{I}}^{(N)}(\tau)$ by roughly $2$ dB for the upper bound, and $3$ dB for the lower bound. Moreover, in light of part iii) it provides a general means for obtaining upper bound lower bounds on $\bar{F}_{\tilde{I}}(\tau)$ in terms of $P^{(N)}_{\tilde{I}}(\tau)$.

Proposition \ref{prp:approx_bounds} leads to the following approximation of $\bar{F}_{\tilde{I}}(\tau)$, which we state as the following corollary of Proposition \ref{prp:approx_bounds} and Theorem \ref{thm:gen_campbells}.

\begin{cor} (Approximations and Bounds for Shot Noise CCDFs)
\\
\label{cor:sn_dist}
Let ${\tilde{I}}$ be Poisson shot noise of the form stated in (\ref{eq:int_model_v2}), and assume that $\tilde{I}$ is finite a.s. The following hold.
\begin{enumerate}[label=\roman*)]
	\item Let $N \in \N$ denote the order of the Erlang$(N,N)$ model used in ${P}_{\tilde{I}}^{(N)}(\tau)$ and $\mathbf{Q}^{(N)} \in \R^{N \times N}$ be the negation of the sub-generator matrix corresponding to the Erlang$(N,1)$ distribution. Define the matrix
	\begin{align}
		\mathbf{S}_{\tilde{I}}^{(N)}(\tau) = \int_{\R^{d}}\left( \mathbf{I} - \prod_{i = 1}^{m} \mtxLaplace_{{H}}\left(\alpha_i \frac{N}{\tau}\mathbf{Q}^{(N)};x - y_i\right)\right) \Lambda(dx). 
	\end{align}
	The CCDF of $\tilde{I}$, $\bar{F}_{\tilde{I}}(\tau)$, is approximated as ${P}_{\tilde{I}}^{(N)}(\tau)$, where
\begin{align}
	{P}_{\tilde{I}}^{(N)}(\tau) = 1 - \mathbf{e}_1^{\rm T} \exp\left( -\mathbf{S}_{\tilde{I}}^{(N)}(\tau)\right)\mathbf{1}_{N \times 1},
	\label{eq:int_dist_aprx}
\end{align}
and
\begin{align}
	\bar{F}_{\tilde{I}}(\tau) = \lim_{N \to \infty} {P}_{\tilde{I}}^{(N)}(\tau)
\end{align}
\item Moreover, the $\bar{F}_{\tilde{I}}(\tau)$ admits upper and lower bounds of the form specified in (\ref{eq:aprx_bounds}) with respect to ${P}_{\tilde{I}}^{(N)}(\tau)$.
\end{enumerate}
\end{cor}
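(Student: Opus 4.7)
The plan is to combine Proposition \ref{prp:approx_bounds}, applied with $X = \tilde{I}$ and $(Z_N)_{N \in \N}$ the Erlang$(N,N)$ sequence, with Corollary \ref{cor:gen_campbells} to obtain a closed-form matrix exponential expression for ${P}_{\tilde{I}}^{(N)}(\tau)$. Once this form is in hand, part i) of the corollary (convergence) follows from part i) of Proposition \ref{prp:approx_bounds}, and part ii) (the upper/lower bounds of the form in (\ref{eq:aprx_bounds})) follows immediately from part iii) of Proposition \ref{prp:approx_bounds}. So the only real work is to establish the explicit matrix-exponential expression (\ref{eq:int_dist_aprx}).

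To that end, I will start from the general representation (\ref{eq:approx_ccdf_def}) and substitute the phase-type parameters of Erlang$(N,N)$ described in Sec. \ref{subsec:phase_type}, namely $\mathbf{S}_N = N \mathbf{Q}^{(N)}$ and $\mathbf{p}_N = \mathbf{e}_1$. This reduces the task to computing $\mtxLaplace_{\tilde{I}}(\tfrac{N}{\tau}\mathbf{Q}^{(N)})$. Since $\mathbf{Q}^{(N)}$ is upper triangular Toeplitz with $1$'s on the diagonal, $\tfrac{N}{\tau}\mathbf{Q}^{(N)}$ has all eigenvalues equal to $N/\tau > 0$. This places it squarely in the regime covered by Corollary \ref{cor:gen_campbells}, provided the integrability hypothesis on $1 - \Laplace_{\tilde{H}}(s;x)$ holds for every $s \in \R_+$. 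That hypothesis is equivalent to $\tilde{I}$ being finite a.s.\ (by the discussion immediately following the statement of Corollary \ref{cor:gen_campbells}), which is assumed here. Applying Corollary \ref{cor:gen_campbells} to the Poisson shot noise representation $\tilde{I} = \int_{\R^d}\tilde{H}(x)\Phi(dx)$ from Sec. \ref{subsec:shot_noise_char_mod} then expresses $\mtxLaplace_{\tilde{I}}(\tfrac{N}{\tau}\mathbf{Q}^{(N)})$ as the matrix exponential of an integral involving $\mathbf{I} - \mtxLaplace_{\tilde{H}}(\tfrac{N}{\tau}\mathbf{Q}^{(N)}; x)$, which already matches the outer structure of (\ref{eq:int_dist_aprx}).

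The remaining step is to factor $\mtxLaplace_{\tilde{H}}(\mathbf{S}; x)$ into the product $\prod_{i=1}^m \mtxLaplace_{H}(\alpha_i \mathbf{S}; x-y_i)$ appearing inside $\mathbf{S}_{\tilde{I}}^{(N)}(\tau)$. Because $\tilde{H}(x) = \sum_{i=1}^m \alpha_i H(x;y_i)$, and because any two scalar multiples of a common matrix commute, one has the exponential-of-a-sum identity $\exp(-\mathbf{S}\tilde{H}(x)) = \prod_{i=1}^m \exp(-\alpha_i \mathbf{S}\, H(x;y_i))$; under the standard independent-fading interpretation of the random field, in which $\{H(x;y_i)\}_{i=1}^m$ are independent across $i$, the expectation distributes across the product and yields the desired factorization. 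Substituting this factorization back into the matrix exponential and pre- and post-multiplying by $\mathbf{e}_1^{\rm T}$ and $\mathbf{1}_{N \times 1}$ gives (\ref{eq:int_dist_aprx}), completing the derivation.

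I expect the one step that requires genuine care, rather than routine substitution, is the verification that the hypotheses of Corollary \ref{cor:gen_campbells} are indeed met for $\tfrac{N}{\tau}\mathbf{Q}^{(N)}$ together with the random field $\tilde{H}$ --- in particular, cleanly linking the a.s.\ finiteness assumption on $\tilde{I}$ to the convergence of $\int_{\R^d}(1 - \Laplace_{\tilde{H}}(s;x))\Lambda(dx)$ for every $s \in \R_+$. The eigenvalue check is immediate, and the factorization argument relies only on the commutativity of scalar multiples of a single matrix together with the independence structure already folded into the model in Sec. \ref{subsec:shot_noise_char_mod}; beyond these, the proof is essentially bookkeeping.
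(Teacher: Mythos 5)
Your proposal is correct and follows essentially the same route as the paper, which proves this corollary in one line by invoking Proposition \ref{prp:approx_bounds} and Corollary \ref{cor:gen_campbells} together with the facts that $\mathbf{Q}^{(N)}$ has positive real eigenvalues and $\tilde{I}$ is finite a.s. The additional details you supply --- substituting the Erlang$(N,N)$ phase-type parameters into (\ref{eq:approx_ccdf_def}), linking a.s.\ finiteness to the convergence hypothesis via Lemma \ref{lem:finiteness}, and factoring $\mtxLaplace_{\tilde{H}}$ into the product over $i$ using commutativity and independence across the $y_i$ --- are exactly the bookkeeping the paper leaves implicit.
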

\begin{proof}
This corollary is immediate in light of Proposition \ref{prp:approx_bounds}  and Corollary \ref{cor:gen_campbells} as $\mathbf{Q}^{(N)}$ has positive real eigenvalues and $\tilde{I}$ is finite a.s.
\end{proof}

We may exploit the fact that $\mathbf{Q}^{(N)}$ is related to it Jordan normal form via a diagonal similarity transform (see the proof of Lemma \ref{lem:mtx_bounds} in Appendix \ref{app:int_lemmas}) to obtain
\begin{align}
	\mtxLaplace_{\tilde{H}}\left(\frac{N}{\tau}\mathbf{Q}^{(N)}; x\right)_{1,j} = \left(\frac{N}{\tau}\right)^{j-1} \frac{(-1)^{j-1}}{(j-1)!}\Laplace_{\tilde{H}}^{(j-1)}\left(\frac{N}{\tau}; x\right).
\end{align}
Hence, as mentioned in the end of Sec. \ref{sec:analysis}, in the evaluation of $\mathbf{S}_{\tilde{I}}^{(N)}(\tau)$, we need only consider the integral
\begin{align}
	\int_{\R^{d}} \left(\mathbf{e}_1 - \begin{bmatrix}
		\Laplace_{\tilde{H}}\left(\frac{N}{\tau};x\right) &\left(\frac{N}{\tau}\right)^{1}\frac{(-1)}{1!}\Laplace_{\tilde{H}}^{(1)}\left(\frac{N}{\tau};x\right) &\dots &\left(\frac{N}{\tau}\right)^{N-1}\frac{(-1)^{N-1}}{(N-1)!}\Laplace_{\tilde{H}}^{(N-1)}\left(\frac{N}{\tau};x\right)
	\end{bmatrix}\right) \Lambda(dx).
	\label{eq:cdf_aprx_int}
\end{align}
Equivalently, using Corollary \ref{cor:alt_gen_campbells} and Definition \ref{def:mtx_function} we may obtain $\mathbf{S}_{\tilde{I}}^{(N)}(\tau)$ in terms of the first $N$ derivatives of
\begin{align}
\int_{\R^{d}} \left(1 - 
		\Laplace_{\tilde{H}}\left(\frac{N}{\tau}s;x\right) \right) \Lambda(dx)
		\label{ee:scalr_cdf_int}
\end{align}
at $s = 1$.
Note that, interestingly, (\ref{eq:cdf_aprx_int}) implies that $\mathbf{S}_{\tilde{I}}^{(N)}(\tau)$ is a sub-generator matrix. Consequently, ${P}_{\tilde{I}}^{(N)}(\tau)$ corresponds to the probability that a PH$(-\mathbf{S}_{\tilde{I}}^{(N)}(\tau), \mathbf{e}_1)$ random variable is less than 1.

\section{SINR Analysis in Downlink Poisson Cellular Networks}
\label{sec:sinr_analysis}

In addition to providing tractable characterizations of Poisson shot noise, the generalization of Campbell's theorem in Theorem \ref{thm:gen_campbells} allows for improved tractability in the characterization of the CCDF and meta-distribution of SINR models associated with from Poisson networks. To that end, we consider the analysis of an exemplary SINR model pertinent to a downlink Poisson cellular network. In particular, when the desired signal power follows a phase-type distribution, the distribution of the SINR may be obtained in a manner similar to the classical form encountered in the case where the desired signal power is exponentially distributed, see e.g. \cite[Thm. 1]{Andrews11}, using the matrix Laplace transform of the interference. This result is stated in Corollary \ref{cor:sinr_cov_ph} in Sec. \ref{subsec:phase_type_cov}. 
Moreover, when the desired signal power follows an exponential distribution, the meta-distribution of the SINR may be expressed in terms of the distribution of a certain shot noise process which itself may be characterized using Corollary \ref{cor:sn_dist} and Proposition \ref{prp:approx_bounds}. This is summarized in Proposition \ref{prp:sinr_meta_dist} in Sec. \ref{subsec:sinr_meta}.

\subsection{Summary of SINR Model}

In both subsections, we consider the following SINR model which corresponds to a general downlink Poisson cellular network. Note, however, that the results obtained in this section may be replicated for other types of network models. Let $\Phi$ be a PPP on $\R^d$ with intensity measure $\Lambda$ denoting the locations of base stations (BSs), and consider a user receiver located at $y \in \R^d$. The user connects with the nearest BS, $X_y \in \Phi$, defined as
\begin{align}
	X_y = \inf\{ \lvert \lvert X - y \rvert \rvert : X \in \Phi \}.
\end{align} 
All other BSs in the network are interferers from the perspective of the user at $y$. Denote these points of $\Phi$ as $\Phi^{! X_y} = \Phi - \delta_{X_y}$. Each transmitted signal undergoes distance dependent path loss, $\ell : \R_+ \rightarrow \R_+$. 
The power of the signal transmitted by the serving BS (without accounting for path loss) is denoted as $P_y$, and is in general a random variable independent of $\Phi$. The signal powers of the interfering BSs (again prior to the application of path loss) arise from the independent random field $H: \R^d \times \R^d \rightarrow \R_+$ that is also independent of $P_y$ and $\Phi$. $\ell(r)$ is assumed only to be monotonically decreasing in $r$ and, similar to the path loss models considered in \cite{AlAmmouri19}, that
\begin{align}
	\int_{\R^d} \E\left[H(x;y) \right] \ell\left( \lvert \lvert x - y \rvert \rvert \right) \Lambda(dx) < \infty.
	\label{eq:pl_cond}
\end{align}
Consequently, the interference observed by the user at $y$ is 
\begin{align}
I(y) = \int_{\R^d \setminus B(y, \epsilon)} H(x;y) \ell \left( \lvert \lvert x - y \rvert \rvert \right) \Phi^{! X_y}(dx) && \epsilon > 0.
\end{align}
Note that (\ref{eq:pl_cond}) ensures that the $I(y)$ has a finite mean conditioned on $X_y$.
Putting these terms together and letting $L_y = \ell\left(\lvert \lvert X_y - y \rvert \rvert\right)$, the SINR observed by the user at $y$ admits the expression
\begin{align}
	\sinr(y) = \frac{P_y L_y}{I(y) + W},
	\label{eq:sinr_mod}
\end{align}
where $W$ is a constant denoting the noise power.

The performance of the cellular network as captured by this SINR model may be quantified with respect to two commonly studied metrics. The first is the coverage probability with respect to the typical user located at $y$, defined as
\begin{align}
P_{\rm cov}(\tau) = \Pb^y\left(\sinr(y) \ge \tau \right),
\end{align}
where $\Pb^y$ denotes the Palm distribution of the point process of users. As is typical, the user point process is assumed to be independent of the other random features of the network. Hence, the Palm distribution reduces to the nominal probability measure. The second metric of interest is the meta-distribution of the SINR with respect to the typical user at $y$. As first considered in \cite{Haenggi16}, the meta-distribution is defined with respect to the conditional success probability
\begin{align}
P_{\rm S}(\tau) = \Pb\left(\sinr(y) \ge \tau \lvert \Phi\right).
\end{align}
Then the meta-distribution is defined as
\begin{align}
	\bar{F}_{P_{\rm S}}(\zeta) = \Pb^y\left(P_{\rm S}(\tau) \ge \zeta \right) &&\zeta \in [0,1].
\end{align}
Under certain conditions pertaining to the distribution of $P_{y}$, the generalization of Campbell's theorem in Theorem \ref{thm:gen_campbells} provides improved tractability in the analysis of the coverage probability and meta-distribution.

\subsection{Coverage Probability Analysis}
\label{subsec:phase_type_cov}
We first consider the characterization of coverage probability when $
	P_y \sim {\rm PH}(-\mathbf{S}, \mathbf{p})$, for general $\mathbf{S} \in \R^{n \times n}$.
Using the properties of phase-type distributions and Corollary \ref{cor:gen_campbells}, we may express the coverage probability with respect to $\sinr$ as an expectation of the conditional matrix Laplace transform of $I$. We summarize this in the following corollary.
\begin{cor} (Coverage Probability with Phase-Type Distributed Desired Signal Power)
\\
\label{cor:sinr_cov_ph}
Let $\sinr$ be of the form specified in (\ref{eq:sinr_mod}) and let the matrix Laplace transform of $H(x;y)$ be denoted as $\mtxLaplace_H(\cdot; x,y)$. Then, when $P_y$ follows a PH$(-\mathbf{S}, \mathbf{p})$ distribution the coverage probability corresponding to this SINR model may be expressed as
\begin{align}
	&P_{\rm cov}(\tau) = \nonumber 
	\\
	&\mathbf{p}^{\rm T} \E_{L_y}\left[  \exp\left( -\int_{\R^d \setminus B(y, \ell^{-1}(L_y))}\left( \mathbf{I} - \boldsymbol{\mathcal{L}}_H\left(\frac{\tau \ell(\lvert \lvert x - y \rvert \rvert)}{L_y}\mathbf{S}; x,y\right) \right) \Lambda(dx) \right) \exp \left( -\frac{\tau W}{L_y}\mathbf{S} \right)\right] \mathbf{1}_{n \times 1}.
	\label{eq:gen_coverage}
\end{align}
\end{cor}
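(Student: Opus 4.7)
The plan is to reduce the coverage probability to an expectation of a conditional matrix Laplace transform of the interference, and then evaluate that inner expectation via Corollary \ref{cor:gen_campbells}. First, I would rewrite
\begin{align*}
	P_{\rm cov}(\tau) = \Pb\left( P_y L_y \ge \tau (I(y) + W) \right) = \Pb\left( P_y \ge \frac{\tau(I(y) + W)}{L_y} \right),
\end{align*}
and condition on $L_y$ and $I(y)$. Because $P_y \sim {\rm PH}(-\mathbf{S}, \mathbf{p})$ is independent of $\Phi$ and $H$, the phase-type CCDF formula \eqref{eq:phase_type_cdf} (after the sign convention $\mathbf{S} = -\mathbf{G}$ introduced in Sec. \ref{subsec:phase_type}) gives
\begin{align*}
	\Pb\left( P_y \ge \tfrac{\tau(I(y)+W)}{L_y} \,\big|\, L_y, I(y)\right) = \mathbf{p}^{\rm T} \exp\left( -\tfrac{\tau}{L_y}\mathbf{S}\, I(y) \right)\exp\left(-\tfrac{\tau W}{L_y}\mathbf{S}\right) \mathbf{1}_{n \times 1},
\end{align*}
using that the two matrix exponentials commute as polynomials in $\mathbf{S}$. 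Pulling out the deterministic factors and the term depending only on $L_y$, coverage probability becomes
\begin{align*}
	P_{\rm cov}(\tau) = \mathbf{p}^{\rm T} \E_{L_y}\left[ \E\!\left[ \exp\left(-\tfrac{\tau}{L_y}\mathbf{S}\, I(y)\right) \,\big|\, L_y \right] \exp\left(-\tfrac{\tau W}{L_y}\mathbf{S}\right) \right] \mathbf{1}_{n \times 1}.
\end{align*}

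Next I would identify the inner conditional expectation as a matrix Laplace transform of Poisson shot noise. Conditioned on $L_y$, the event $\{X_y$ is the nearest point$\}$ and the void property of the PPP $\Phi$ imply (via Slivnyak's theorem) that $\Phi^{!X_y}$ restricted to $\R^d \setminus B(y, \ell^{-1}(L_y))$ is itself a PPP with intensity $\Lambda$ restricted to that same set. Hence
\begin{align*}
	I(y) \,|\, L_y \stackrel{d}{=} \int_{\R^d \setminus B(y, \ell^{-1}(L_y))} H(x;y)\,\ell(\lVert x - y \rVert)\,\Phi^{!X_y}(dx),
\end{align*}
which is Poisson shot noise with random field $\tilde H(x) = H(x;y)\,\ell(\lVert x - y \rVert)$. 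By definition of the matrix Laplace transform, $\mtxLaplace_{\tilde H(x)}(\tfrac{\tau}{L_y}\mathbf{S}) = \mtxLaplace_{H}\!\left(\tfrac{\tau \ell(\lVert x - y \rVert)}{L_y}\mathbf{S};\, x, y\right)$.

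The final step is to apply Corollary \ref{cor:gen_campbells}. The eigenvalues of $\tfrac{\tau}{L_y}\mathbf{S}$ have strictly positive real parts since $-\mathbf{S}$ is a valid sub-generator matrix (by \cite[Cor. 3.1.15]{Bladt17}) and $\tau, L_y > 0$; the convergence hypothesis $\int_{\R^d}(1 - \Laplace_H(s;x))\Lambda(dx) < \infty$ for $s \in \R_+$ follows from the path-loss integrability condition \eqref{eq:pl_cond} together with the elementary bound $1 - e^{-u} \le u$ for $u \ge 0$. Thus Corollary \ref{cor:gen_campbells} yields
\begin{align*}
	\E\!\left[ \exp\left(-\tfrac{\tau}{L_y}\mathbf{S}\, I(y)\right) \,\big|\, L_y \right] = \exp\left(-\int_{\R^d \setminus B(y,\ell^{-1}(L_y))}\left(\mathbf{I} - \mtxLaplace_H\!\left(\tfrac{\tau \ell(\lVert x - y\rVert)}{L_y}\mathbf{S}; x, y\right)\right)\Lambda(dx)\right),
\end{align*}
and substituting into the expression above gives \eqref{eq:gen_coverage}.

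The main obstacle I anticipate is the rigorous handling of the conditioning on $L_y$: one must justify invoking Slivnyak's theorem so as to treat $\Phi^{!X_y}$ as a PPP with the void ball $B(y,\ell^{-1}(L_y))$ excised. Once this conditional PPP description is established, applying the matrix Laplace transform machinery is essentially mechanical, and checking the eigenvalue and convergence hypotheses of Corollary \ref{cor:gen_campbells} is straightforward from the assumptions on $\mathbf{S}$ and $\ell$.
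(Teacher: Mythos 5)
Your proposal is correct and follows essentially the same route as the paper's proof: condition on $\Phi$ and $I(y)$, apply the phase-type CCDF formula (\ref{eq:phase_type_cdf}), use the fact that conditioned on $L_y$ the process $\Phi^{!X_y}$ is a PPP with intensity measure $\Lambda$ restricted to $\R^d \setminus B(y,\ell^{-1}(L_y))$, and then invoke Corollary \ref{cor:gen_campbells}. Your explicit verification of the convergence hypothesis via (\ref{eq:pl_cond}) and the bound $1-e^{-u}\le u$ is a detail the paper leaves implicit, but it is consistent with the paper's argument.
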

\begin{proof} This result follows from the CDF of a phase-type distribution stated in (\ref{eq:phase_type_cdf}) in light of the fact that $\mathbf{S}$ is feasible with respect to Corollary \ref{cor:gen_campbells}:
\begin{align}
	&P_{\rm cov}(\tau) = \Pb(\sinr(y) \ge \tau) \nonumber\\
	&= \E\left[ \Pb\left(P_y L_y \ge \tau\left(I(y) + W \right) \big \vert \Phi, I(y)\right) \right] \nonumber
	\\
	&=\E\left[ \mathbf{p}^{\rm T} \exp \left( -\frac{\tau}{L_y}\mathbf{S}(I(y) + W) \right) \mathbf{1}_{n \times 1} \right] \nonumber
	\\
	&= \mathbf{p}^{\rm T} \E\left[  \E\left[\exp \left( -\frac{\tau}{L_y}\mathbf{S}I(y) \right) \bigg \vert L_y\right] \exp \left( -\frac{\tau W}{L_y}\mathbf{S} \right)\right] \mathbf{1}_{n \times 1} \nonumber
	\\
	&= \mathbf{p}^{\rm T} \E_{L_y}\left[  \exp\left( -\int_{\R^d \setminus B(y, \ell^{-1}(L_y))}\left( \mathbf{I} - \boldsymbol{\mathcal{L}}_H\left(\frac{\tau \ell(\lvert \lvert x - y \rvert \rvert)}{L_y}\mathbf{S}; x,y\right) \right) \Lambda(dx) \right) \exp \left( -\frac{\tau W}{L_y}\mathbf{S} \right)\right] \mathbf{1}_{n \times 1}.
\end{align}
The last line follows from Corollary \ref{cor:gen_campbells} and the fact that, conditioned with $L_y$, $\Phi^{!X_y}$ is a PPP with intensity measure 
\begin{align}
	\Lambda^{!X_y}(B) = \Lambda(B \setminus B(y, \ell^{-1}(L_y)) && B \in \mathcal{B}(\R^d).
	\label{eq:int_int_meas}
\end{align}
\end{proof}
Interestingly, in light of the differential representation of matrix functions in Definition \ref{def:mtx_function}, the expression for coverage probability in Corollary \ref{cor:sinr_cov_ph} is closely related to commonly employed expressions in prior work based on Fa{\`a} di Bruno's formula, stated in (\ref{eq:FDB}). However, Corollary \ref{cor:sinr_cov_ph} is arguably more tractable as it alleviates the need to explicitly consider (\ref{eq:FDB}) as it is implicitly entailed in evaluation of the matrix Laplace transform.
 This expression is additionally more general than prior work considering coverage probability with phase-type fading, e.g \cite{Alfano11}.
 
  Moreover, this formulation allows for one to determine coverage probability through the characterization of the matrix integral, 
\begin{align}
	\int_{\R^d \setminus B(y, \ell^{-1}(L_y))}\left( \mathbf{I} - \boldsymbol{\mathcal{L}}_H\left(\frac{\tau \ell(\lvert \lvert x - y \rvert \rvert)}{L_y}\mathbf{S}; x,y\right) \right) \Lambda(dx).
	\label{eq:mtx_fad_int}
\end{align}
This is analogous to the case where $P_y \sim {\rm Exp}(\lambda)$, which corresponds to the well known Rayleigh fading scenario, in which case one need only consider the scalar integral
\begin{align}
	\int_{\R^d \setminus B(y, \ell^{-1}(L_y))}\left( 1 - {\mathcal{L}}_H\left(\frac{\tau \ell(\lvert \lvert x - y \rvert \rvert)}{L_y}\lambda; x,y\right) \right) \Lambda(dx).
	\label{eq:scalar_fad_int}
\end{align}
Indeed, in light of Corollary \ref{cor:alt_gen_campbells} we may express (\ref{eq:gen_coverage}) in terms of the higher order derivatives of (\ref{eq:scalar_fad_int}) using the Jordan normal form of $\mathbf{S}$. This opens the door to methods employed to obtain tractability in the analysis of (\ref{eq:scalar_fad_int}) in the setting of Rayleigh fading in prior work to be leveraged in this more general setting.


\if 0

The expression for coverage in (\ref{eq:gen_coverage}) simplifies further in when $H_k$ follow Erlang$(n, \nu)$ distribution. The matrix Laplace transform of $H$ admits the simpler expression $\mtxLaplace_H(\mathbf{A}) = \left(\mathbf{I} + \nu^{-1} \mathbf{A} \right)^{-n}$,
as a consequence of Proposition \ref{prp:mtx_lt}. 
Hence, we have
\begin{align}
	\Pb(\sinr \ge \tau) &= \mathbf{e}_1^{\rm T} \E\left[  \exp\left( -\int_{R_0}^{\infty}\left( \mathbf{I} - \left(\mathbf{I} + \frac{\tau \ell(r)}{\ell(R_0)} \mathbf{Q} \right)^{-n} \right) \Lambda(dr) \right) \exp \left( -\frac{\tau \nu W}{\ell(R_0)}\mathbf{Q} \right)\right] \mathbf{1}.
\end{align}
Further simplifications of this expression, will generally be dependent on the intensity measure, $\Lambda$, and the path loss model $\ell$. An interesting special case arises for the outage probability with respect to SIR (i.e $W = 0$) when $\Phi$ corresponds to the point process of the distances of points from a Homogeneous PPP on $\R^{2}$ with intensity $\lambda$, and the path loss model follows $\ell(r) = Kr^{-\alpha}$, $\alpha > 2$. Note that in this case $\Lambda$ admits the density $2 \pi \lambda r$. We then have
\begin{align}
	\Pb(\sir \ge \tau) &= \mathbf{e}_1^{\rm T} \E\left[  \exp\left( - 2 \pi \lambda \int_{R_0}^{\infty}\left( \mathbf{I} - \left(\mathbf{I} + \frac{\tau r^{-\alpha}}{R_0^{-\alpha}} \mathbf{Q} \right)^{-n} \right) r dr \right) \right] \mathbf{1} \nonumber
	\\
	&= \mathbf{e}_1^{\rm T} \E\left[  \exp\left( -  \frac{\pi \lambda R_0^2}{\alpha}\int_{0}^{1}\left( \mathbf{I} - \left(\mathbf{I} + \tau u \mathbf{Q} \right)^{-n} \right) u^{-(1 + \frac{2}{\alpha})} du \right) \right] \mathbf{1} \nonumber
	\\
	&= \mathbf{e}_1^{\rm T} \left( \mathbf{I} + \frac{1}{\alpha}\int_{0}^{1}\left( \mathbf{I} - \left(\mathbf{I} + \tau u \mathbf{Q} \right)^{-n} \right) u^{-(1 + \frac{2}{\alpha})} du \right)^{-1} \mathbf{1},
	\label{eq:sir_cov}
\end{align}
where the second line follows from the change of variables $u = (R_0/r)^{\alpha}$, and the last line follows from the fact that $R_0^2$ is an Exp$\left(\pi \lambda\right)$ random variable and applying Proposition \ref{prp:mtx_lt} to obtain its matrix Laplace transform. Note that we may apply Proposition \ref{prp:mtx_lt} in light of the fact that $\int_{0}^{1}\left( 1 - \left(1+ \tau u \right)^{-n} \right) u^{-(1 + \frac{2}{\alpha})} du$ finite for all $\tau \ge 0$ when $\alpha > 2$. Hence, using Corollary \ref{cor:gen_campbells} we my conclude that $\int_{0}^{1}\left( \mathbf{I} - \left(\mathbf{I} + \tau u \mathbf{Q} \right)^{-n} \right) u^{-(1 + \frac{2}{\alpha})} du$ is convergent. Note further that (\ref{eq:sir_cov}) may be stated in closed form, as $\int_{0}^{1}\left( 1 - \left(1 + \tau u \right)^{-n} \right) u^{-(1 + \frac{2}{\alpha})} du$ and its higher order derivatives admit closed form expressions. We omit their enumeration for brevity, however.
\fi

\subsection{Meta-Distribution Analysis}
\label{subsec:sinr_meta}

We conclude this section by considering the meta-distribution of the SINR in the case where ${P_y \sim {\rm Exp}(\lambda)}$. Using Corollary \ref{cor:sn_dist}, the meta-distribution may be obtained as follows.

\begin{prp}(Meta-Distribution of the SINR with Exponentially Distributed Desired Signal Power)
\label{prp:sinr_meta_dist}
\\
Let $\sinr$ be of the form specified in (\ref{eq:sinr_mod}) and let the Laplace transform of $H(x;y)$ be denoted as $\Laplace_H(\cdot; x,y)$. Then, when $P_y$ follows an Exp$(\lambda)$ distribution, the meta-distribution corresponding to this SINR model may be expressed as follows. 

Let $N \in \N$, and let $\mathbf{Q}^{(N)}$ denote the negation of the sub-generator matrix corresponding to an Erlang$(N,1)$ random variable. For $\zeta \in [0, 1]$ and $N \in \N$, define the matrix
\begin{align}
	&\mathbf{S}^{(N)}(L_y, \zeta) = \nonumber
	\\
	&\int_{\R^d \setminus B(y, \ell^{-1}(L_y))}\left( \mathbf{I} - \exp\left(-\frac{N}{\left(\log(\zeta^{-1}) - \frac{\lambda\tau W}{L_y} \right)_+}\log\left({\mathcal{L}}_H\left(\frac{\lambda \tau \ell(\lvert \lvert x - y \rvert \rvert)}{L_y}; x,y\right)^{-1}\right) \mathbf{Q}^{(N)} \right) \right) \Lambda(dx).
\end{align}
Then
\begin{align}
&\bar{F}_{P_{\rm S}}(\zeta) = \lim_{N \to \infty} \mathbf{e}_1^{\rm T} \E_{L_y}\left[\exp\left( -\mathbf{S}^{(N)}(L_y, \zeta)\right)\right]\mathbf{1}_{N \times 1}.
\label{eq:sinr_meta_expr}
\end{align}
\end{prp}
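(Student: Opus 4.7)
The plan is to rewrite the meta-distribution event as the event that a certain conditional Poisson shot noise process falls below a threshold, and then invoke Corollary \ref{cor:sn_dist} to approximate the resulting conditional CDF via the matrix Laplace transform. First, since $P_y \sim {\rm Exp}(\lambda)$, conditioning on $\Phi$ and integrating out $P_y$ and the independent random field $H$ in the manner standard for Rayleigh-fading coverage analysis yields
\begin{align}
P_{\rm S}(\tau) = \exp\left(-\frac{\lambda \tau W}{L_y}\right) \prod_{X_k \in \Phi^{!X_y}} \Laplace_H\left(\frac{\lambda \tau \ell(\lvert\lvert X_k - y \rvert\rvert)}{L_y}; X_k, y\right).
\end{align}
Taking the negative logarithm of both sides of the event $\{P_{\rm S}(\tau) \ge \zeta\}$ converts it into $\{\tilde{I}(y) \le (\log(\zeta^{-1}) - \lambda\tau W/L_y)_+\}$, where
\begin{align}
\tilde{I}(y) = \sum_{X_k \in \Phi^{!X_y}} \log \Laplace_H\left(\frac{\lambda \tau \ell(\lvert\lvert X_k - y \rvert\rvert)}{L_y}; X_k, y\right)^{-1},
\end{align}
with the truncation at zero forced by nonnegativity of $\tilde{I}(y)$: when the unclipped threshold is negative the event already has probability zero.

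Next, I would observe that, conditioned on $L_y$, the process $\tilde{I}(y)$ is precisely an instance of the Poisson shot noise model in (\ref{eq:int_model_v2}) with $m = 1$, $\alpha_1 = 1$, $y_1 = y$, and deterministic integrand $\tilde{H}(x) = -\log \Laplace_H(\lambda\tau\ell(\lvert\lvert x-y \rvert\rvert)/L_y; x,y)$, driven by the reduced Palm version of $\Phi^{!X_y}$, which by the standard nearest-point conditioning result for PPPs has intensity measure $\Lambda$ restricted to $\R^d \setminus B(y, \ell^{-1}(L_y))$. The assumption (\ref{eq:pl_cond}) together with the Jensen bound $-\log\Laplace_H(s;x,y) \le s\,\E[H(x;y)]$ shows that $\E[\tilde{I}(y) \mid L_y]$ is finite, so $\tilde{I}(y)$ is finite almost surely. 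Since $\tilde{H}(x)$ is deterministic given $L_y$, its matrix Laplace transform collapses to $\mtxLaplace_{\tilde{H}}(\mathbf{A};x) = \exp(-\tilde{H}(x)\mathbf{A})$; substituting this and the threshold $T = (\log(\zeta^{-1}) - \lambda\tau W/L_y)_+$ into the matrix $\mathbf{S}_{\tilde{I}}^{(N)}(\tau)$ from Corollary \ref{cor:sn_dist} reproduces exactly $\mathbf{S}^{(N)}(L_y,\zeta)$ as stated in the proposition. Corollary \ref{cor:sn_dist} therefore gives
\begin{align}
\Pb(\tilde{I}(y) \le T \mid L_y) = \lim_{N \to \infty} \mathbf{e}_1^{\rm T} \exp\!\left(-\mathbf{S}^{(N)}(L_y,\zeta)\right)\mathbf{1}_{N\times 1}.
\end{align}

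Finally, I would un-condition on $L_y$ via $\bar{F}_{P_{\rm S}}(\zeta) = \E_{L_y}[\Pb(\tilde{I}(y) \le T \mid L_y)]$ and exchange the expectation with the $N \to \infty$ limit, which is justified by dominated convergence since each term is a probability uniformly bounded in $[0,1]$; this delivers (\ref{eq:sinr_meta_expr}). The main obstacle I anticipate is the degenerate regime $\log(\zeta^{-1}) \le \lambda\tau W/L_y$, where the clipped threshold is zero and the coefficient $N/T$ inside $\mathbf{S}^{(N)}(L_y,\zeta)$ must be interpreted as $+\infty$; consistency then requires noting that $\exp(-\infty \cdot c\,\mathbf{Q}^{(N)}) = \mathbf{0}$ for $c > 0$, since $\mathbf{Q}^{(N)}$ has strictly positive eigenvalues, so that $\mathbf{S}^{(N)}(L_y,\zeta)$ diverges to $\infty \cdot \mathbf{I}$ and the integrand vanishes, correctly contributing zero probability on this event. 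Beyond this degeneracy, the remaining subtlety is simply the conditional applicability of Corollary \ref{cor:sn_dist} on the event $\{L_y = \ell(r)\}$, which is immediate because the conditional distribution of $\Phi^{!X_y}$ is itself Poisson.
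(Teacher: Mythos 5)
Your proposal is correct and follows essentially the same route as the paper's proof: integrate out the exponential $P_y$ to express $P_{\rm S}(\tau)$ as $\exp(-\hat{I})\exp(-\lambda\tau W/L_y)$ for a shot noise $\hat{I}$ with (conditionally) deterministic integrand $\log \Laplace_H^{-1}$, convert the meta-distribution into the conditional CDF of $\hat{I}$ at the clipped threshold, and apply Corollary \ref{cor:sn_dist} given $L_y$ using the restricted intensity of $\Phi^{!X_y}$. Your added remarks on finiteness of $\hat{I}$, the degenerate zero-threshold regime, and the dominated-convergence exchange of the limit with $\E_{L_y}$ are details the paper leaves implicit, not a different argument.
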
	

\begin{proof}
	Consider first the conditional success probability. Under the assumption the $P_y \sim {\rm Exp}(\lambda)$ we have
	\begin{align}
	&P_{\rm S}(\tau) = \Pb(\sinr(y) \ge \tau \vert \Phi) \nonumber
	\\
	&= \Pb\left(P_y \ge \frac{\tau}{L_y} (I(y) + W) \bigg\vert \Phi \right) \nonumber
	\\
	&= \E\left[\exp\left(-\frac{\lambda \tau}{L_y} I(y)\right) \bigg\vert \Phi \right] \exp\left(-\frac{\lambda \tau W}{L_y}\right) \nonumber
	\\
	&= \exp\left(-\int_{\R^d \setminus B(y, \ell^{-1}(L_y))}\log\left(\Laplace_H\left(\frac{\lambda \tau \ell(\lvert \lvert x - y \rvert \rvert)}{L_y};x,y \right)^{-1} \right)\Phi^{!X_y}(dx)\right) \exp\left(-\frac{\lambda \tau W}{L_y}\right),
	\end{align}
	where the last line follows from Corollary \ref{thm:campbells}.
Thus, defining the non-negative shot noise 
\begin{align}
	\hat{I}(y; L_y) = \int_{\R^d \setminus B(y, \ell^{-1}(L_y))}\log\left(\Laplace_H\left(\frac{\lambda \tau \ell(\lvert \lvert x - y \rvert \rvert)}{L_y};x,y \right)^{-1} \right)\Phi^{!X_y}(dx),
\end{align}
we have
\begin{align}
	\bar{F}_{P_{\rm S}}(\zeta) &= \Pb\left(P_{\rm S}(\tau) \ge \zeta\right) \nonumber
	\\
	&= \Pb\left(\hat{I}(y; L_y) + \frac{\lambda \tau W}{L_y} \le \log(\zeta^{-1})\right) \nonumber
	\\
	&= \E_{L_y}\left[\Pb\left(\hat{I}(y; L_y) \le \left(\log(\zeta^{-1}) - \frac{\lambda \tau W}{L_y}\right)_+ \bigg \vert L_y \right)\right],
\end{align}
where $(z)_+$ denotes $\max(z,0)$.
(\ref{eq:sinr_meta_expr}) then follows from Corollary \ref{cor:sn_dist} noting that, conditioned on $L_y$, $\Phi^{! X_y}$ is a PPP with an intensity measure of the form specified in (\ref{eq:int_int_meas}).
\end{proof}

Note that Proposition \ref{prp:sinr_meta_dist} additionally implies that upper and lower bounds on the meta-distribution of the form specified in Proposition \ref{prp:approx_bounds} may be obtained using (\ref{eq:sinr_meta_expr}) for finite values of $N$. Moreover, the proof of Proposition \ref{prp:sinr_meta_dist} only requires that the transmit power of the serving BS, $P_y$, be exponentially distributed. Hence, under that condition, the proposition provides a tractable means of characterizing the meta-distribution of SINR models for a wide                                                                                                                       variety of Poisson networks. This provides an alternative approach to the characterization of the meta-distribution employed in prior work, which typically employs the Gil-Paelez theorem to characterize the meta distribution in terms of the imaginary moments of $P_{\rm S}(\tau)$ \cite{Haenggi16}. Notably, it allow for the meta-distribution analysis of SINR, as opposed to simply SIR, models.

\section{Conclusion}
\label{sec:conclusion}
We have developed a generalization of the Laplace transform of a random variable defined as an integral transform with respect to a matrix exponential. Defining this integral transform as the {\em matrix Laplace transform}, we have established its existence as the matrix function generalization of the typical scalar Laplace Transform. We further developed a generalization of Campbell's theorem for the Laplace functional of a PPP to address the characterization of the matrix Laplace transform of Poisson shot noise. 
 The generalization of Campbell's theorem allows for improved tractability in several applications of interest. In particular, we showed how the generalization of Campbell's theorem may be used to obtain higher order moments of Poisson shot noise in terms of elements of its matrix Laplace transform. We additionally established a method to obtain arbitrarily accurate approximations of and upper and lower bounds on the CCDF of Poisson shot noise in terms of its matrix Laplace transform. The general method we used to obtain these characterizations of the CCDF is further applicable to any non-negative random variable. Additionally, we demonstrated how the generalization of Campbell's theorem enables one to characterize the coverage probability of downlink cellular networks when the SINR model has phase-type distributed desired signal power. The resulting expression for coverage probability is analogous to the canonical case when the fading terms in the SINR model correspond to Rayleigh fading. Finally, we showed that, when the SINR model has exponentially distributed desired signal power, the meta-distribution of the SINR may be obtained in terms of the CCDF of a certain instance of Poisson shot noise. The CCDF of this shot noise may, in turn, be tractably characterized using the methods mentioned earlier.

\appendices
\section{Intermediate Lemmas}
\label{app:int_lemmas}
In this appendix, we provide a series of lemmas which will prove useful for the main results detailed in Sec. \ref{sec:analysis}. In the first two lemmas, we  consider Poisson shot noise as defined in the statement of Theorem \ref{thm:gen_campbells},
\begin{align}
	I = \sum_{X_k \in \Phi}H(X_k),
	\label{eq:app_sn_def}
\end{align}
where $\Phi$ is a PPP on $\R^{d}$ with intensity measure $\Lambda$ and $H: \R^{d} \rightarrow \R_{+}$ is a random field independent of $\Phi$ such that $H$ is independent $\Lambda$-a.s.

First, we establish a necessary condition for $I$ to be finite a.s. If the condition does not hold, then $I$ is infinite a.s.
\begin{lem} (Necessary and Sufficient Condition for Finiteness of Poisson Shot Noise)
\\
\label{lem:finiteness}
 Let $\Laplace_H(\cdot;x)$ denote the Laplace Transform of $H(x)$ and consider the integral $ \int_{\R^d} \left(1 - \Laplace_H(s;x) \right) \Lambda(dx)$, $s \in \R_+$.
If the integral is convergent for all $s \in \R_+$, then $I$ is finite a.s. Otherwise, $I$ is infinite a.s.
\end{lem}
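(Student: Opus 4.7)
The plan is to characterize $\Pb(I < \infty)$ via the scalar Laplace transform of $I$ supplied by Corollary~\ref{thm:campbells}. A preliminary observation is that, since $H(x) \ge 0$ a.s., for each $s \in \R_+$ we have $1 - \Laplace_H(s;x) = \E[1 - e^{-sH(x)}] \in [0,1]$, so the integral $\int_{\R^d}(1 - \Laplace_H(s;x))\Lambda(dx)$ is always well defined in $[0,\infty]$. Moreover, for each fixed $x$, the map $s \mapsto 1 - \Laplace_H(s;x)$ is monotonically non-decreasing in $s$ and tends to $0$ as $s \to 0^+$, since $e^{-sH(x)} \uparrow 1$.

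For the necessity direction, suppose there exists $s_0 > 0$ at which $\int_{\R^d}(1 - \Laplace_H(s_0;x))\Lambda(dx) = \infty$. Corollary~\ref{thm:campbells} then yields $\Laplace_I(s_0) = \E[e^{-s_0 I}] = 0$; since $e^{-s_0 I}$ is non-negative this forces $e^{-s_0 I} = 0$ a.s., i.e.\ $I = \infty$ a.s.

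For the sufficiency direction, assume $\int_{\R^d}(1 - \Laplace_H(s;x))\Lambda(dx) < \infty$ for every $s \in \R_+$. Fix an arbitrary $s_0 > 0$; then for $s \in (0, s_0]$ the integrand is dominated by the $\Lambda$-integrable function $1 - \Laplace_H(s_0;x)$, so dominated convergence gives $\int_{\R^d}(1 - \Laplace_H(s;x))\Lambda(dx) \to 0$ as $s \to 0^+$, and hence $\Laplace_I(s) \to 1$ by Corollary~\ref{thm:campbells}. On the other hand, adopting the convention $e^{-\infty} = 0$, we have $e^{-sI} \uparrow \indicator\{I < \infty\}$ pointwise as $s \to 0^+$, so monotone convergence yields $\Laplace_I(s) \to \Pb(I < \infty)$. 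Equating the two limits gives $\Pb(I < \infty) = 1$.

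I do not anticipate a serious obstacle: the argument reduces to Campbell's theorem together with a standard dominated/monotone convergence exchange justified by the monotonicity of $1 - \Laplace_H(s;x)$ in $s$. The only small subtlety is the zero-or-one dichotomy in the necessity direction, which follows from the elementary fact that a non-negative random variable with zero expectation vanishes a.s.
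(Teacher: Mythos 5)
Your proof is correct, and the sufficiency half takes a genuinely different route from the paper. For necessity you and the paper do essentially the same thing: divergence of the integral at some $s_0$ forces $\Laplace_I(s_0)=0$ via Corollary~\ref{thm:campbells}, and hence $I=\infty$ a.s.\ (the paper phrases this through Markov's inequality on $\Pb(I\le\tau)$; your observation that a non-negative random variable with zero mean vanishes a.s.\ is the same fact stated more directly). For sufficiency, however, the paper applies Fubini--Tonelli to rewrite the hypothesis as $\E\bigl[\int_{\R^d}(1-e^{-sH(x)})\Lambda(dx)\bigr]<\infty$, concludes that the inner integral is a.s.\ convergent, and then cites the a.s.-finiteness criterion for Poisson shot noise from \cite[Thm.~4.6]{Haenggi12}. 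You instead stay entirely at the level of the Laplace transform: monotonicity of $s\mapsto 1-\Laplace_H(s;x)$ gives a dominating integrable function, so $\int_{\R^d}(1-\Laplace_H(s;x))\Lambda(dx)\to 0$ and $\Laplace_I(s)\to 1$ as $s\to 0^+$, while monotone convergence gives $\Laplace_I(s)\to\Pb(I<\infty)$; equating the limits yields the claim. Your version is more self-contained, needing only the formula of Corollary~\ref{thm:campbells} plus elementary convergence theorems rather than an external citation, at the cost of the small extra bookkeeping around the limit $s\to 0^+$. Both arguments are sound, and your implicit use of the convention $\exp(-\infty)=0$ in the divergent case is the same one the paper relies on in its own necessity step.
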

\begin{proof} First, assume that $ \int_{\R^d} \left(1 - \Laplace_H(s;x) \right) \Lambda(dx)$ is convergent for all $s \in \R_+$. The integral may be expressed as
\begin{align}
	\int_{\R^d} \left(1 - \Laplace_H(s;x) \right) \Lambda(dx) &= \int_{\R^d} \left(1 - \E[e^{-sH(x)} \vert \Phi] \right) \Lambda(dx) \nonumber
	\\
	&= \int_{\R^d} \E\left[\left(1 - e^{-sH(x)}  \right)\right] \Lambda(dx) \nonumber
	\\
	&= \E\left[\int_{\R^d} \left(1 - e^{-sH(x)}  \right) \Lambda(dx) \right],
\end{align}
where the second line follows from the fact that $H$ and $\Phi$ are independent, and the last line follows from the Fubini-Tonelli theorem. Thus, finiteness of $ \int_{\R^d} \left(1 - \Laplace_H(s;x) \right) \Lambda(dx)$ implies that $\int_{\R^d} \left(1 - e^{-sH(x)}  \right) \Lambda(dx)$ is convergent a.s. Appealing to the version of Campbell's theorem presented in \cite[Thm. 4.6]{Haenggi12}, it follows that $I$ is finite a.s.

Now, assume that $ \int_{\R^d} \left(1 - \Laplace_H(s;x) \right) \Lambda(dx)$ diverges for some $s \in \R_+$, say at $\theta \in \R_+$. Then, for all $\tau \in \R_+$, we have
\begin{align}
	\Pb(I \le \tau) &= \Pb\left(e^{-\theta I} \ge e^{-\theta \tau}\right) \nonumber
	\\
	&\le e^{\theta \tau} \Laplace_I(\theta) \nonumber
	\\
	&= e^{\theta \tau} \exp\left(- \int_{\R^d} \left(1 - \Laplace_H(\theta;x) \right) \Lambda(dx)\right) \nonumber
	\\
	&= 0,
\end{align}
where the second line follows from Markov's Inequality, and the third line follows from the variation of Campbell's theorem presented in Corollary \ref{thm:campbells}. Thus it follows that $I$ is infinite a.s.
\end{proof}

Next, we establish an intermediate form of a generalization of Campbell's theorem which enables the matrix Laplace transform $I$ to be obtained when $\Phi$ is finite a.s.
\begin{lem} (Generalization of Campbell's theorem for Shot Noise Induced by Finite PPPs).
\\
\label{lem:finite_gen_campbells}
Let $\Phi$ be such that $\Lambda(\R^d) < \infty$, and consider the shot noise, $I$, induced by $\Phi$ as in (\ref{eq:app_sn_def}). Let $\mtxLaplace_H(\cdot; x)$ denote the matrix Laplace transform for $H(x)$. Let $\mathbf{S} \in \C^{n \times n}$ be a matrix  such that $\int_{\R^d} \mtxLaplace_H(\mathbf{S};x) \Lambda(dx)$
is convergent. Then, the matrix Laplace transform of $I$ exists at $\mathbf{S}$ and may be expressed as
\begin{align}
	\mtxLaplace_I(\mathbf{S}) = \exp\left(-\int_{\R^d} \left( \mathbf{I} - \mtxLaplace_H(\mathbf{S};x)\right) \Lambda(dx)\right).
\end{align}
\end{lem}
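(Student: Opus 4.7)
The plan is to condition on the total count $N = \Phi(\R^d)$, which by finiteness of $\Lambda$ is $\mathrm{Poisson}(\mu)$-distributed with $\mu = \Lambda(\R^d)$, and to exploit the fact that every factor of the form $\exp(-\mathbf{S} H(X_k))$ is a matrix function of $\mathbf{S}$ and therefore commutes pairwise with every other such factor. This commutativity is the structural feature that lets the classical scalar proof of Campbell's theorem for finite PPPs carry over to the matrix-valued setting verbatim.

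First I would condition on $\{N = n\}$, under which the atoms $X_1,\dots,X_n$ of $\Phi$ are i.i.d.\ with common law $\mu^{-1}\Lambda(\cdot)$, and write
\begin{align*}
\exp(-\mathbf{S} I) = \prod_{k=1}^{n} \exp(-\mathbf{S} H(X_k)),
\end{align*}
an unordered product since each factor is a power series in $\mathbf{S}$. Using the hypothesis that $H$ is independent $\Lambda$-a.s., together with the commutativity just noted, I would then factor the conditional expectation by taking expectations one factor at a time,
\begin{align*}
\E\!\left[\prod_{k=1}^{n} \exp(-\mathbf{S} H(X_k)) \,\Big|\, X_1,\dots,X_n\right] = \prod_{k=1}^{n} \boldsymbol{\mathcal{L}}_H(\mathbf{S}; X_k),
\end{align*}
observing that each $\boldsymbol{\mathcal{L}}_H(\mathbf{S};X_k)$ is itself a matrix function of $\mathbf{S}$ and hence still commutes with the remaining factors.

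Next I would average over the i.i.d.\ locations. The convergence hypothesis on $\int_{\R^d}\boldsymbol{\mathcal{L}}_H(\mathbf{S};x)\,\Lambda(dx)$ is exactly what is needed to justify an entrywise Fubini interchange, yielding
\begin{align*}
\E\!\left[\prod_{k=1}^{n} \boldsymbol{\mathcal{L}}_H(\mathbf{S}; X_k)\right] = \left(\tfrac{1}{\mu}\int_{\R^d}\boldsymbol{\mathcal{L}}_H(\mathbf{S};x)\,\Lambda(dx)\right)^{n},
\end{align*}
again with mutual commutativity ensuring the product is well-defined. Summing against the Poisson weights $\Pb(N=n) = e^{-\mu}\mu^n/n!$, the $\mu$ factors cancel and I would recognize the matrix exponential series
\begin{align*}
\boldsymbol{\mathcal{L}}_I(\mathbf{S}) = e^{-\mu}\sum_{n=0}^{\infty}\frac{1}{n!}\left(\int_{\R^d}\boldsymbol{\mathcal{L}}_H(\mathbf{S};x)\,\Lambda(dx)\right)^{n} = \exp\!\left(-\int_{\R^d}\left(\mathbf{I} - \boldsymbol{\mathcal{L}}_H(\mathbf{S};x)\right)\Lambda(dx)\right),
\end{align*}
where I use that $\mu\mathbf{I}$ commutes with the matrix integral so that $e^{-\mu}\mathbf{I} = \exp(-\mu\mathbf{I})$ can be absorbed.

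The matrix exponential series converges for every square matrix, so the final rearrangement needs no extra hypothesis beyond the stated one; convergence of $\int\boldsymbol{\mathcal{L}}_H(\mathbf{S};x)\,\Lambda(dx)$ enters solely to make the Fubini-type interchange legitimate. The main technical obstacle I anticipate is the careful entrywise verification that absolute values of the entries of $\boldsymbol{\mathcal{L}}_H(\mathbf{S};x)$ are $\Lambda$-integrable so that the conditional factorization and the interchange with expectation are both justified; once this integrability is in place, the Poisson-conditioning calculation reduces the lemma to matching the scalar Campbell argument term by term.
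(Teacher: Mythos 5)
Your proposal is correct and follows essentially the same route as the paper's proof: factorizing the matrix exponential of the sum via commutativity of the summands, applying conditional independence of the marks, conditioning on the Poisson count with i.i.d.\ atoms distributed as $\Lambda(\R^d)^{-1}\Lambda(\cdot)$, and resumming the Poisson weights into the matrix exponential series. The only cosmetic difference is that the paper keeps the product indexed over $\Phi$ and conditions on $\Phi(\R^d)$ within the chain of equalities rather than fixing $N=n$ at the outset.
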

\begin{proof}
	The condition that $\int_{\R^d} \mtxLaplace_H(\mathbf{S};x) \Lambda(dx)$ is convergent implies that $\mtxLaplace_H(\mathbf{S};x)$ exists $\Lambda$-a.s. Using this fact we have
	\begin{align}
		\mtxLaplace_I(\mathbf{S}) &= \E\left[\exp\left( -\sum_{X_k \in \Phi} \mathbf{S} H(X_k) \right) \right] \nonumber
		\\
		&\stackrel{(a)}{=} \E\left[\prod_{X_k \in \Phi} \exp\left( - \mathbf{S} H(X_k) \right) \right] \nonumber
		\\
		&\stackrel{(b)}{=} \E\left[\prod_{X_k \in \Phi}\E\left[ \exp\left( - \mathbf{S} H(X_k) \right) \bigg\vert \Phi\right]\right] \nonumber
		\\
		&\stackrel{(c)}{=} \E\left[\prod_{X_k \in \Phi}\mtxLaplace_H(\mathbf{S}; X_k)\right] \nonumber
		\\
		&\stackrel{(d)}{=} \E\left[\prod_{X_k \in \Phi}\E\left[\mtxLaplace_H(\mathbf{S}; X_k)\big\vert \Phi(\R^d)\right]\right] \nonumber
		\\
		&\stackrel{(e)}{=} \E\left[\left(\frac{1}{\Lambda(\R^d)} \int_{\R^d}\mtxLaplace_H(\mathbf{S}; x) \Lambda(dx) \right)^{\Phi(\R^d)}\right] \nonumber
		\\
		&\stackrel{(f)}{=} e^{-\Lambda(\R^d)}\sum_{k = 0}^{\infty} \frac{\Lambda(\R^d)^k}{k!}\left( \frac{1}{\Lambda(\R^d)} \int_{\R^d} \boldsymbol{\mathcal{L}}_H(\mathbf{S};x) \Lambda(dx)\right)^{k} \nonumber
		\\
		&\stackrel{(g)}{=} \exp\left(-\Lambda(\R^d) \mathbf{I}\right)\exp\left( \int_{\R^d} \boldsymbol{\mathcal{L}}_H(\mathbf{S};x) \Lambda(dx)\right) \nonumber
		\\
		&\stackrel{(h)}{=} \exp\left( -\int_{\R^d}\left( \mathbf{I} - \boldsymbol{\mathcal{L}}_H(\mathbf{S};x) \right) \Lambda(dx) \right).
	\end{align}
	Each step is justified as follows: (a) follows from the fact that $\left(\mathbf{S} H(X_k)\right)_{X_k \in \Phi}$ are commutative, and thus that the matrix exponential of the sum of the sequence is the product of matrix exponentials applied to the sequence \cite[Thm. 10.2]{Higham08}; (b) follows from the fact that $\left(H(X_k)\right)_{X_k \in \Phi}$ are almost surely independent conditioned on $\Phi$; (c) follows from the fact that $\mtxLaplace_H(\mathbf{S};x)$ exists and the definition of the matrix Laplace transform;
	(d) follows from the fact that the atoms of $\Phi$ are independent conditioned on the number of points, $\Phi(\R^d)$; (e) follows from the fact that $\frac{\partial \Pb\left(X_k \in \cdot \vert \Phi(\R^d) \right)}{\partial \Lambda} = \Lambda(\R^d)^{-1}$ in light of the fact that $\Lambda(\R^d)$ is finite; (f) likewise follows from the fact that $\Phi(\R^d)$ is a Poisson random variable due to finiteness of $\Lambda(\R^d)$; (g) follows from the definition of the matrix exponential and the assumption that $\int_{\R^d} \mtxLaplace_H(\mathbf{S};x) \Lambda(dx)$ is convergent; and $(h)$ follows from the fact that $\Lambda(\R^d) \mathbf{I}$ and $\int_{\R^d} \mtxLaplace_H(\mathbf{S};x) \Lambda(dx)$ commutative and that $\Lambda(\R^d)$ is finite.
\end{proof}

We further establish bounds on modulus of the elements of the matrix Laplace transform of non-negative random variables. These are necessary for various steps of the proof of Theorem \ref{thm:gen_campbells} in particular, including the application of Lemma \ref{lem:finite_gen_campbells}.
\begin{lem} (Bounds on the Matrix Laplace Transforms of Non-negative Random Variables)
\\
\label{lem:mtx_bounds}
Let $X$ be a random variable taking values in $\R_+$, and $\mathbf{J} \in \C^{n \times n}$ be a Jordan block matrix with eigenvalue $\lambda \in \C$ such that $\Re(\lambda) > 0$.
	 Additionally, for $j \in \{1, \dots, n\}$ let $Z_j \sim$ Erlang$(j, \Re(\lambda))$ independent of $X$, and let $Z_0 = 0$.
	Then, the matrix Laplace transform of $X$ at $\mathbf{J}$, $\mtxLaplace_X(\mathbf{J})$ admits the following bounds
	\begin{align}
		\left \lvert \mtxLaplace_{X}(\mathbf{J})_{1,j} \right \rvert \le {\Re(\lambda)^{1 - j}} \max\left(\Pb\left(Z_j > X \right), \Pb\left(Z_{j-1} > X \right) \right) \le {\Re(\lambda)^{1 - j}}
		\label{eq:lem1_bd}
	\end{align}
\end{lem}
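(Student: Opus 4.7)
The plan is to reduce the modulus of the matrix entry to a real-valued expectation and then recognize that expectation as a difference of Erlang tail probabilities. First, I would invoke part i) of Proposition \ref{prp:mtx_lt} to write
\begin{align*}
\mtxLaplace_X(\mathbf{J})_{1,j} = \frac{1}{(j-1)!}\E\bigl[(-X)^{j-1} e^{-\lambda X}\bigr],
\end{align*}
then apply the triangle inequality together with $|e^{-\lambda X}| = e^{-\Re(\lambda) X}$ (valid because $X \geq 0$) and $|(-X)^{j-1}| = X^{j-1}$ to obtain
\begin{align*}
\bigl\lvert \mtxLaplace_X(\mathbf{J})_{1,j}\bigr\rvert \;\leq\; \frac{1}{(j-1)!}\E\bigl[X^{j-1} e^{-\Re(\lambda) X}\bigr].
\end{align*}
Writing $\mu = \Re(\lambda)$ reduces the problem to a bound on a purely real, non-negative quantity.

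Next, I would express this real expectation via Erlang tails. Using the standard Poisson identity for the CCDF of $Z_j \sim$ Erlang$(j,\mu)$,
\begin{align*}
\Pb(Z_j > x) = \sum_{k=0}^{j-1} \frac{(\mu x)^k e^{-\mu x}}{k!},
\end{align*}
a term-by-term subtraction yields the key identity $\Pb(Z_j > x) - \Pb(Z_{j-1} > x) = \frac{(\mu x)^{j-1} e^{-\mu x}}{(j-1)!}$ for $j \geq 1$ (with the convention $\Pb(Z_0 > x) = 0$). Conditioning on $X$, independence of $Z_j$ and $X$ gives
\begin{align*}
\frac{\mu^{j-1}}{(j-1)!}\E\bigl[X^{j-1} e^{-\mu X}\bigr] = \Pb(Z_j > X) - \Pb(Z_{j-1} > X),
\end{align*}
so rearranging and combining with the previous bound produces
\begin{align*}
\bigl\lvert \mtxLaplace_X(\mathbf{J})_{1,j}\bigr\rvert \;\leq\; \mu^{1-j}\bigl[\Pb(Z_j > X) - \Pb(Z_{j-1} > X)\bigr].
\end{align*}

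To finish, I would observe that the bracketed difference is at most the larger of the two tails (both are non-negative), which gives the first inequality in \eqref{eq:lem1_bd}, and that any probability is bounded above by $1$, which gives the second. No single step seems to present a real obstacle; the only thing to be careful about is the boundary case $j = 1$, which is handled cleanly by the convention $Z_0 = 0$ so that $\Pb(Z_0 > X) = 0$ and the identity collapses to $|\mtxLaplace_X(\mathbf{J})_{1,1}| \leq \E[e^{-\mu X}] = \Pb(Z_1 > X)$. The main conceptual observation is simply the Poisson/Erlang telescoping identity, which converts the moment-like quantity $\E[X^{j-1} e^{-\mu X}]$ into a tail probability and thus into a bounded quantity.
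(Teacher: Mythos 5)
Your proof is correct, and it reaches the same key identity as the paper -- namely that $\frac{1}{(j-1)!}\E\bigl[X^{j-1}e^{-\Re(\lambda)X}\bigr] = \Re(\lambda)^{1-j}\bigl(\Pb(Z_j > X) - \Pb(Z_{j-1} > X)\bigr)$ -- but by a genuinely more elementary route. The paper gets there by factoring $\Re(\mathbf{J})$ through a diagonal similarity transform as $\Re(\lambda)\mathbf{V}\mathbf{Q}\mathbf{V}^{-1}$, recognizing $\Re(\lambda)\mathbf{Q}$ as the negated sub-generator of an Erlang$(n,\Re(\lambda))$ phase-type model, and reading off the tail probabilities from the matrix-exponential CCDF formula $\mathbf{e}_1^{\rm T}\exp(-\Re(\lambda)\mathbf{Q}X)\mathbf{w}_j = \Pb(Z_j > X \mid X)$; this keeps the argument inside the paper's matrix-function framework (and leans on Proposition~\ref{prp:mtx_lt} together with the similarity-invariance of matrix functions). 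You instead write out the Erlang CCDF as the explicit Poisson sum and telescope, which proves the same pointwise identity with nothing more than independence and Fubini, and handles the $j=1$ boundary case cleanly via $Z_0 = 0$. Your version is shorter and self-contained; the paper's version has the advantage of reinforcing the phase-type machinery that is reused elsewhere (e.g.\ in Section~\ref{subsec:shot_noise_char_dist}), but nothing in the lemma actually requires it. Both arguments close identically: the difference of the two tails is bounded by their maximum, which is bounded by $1$.
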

\begin{proof}
In light of Proposition \ref{prp:mtx_lt} and Definition \ref{def:mtx_function} we have
\begin{align}
	\left \vert \mtxLaplace_{X}(\mathbf{J})_{1,j} \right \rvert 
	&= \frac{1}{(j-1)!}  \left \vert\E\left[ (-X)^{j-1} \exp\left( -\lambda X \right)\right] \right \rvert \nonumber
	\\
	&\le \frac{1}{(j-1)!}  \E\left[ \left \vert (-X)^{j-1} \exp\left( -\lambda X \right) \right \rvert \right] \nonumber
	\\
	&= \frac{1}{(j-1)!}  \E\left[  (X)^{j-1} \exp\left( -\Re\left(\lambda\right) X \right) \right] \nonumber
	\\
	&= \left \vert \mtxLaplace_{X}\left( \Re\left( \mathbf{J} \right)\right)_{1,j} \right \rvert.
\end{align}
The second line follows from Jensen's Inequality. Thus, it follows that $\left \lvert \mtxLaplace_{X}(\mathbf{J})_{1,j} \right \rvert \le \left \lvert \mtxLaplace_{X}(\Re\left(\mathbf{J} \right))_{1,j} \right \rvert$.

Now, define the matrix $\mathbf{V} = {\rm diag}((-\Re(\lambda))^t)_{t = 0}^{n - 1}$ and let $\mathbf{Q} \in \R^{n \times n}$ be an upper triangular Toeplitz matrix with first row
\begin{align}
	\left(\mathbf{Q}\right)_{1,:} = \begin{bmatrix}
		1 &-1 &\mathbf{0}_{1 \times n - 2}
	\end{bmatrix}.
\end{align}
Note that $\Re \left(\mathbf{J} \right)$ may be factored as  $\Re \left(\mathbf{J} \right) = \Re(\lambda)\mathbf{V} \mathbf{Q} \mathbf{V}^{-1}$. Hence, by Proposition \ref{prp:mtx_lt} and \cite[Thm. 1.13]{Higham08} we have
\begin{align}
	\mtxLaplace_{X}\left( \Re\left( \mathbf{J}\right) \right) = \mathbf{V} \mtxLaplace_{X}\left( \Re\left( \lambda \right) \mathbf{Q} \right) \mathbf{V}^{-1}.
\end{align}
Since $\mathbf{V}$ is diagonal it follows that $\mtxLaplace_{X}\left( \Re\left( \lambda \right) \mathbf{Q} \right)$ is an upper triangular Toeplitz matrix. More precisely, we have
\begin{align}
	\mtxLaplace_{X}(\Re\left(\mathbf{J} \right))_{1,j} = (-1)^{j-1}{\Re(\lambda)^{1 - j}}\mtxLaplace_{X}(\Re\left(\lambda \right) \mathbf{Q})_{1,j}
\end{align}
Hence we need only show that 
\begin{align}
\left \lvert \mtxLaplace_{X}(\Re\left(\lambda \right) \mathbf{Q})_{1,j}\right \rvert \le \max\left(\Pb\left(Z_j > X \vert \Phi \right), \Pb\left(Z_{j-1} > X  \right) \right).
\end{align}

To show this, we exploit the fact that $\Re\left(\lambda \right) \mathbf{Q}$ corresponds to the sub-generator matrix of the phase-type representation of an Elang$(n, \Re(\lambda))$ random variable (see Sec. \ref{subsec:phase_type} for a discussion of phase type random variables). In particular, for $j \in \{1, \dots, n\}$  define $\mathbf{w}_j$ as
$\mathbf{w}_j = \begin{bmatrix}
		\mathbf{1}_{1 \times j} &\mathbf{0}_{1 \times n - j}.
	\end{bmatrix}^{\rm T}$. Then, 
we may express $\mtxLaplace_{X}(\Re\left(\lambda \right) \mathbf{Q})_{1,j}$ as
\begin{align}
	\mtxLaplace_{X}(\Re\left(\lambda \right) \mathbf{Q} )_{1, j} = \mathbf{e}_1^{\rm T}\mtxLaplace_{X}(\Re\left(\lambda \right) \mathbf{Q} )\mathbf{w}_{j} - \mathbf{e}_1^{\rm T}\mtxLaplace_{X}(\Re\left(\lambda \right) \mathbf{Q} )\mathbf{w}_{j -1},
\end{align}
For all $j \in \{1, \dots, n\}$ we have
\begin{align}
	\mathbf{e}_1^{\rm T}\mtxLaplace_{X}(\Re\left(\lambda \right) \mathbf{Q})\mathbf{w}_{j} &=  \E\left[\mathbf{e}_1^{\rm T} \exp\left( - \Re\left(\lambda \right) \mathbf{Q} X  \right) \mathbf{w}_{j} \right]   \nonumber
	\\
	&= \Pb\left(\tilde{Z}_j > X \right).
\end{align}
Hence (\ref{eq:lem1_bd}) follows noting that $\left \lvert \Pb\left(Z_j \ge X \right) - \Pb\left(Z_{j-1} \ge X \right) \right \rvert \le \max\left(\Pb\left(Z_j > X \right), \Pb\left(Z_{j-1} > X \right) \right) $.
\end{proof}

\section{Proof of Theorem \ref{thm:gen_campbells}}
\label{app:gen_campbells}

The proof of Theorem \ref{thm:gen_campbells} relies on Lemma \ref{lem:finiteness}, Lemma \ref{lem:finite_gen_campbells}, and Lemma \ref{lem:mtx_bounds}, provided in Appendix \ref{app:int_lemmas}. Lemma \ref{lem:finite_gen_campbells} in particular provides an expression for the matrix Laplace transform of Poisson shot noise when $\Phi$ is finite a.s., and is used in the proof of each part.

Consequently, we will consider the convergence properties of the matrix Laplace transform of the shot noise induced by the restriction of $\Phi$ to bounded subsets of $\R^d$ increasing to the entire space. That is, let $(A_m)_{m \in \mathbb{N}}$ be a sequence of measurable, bounded subsets of $\R^{d}$ such that $A_{m} \subset A_{m+1}$ and $A_m \uparrow \R^d$ as $m \rightarrow \infty$. Then, let $I_m$ denote the shot noise corresponding to the restriction of $\Phi$ to $A_m$, $\Phi \vert A_m$. Note that $\Phi \vert A_m$ is necessarily finite a.s., and thus Lemma \ref{lem:finite_gen_campbells} may be applied to $I_m$.

\subsection{Proof of Part i)}

By assumption, $\lambda$ is in the interior of the region of convergence of $\Laplace_H(\cdot; x)$. Hence, Proposition \ref{prp:mtx_lt} applies and $\mtxLaplace_H(\mathbf{J};x)$ exists. Moreover, by Lemma \ref{lem:mtx_bounds} it follows that
\begin{align}
	\left \lvert \int_{A_m}\mtxLaplace_H(\mathbf{J};x)_{1,j}\Lambda(dx) \right \rvert &\le  \int_{A_m}\left \lvert \mtxLaplace_H(\mathbf{J};x)_{1,j} \right \rvert\Lambda(dx) \nonumber
	\\
	& \le \Re(\lambda)^{1-j} \Lambda(A_m).
\end{align}
Thus, $\int_{A_m}\mtxLaplace_H(\mathbf{J};x)\Lambda(dx)$ is convergent. Hence, by Lemma \ref{lem:finite_gen_campbells} we have
\begin{align}
	\mtxLaplace_{I_m}(\mathbf{J}) 
	&= \exp\left( -\int_{A_m}\left( \mathbf{I} - \boldsymbol{\mathcal{L}}_H(\mathbf{J};x) \right) \Lambda(dx) \right).
\end{align}

Now, we claim that $\mtxLaplace_I(\mathbf{J})$ and $\mtxLaplace_{I_m}(\mathbf{J})$ may be related by the dominated convergence theorem. To see this, note that for $j \in \{1, \dots, n\}$ by Lemma \ref{lem:mtx_bounds} we have
\begin{align}
	\frac{1}{(j-1)!}\E\left[\left  \lvert I_m^{j-1} e^{-\lambda I_m} \right \rvert\right] =\left \lvert \mtxLaplace_{I_m}(\mathbf{J})_{1,j} \right \rvert \le \Re(\lambda)^{1-j}
	\label{eq:bounded_cond}
\end{align}
Thus, noting that $\lim_{m \to \infty} (-I_m)^{j-1} e^{-\lambda I_m}$ exists a.s., we may apply the dominated convergence theorem element-wise to obtain
\begin{align}
	\mtxLaplace_{I}(\mathbf{J}) &= \E\left[\lim_{m \to \infty}\exp\left(-\mathbf{J} I_m \right) \right]  \nonumber
	\\
	&= \lim_{m \to \infty}\mtxLaplace_{I_m}(\mathbf{J})  \nonumber
	\\
	&= \lim_{m \to \infty} \exp\left( -\int_{A_m}\left( \mathbf{I} - \boldsymbol{\mathcal{L}}_H(\mathbf{J};x) \right) \Lambda(dx) \right).
	\label{eq:thm_2_lim}
\end{align}
Hence, to complete the proof of part i) we need only characterize the limit in (\ref{eq:thm_2_lim}).
To that end, we make use of the following lemma.
\begin{lem}
\label{lem:mtx_lt_int} (Limits of $\mtxLaplace_{I_m}(\mathbf{J})$)
\\
The following hold.
\begin{enumerate}[label=\roman*)]
	\item Assume that $\int_{\R^{d}}\left( 1 - \Laplace_H(\lambda;x) \right) \Lambda(dx)$ is convergent. Then, so is $\int_{\R^{d}}\left( \mathbf{I} - \boldsymbol{\mathcal{L}}_H(\mathbf{J};x) \right) \Lambda(dx)$, and
	\begin{align}
		\mtxLaplace_I(\mathbf{J}) = \exp\left( -\int_{\R^d}\left( \mathbf{I} - \boldsymbol{\mathcal{L}}_H(\mathbf{J};x) \right) \Lambda(dx) \right).
	\end{align}
	\item Otherwise, if $\int_{\R^{d}}\left( 1 - \Laplace_H(\lambda;x) \right) \Lambda(dx)$ diverges, or $I$ is infinite a.s., then
	\begin{align}
		\mtxLaplace_I(\mathbf{J}) = \mathbf{0}_{n \times n}.
	\end{align}
\end{enumerate}

\end{lem}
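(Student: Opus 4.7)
My plan is to leverage the limit representation (\ref{eq:thm_2_lim}) already established in the proof of part i) of Theorem \ref{thm:gen_campbells}, which gives
\[
\mtxLaplace_I(\mathbf{J}) = \lim_{m\to\infty}\exp\left(-\int_{A_m}\left(\mathbf{I} - \mtxLaplace_H(\mathbf{J};x)\right)\Lambda(dx)\right).
\]
Since the matrix exponential is continuous on $\C^{n\times n}$, the lemma reduces to a careful element-wise analysis of the matrix integral as $m \to \infty$ under the two hypotheses in the statement.

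For part i), I would first note that convergence of the scalar integral $\int_{\R^d}(1 - \Laplace_H(\lambda;x))\Lambda(dx)$ implies its absolute convergence and, more importantly, finiteness of $\int_{\R^d}(1 - \Laplace_H(\Re(\lambda);x))\Lambda(dx)$. The key ingredients are the modulus bound $|1 - e^{-\lambda z}| \le (|\lambda|/\Re(\lambda))(1 - e^{-\Re(\lambda)z})$, obtained by writing $1 - e^{-\lambda z} = \int_0^z \lambda e^{-\lambda u}\,du$ and taking moduli, together with the pointwise inequality $\Re(1 - \Laplace_H(\lambda;x)) \ge 1 - \Laplace_H(\Re(\lambda);x) \ge 0$. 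These yield absolute $\Lambda$-integrability of the $(1,1)$ entry of $\mathbf{I} - \mtxLaplace_H(\mathbf{J};x)$. For each remaining entry $(1,j)$ with $j \ge 2$, which by Proposition \ref{prp:mtx_lt} has modulus at most $\frac{1}{(j-1)!}\E[H(x)^{j-1}e^{-\Re(\lambda)H(x)}]$, I would use the elementary bound $z^{j-1} e^{-\Re(\lambda)z} \le c_j(1 - e^{-\Re(\lambda)z})$ for some finite $c_j$ (readily verified by inspecting the ratio on $[0,\infty)$) to dominate the modulus by $(c_j/(j-1)!)(1 - \Laplace_H(\Re(\lambda);x))$, which is integrable. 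Element-wise dominated convergence then gives convergence of the matrix integral to $\int_{\R^d}(\mathbf{I} - \mtxLaplace_H(\mathbf{J};x))\Lambda(dx)$, and the claimed formula follows by continuity of the matrix exponential.

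For part ii), I would partition into two sub-cases. First, if $I = \infty$ almost surely, then since $\Re(\lambda) > 0$ each first-row entry $(-I)^{j-1}e^{-\lambda I}/(j-1)!$ of $\exp(-\mathbf{J} I)$ vanishes almost surely, so $\mtxLaplace_I(\mathbf{J}) = \mathbf{0}_{n\times n}$ directly from Definition \ref{def:mtx_lt}. Second, if $\int_{\R^d}(1 - \Laplace_H(\lambda;x))\Lambda(dx)$ diverges, the modulus bound from the previous paragraph forces $\int_{\R^d}(1 - \Laplace_H(\Re(\lambda);x))\Lambda(dx) = \infty$; since $\Re(\lambda) \in \R_+$, Lemma \ref{lem:finiteness} then gives $I = \infty$ a.s., reducing the second sub-case to the first.

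The principal obstacle is constructing a $\Lambda$-integrable dominating function for the off-diagonal entries of $\mathbf{I} - \mtxLaplace_H(\mathbf{J};x)$ starting only from the scalar convergence hypothesis at $\lambda$. The Erlang tail bounds provided by Lemma \ref{lem:mtx_bounds} do not suffice here, since $\Pb(Z_j > H(x))$ need not vanish as $H(x) \to 0$ and hence is not $\Lambda$-integrable when $\Lambda$ is infinite; replacing it with the polynomial-times-exponential bound $z^{j-1}e^{-\Re(\lambda)z} \le c_j(1-e^{-\Re(\lambda)z})$ is what allows the integrability of the higher-order terms to be inherited from the single assumption on $\Laplace_H(\lambda;\cdot)$.
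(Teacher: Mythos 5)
Your proof is correct, and it takes a genuinely different route from the paper's in both parts. For part i), the paper does not build dominating functions at all: it factors $\mtxLaplace_{I_m}(\mathbf{J})$ into the scalar factor $\exp\bigl(-\int_{A_m}(1-\Laplace_H(\lambda;x))\Lambda(dx)\bigr)$ times the finite nilpotent-polynomial expansion of $\exp\bigl(-\int_{A_m}(\mtxLaplace_H(\lambda\mathbf{I};x)-\mtxLaplace_H(\mathbf{J};x))\Lambda(dx)\bigr)$, observes that the limit of the product exists and is finite (from the dominated-convergence step leading to (\ref{eq:thm_2_lim})) while the scalar factor tends to a nonzero limit, and then runs an induction along the first-row entries, using the Toeplitz/nilpotent structure, to force each entry of the matrix integral to converge. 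Your elementary bounds $\lvert 1-e^{-\lambda z}\rvert\le(\lvert\lambda\rvert/\Re(\lambda))(1-e^{-\Re(\lambda)z})$ and $z^{j-1}e^{-\Re(\lambda)z}\le c_j(1-e^{-\Re(\lambda)z})$ replace that indirect argument with direct domination; this is arguably cleaner and yields the stronger conclusion that every entry of $\int_{\R^d}(\mathbf{I}-\mtxLaplace_H(\mathbf{J};x))\Lambda(dx)$ converges absolutely, which the paper's induction does not give. For part ii) your reduction is sharper: the paper treats ``$I$ finite a.s.\ but $\int_{\R^d}(1-\Laplace_H(\lambda;x))\Lambda(dx)$ divergent'' as a separate, nonempty sub-case and disposes of it by showing the nilpotent integral still converges while the real part of the divergent scalar integral is $+\infty$; your modulus bound shows divergence at $\lambda$ forces divergence at $\Re(\lambda)$, whence Lemma \ref{lem:finiteness} gives $I=\infty$ a.s., so that sub-case is in fact vacuous and everything collapses to the infinite-shot-noise case. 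The one loose end is your claim that $\mtxLaplace_I(\mathbf{J})=\mathbf{0}_{n\times n}$ follows ``directly from Definition \ref{def:mtx_lt}'' when $I=\infty$ a.s.: the quantity $\exp(-\mathbf{J}I)$ is not literally defined there, so you should pass through $\lim_{m\to\infty}\mtxLaplace_{I_m}(\mathbf{J})$ together with the uniform bound $\lvert\mtxLaplace_{I_m}(\mathbf{J})_{1,j}\rvert\le\Re(\lambda)^{1-j}$ from Lemma \ref{lem:mtx_bounds} (or its tail-probability form), which is exactly the limit representation you already invoked, so this is a matter of phrasing rather than substance.
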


\begin{proof}
	In general, we may express $\exp\left( -\int_{A_m}\left( \mathbf{I} - \boldsymbol{\mathcal{L}}_H(\mathbf{J};x) \right) \Lambda(dx) \right)$ as
	\begin{align}
	\exp & \left(-\int_{A_m}  \left(\mathbf{I} - \boldsymbol{\mathcal{L}}_H(\mathbf{J};x) \right) \Lambda(dx) \right) \nonumber
	\\
	&= \exp\left( -\int_{A_m} \left( 1 - \Laplace_H(\lambda;x) \right) \Lambda(dx)\right) \exp\left( -\int_{A_m}\left( \mtxLaplace_{H}(\mathbf{\lambda \mathbf{I}};x)  - \boldsymbol{\mathcal{L}}_H(\mathbf{J};x) \right) \Lambda(dx) \right). 
	\\
	&= \exp\left( -\int_{A_m} \left( 1 - \Laplace_H(\lambda;x) \right) \Lambda(dx)\right)  \sum_{k = 0}^{n-1} \frac{1}{k!}\left( -\int_{A_m}\left( \mtxLaplace_{H}(\mathbf{\lambda \mathbf{I}};x)  - \boldsymbol{\mathcal{L}}_H(\mathbf{J};x) \right) \Lambda(dx) \right)^{k}
	\end{align}
where the second line follows from the fact that $\int_{A_m}\left( \mtxLaplace_{H}(\mathbf{\lambda \mathbf{I}};x)  - \boldsymbol{\mathcal{L}}_H(\mathbf{J};x) \right) \Lambda(dx)$ is a nilpotent matrix. 
Thus, we have
\begin{align}
	\mtxLaplace_I(\mathbf{J}) = \lim_{m \to \infty}\exp\left( -\int_{A_m} \left( 1 - \Laplace_H(\lambda;x) \right) \Lambda(dx)\right)  \sum_{k = 0}^{n-1} \frac{1}{k!}\left( -\int_{A_m}\left( \mtxLaplace_{H}(\mathbf{\lambda \mathbf{I}};x)  - \boldsymbol{\mathcal{L}}_H(\mathbf{J};x) \right) \Lambda(dx). \right)^{k}
	\label{eq:alt_lim}
\end{align}
We use the above expression to prove both parts.

We first establish i). In light of the fact that each element of $\mtxLaplace_{I_m}(\mathbf{J})$ is bounded by a constant, we have that the limit of (\ref{eq:alt_lim}) exists and must lie in $\C^{n \times n}$. Moreover, under the condition that $\int_{\R^{d}}\left( 1 - \Laplace_H(\lambda;x) \right) \Lambda(dx)$ is convergent, we have that
\begin{align}
	\lim_{m \to \infty} \exp\left( -\int_{A_m} \left( 1 - \Laplace_H(\lambda;x) \right) \Lambda(dx)\right) \ne 0.
\end{align}
Thus, exploiting the fact that $\int_{A_m}\left( \mtxLaplace_{H}(\mathbf{\lambda \mathbf{I}};x)  - \boldsymbol{\mathcal{L}}_H(\mathbf{J};x) \right) \Lambda(dx)$ is nilpotent, by an inductive argument along the limit of the elements of the first row of (\ref{eq:alt_lim}) we then have that
\begin{align}
	\int_{\R^{d}}\left( \mtxLaplace_{H}(\mathbf{\lambda \mathbf{I}};x) - \boldsymbol{\mathcal{L}}_H(\mathbf{J};x) \right) \Lambda(dx)\in \C^{n \times n}.
\end{align}
Therefore,
\begin{align}
	\mtxLaplace_I(\mathbf{J}) &= \lim_{m \to \infty}\exp\left( -\int_{A_m} \left( 1 - \Laplace_H(\lambda;x) \right) \Lambda(dx)\right)  \sum_{k = 0}^{n-1} \frac{1}{k!}\left( -\int_{A_m}\left( \mtxLaplace_{H}(\mathbf{\lambda \mathbf{I}};x)  - \boldsymbol{\mathcal{L}}_H(\mathbf{J};x) \right) \Lambda(dx). \right)^{k} \nonumber
	\\
	&= \exp\left( -\int_{\R^d} \left( 1 - \Laplace_H(\lambda;x) \right) \Lambda(dx)\right)  \sum_{k = 0}^{n-1} \frac{1}{k!}\left( -\int_{\R^d}\left( \mtxLaplace_{H}(\mathbf{\lambda \mathbf{I}};x)  - \boldsymbol{\mathcal{L}}_H(\mathbf{J};x) \right) \Lambda(dx). \right)^{k} \nonumber
	\\
	&= \exp\left( -\int_{\R^d} \left( \mathbf{I} - \mtxLaplace_H(\mathbf{J};x) \right) \Lambda(dx)\right).  
\end{align}

We now establish part ii). First, consider the case where $I$ is infinite a.s. From Lemma \ref{lem:finiteness}, this occurs if $\int_{\R^{d}}\left( 1 - \Laplace_H(s;x) \right) \Lambda(dx)$ is divergent for some $s \in \R_+$. Thus, we have
\begin{align}
	\lvert \mtxLaplace_I(\mathbf{J})_{1,j} \rvert &= \lim_{m \to \infty}  \left \lvert \mtxLaplace_{I_m}(\mathbf{J})_{1,j} \right \rvert \nonumber
	\\
	&\stackrel{(a)}{\le} \lim_{m \to \infty}  \Re(\lambda)^{j-1}\max\left(\Pb(Z_j > I_m),  \Pb(Z_{j- 1} > I_m)\right) \nonumber
	\\
	&\stackrel{(b)}{=} \Re(\lambda)^{j-1}\max\left(\Pb(Z_j > I),  \Pb(Z_{j- 1} > I)\right) \nonumber
	\\
	&\stackrel{(c)}{\le} \Re(\lambda)^{j-1}\max\left(\E\left[e^{Z_j s } \Laplace_{I}(s)\right],  \E\left[e^{Z_{j - 1} s } \Laplace_{I}(s)\right]\right) \nonumber 
	\\
	&= 0,
\end{align}
where (a) follows from Lemma \ref{lem:mtx_bounds}, (b) follows from the dominated convergence theorem, and (c) follows from a Chernoff bound. Hence, $\mtxLaplace_I(\mathbf{J}) = \mathbf{0}_{n \times n}$.

Assume now that $I$ is finite a.s., but that $\int_{\R^{d}}\left( 1 - \Laplace_H(\lambda;x) \right) \Lambda(dx)$ is divergent. Since $I$ is finite a.s., again from Lemma \ref{lem:finiteness} we must have that $\int_{\R^{d}}\left( 1 - \Laplace_H(\Re(\lambda);x) \right) \Lambda(dx)$ is convergent. Appealing to part i), it then follows that $\int_{\R^d}\left( \mtxLaplace_{H}(\mathbf{\Re(\lambda) \mathbf{I}};x)  - \boldsymbol{\mathcal{L}}_H(\Re(\mathbf{J});x) \right) \Lambda(dx)$ is convergent.
This implies that $\int_{\R^d}\left( \mtxLaplace_{H}(\mathbf{\lambda \mathbf{I}};x)  - \boldsymbol{\mathcal{L}}_H(\mathbf{J};x) \right) \Lambda(dx)$ must be convergent, as, for $j \in \{2, \dots, n\}$,
\begin{align}
\left \lvert  \int_{\R^{d}} \left( \mtxLaplace_{H}(\mathbf{\lambda \mathbf{I}};x)  - \boldsymbol{\mathcal{L}}_H(\mathbf{J};x) \right)_{1,j}  \Lambda(dx) \right \rvert &\le   \int_{\R^{d}} \left \lvert \boldsymbol{\mathcal{L}}_H(\mathbf{J};x)_{1,j}   \right \rvert  \Lambda(dx) \nonumber
\\
& \le \int_{\R^{d}} \E\left[\left \lvert (-H(x))^{j-1} e^{-\lambda H(x)} \right \rvert \right] \Lambda(dx) \nonumber
\\
 &= \int_{\R^{d}}  (-1)^{j-1}\mtxLaplace_H(\Re(\lambda);x)_{1,j} \Lambda(dx) \nonumber
 \\
 &= (-1)^{j}\int_{\R^d}\left( \mtxLaplace_{H}(\mathbf{\Re(\lambda) \mathbf{I}};x)  - \boldsymbol{\mathcal{L}}_H(\Re(\mathbf{J});x) \right)_{1,j} \Lambda(dx).
\end{align}
Therefore, appealing additionally to Corollary \ref{thm:campbells} we must have
\begin{align}
	\mtxLaplace_I(\mathbf{J}) &= \lim_{m \to \infty}\exp\left( -\int_{A_m} \left( 1 - \Laplace_H(\lambda;x) \right) \Lambda(dx)\right)  \sum_{k = 0}^{n-1} \frac{1}{k!}\left( -\int_{A_m}\left( \mtxLaplace_{H}(\mathbf{\lambda \mathbf{I}};x)  - \boldsymbol{\mathcal{L}}_H(\mathbf{J};x) \right) \Lambda(dx). \right)^{k} \nonumber
	\\
	&= \exp\left( -\int_{\R^d} \left( 1 - \Laplace_H(\lambda;x) \right) \Lambda(dx) \right) \exp\left( -\int_{\R^d}\left( \mtxLaplace_{H}(\mathbf{\lambda \mathbf{I}};x)  - \boldsymbol{\mathcal{L}}_H(\mathbf{J};x) \right) \Lambda(dx) \right).
\end{align}
Since $\int_{\R^d} \left( 1 - \Laplace_H(\lambda;x) \right) \Lambda(dx)$ is assumed to be divergent, we must have that $\Re\left(\int_{\R^{d}} \left( 1 - \Laplace_H(\lambda;x) \right) \Lambda(dx) \right)$ is infinite. Thus, $\mtxLaplace_I(\mathbf{J}) = \mathbf{0}_{n \times n}$.
\end{proof}

\subsection{Proof of Part ii)}
Note that convergence of $\int_{\R^d} \E[H(x)^{n-1}] \Lambda(dx)$ implies that $\int_{\R^d} \E[H(x)^{j-1}] \Lambda(dx)$, $j \in \{1, \dots, n\}$, is convergent as well. Therefore, it follows that $\mtxLaplace_H(\mathbf{J}_0; x)$ exists $\Lambda$-a.s., and that $\int_{A_m}\boldsymbol{\mathcal{L}}_H(\mathbf{J_0};x)\Lambda(dx)$ is convergent. Hence, we may apply Lemma \ref{lem:finite_gen_campbells} to obtain
\begin{align}
	\mtxLaplace_{I_m}(\mathbf{J}_0) 
	&= \exp\left( -\int_{A_m}\left( \mathbf{I} - \boldsymbol{\mathcal{L}}_H(\mathbf{J_0};x) \right) \Lambda(dx) \right).
\end{align}
We complete the proof of this part noting that  $\lim_{m \to \infty} \mtxLaplace_{I_m}(\mathbf{J}_0) = \mtxLaplace_I(\mathbf{J}_0)$. To see this, note that for $j \in \{1, \dots, n\}$, $\left(\exp\left(-\mathbf{J}_0 I_m\right)\right)_{1,j}$ may be expressed as
\begin{align}
	\left(\exp\left(-\mathbf{J}_0 I_m\right)\right)_{1,j} = \frac{(-1)^j}{(j-1)!} \left(\int_{A_m} H(x) \Phi(dx)\right)^{j-1}.
\end{align}
The right hand side is either non-decreasing or non-increasing in $m$ in light of non-negativity of $H$ and the construction of $(A_m)_{m \in \N}$. Thus, applying the monotone convergence theorem we have
\begin{align}
	\mtxLaplace_I(\mathbf{J}_0) &= \E\left[ \lim_{m \to \infty} \exp\left(-\mathbf{J}_0 I_m\right)\right] \nonumber
	\\
	&= \lim_{m \to \infty} \mtxLaplace_{I_m}\left(\mathbf{J}_0 \right) \nonumber
	\\
	&= \exp\left(-\int_{\R^d}\left( \mathbf{I} - \boldsymbol{\mathcal{L}}_H(\mathbf{J_0};x) \right) \Lambda(dx) \right).
\end{align}

Finally, note that when $\int_{R^d} \E[H(x)^{n-1}] \Lambda(dx) = \infty$ we have
\begin{align}
	\E[I^{n-1}] &= \E\left[\left(\sum_{X_k \in \Phi} H(X_k) \right)^{n-1} \right] \nonumber
	\\
	&\ge \E\left[\sum_{X_k \in \Phi} H(X_k)^{n-1} \right] \nonumber
	\\
	&= \int_{R^d} \E[H(x)^{n-1}] \Lambda(dx) = \infty 
\end{align}
where the last line follows from the Campbell's averaging formula. Hence, $\mtxLaplace(\mathbf{J}_0)$ does not exist.

\subsection{Proof of Part iii)}
By Lemma \ref{lem:finiteness}, the condition that $\int_{\R^d}\left(1 - \Laplace_H(s;x) \right) \Lambda(dx)$ is convergent for all $s \in \R_{+}$ ensures that $I$ is finite a.s. Hence, for $\lambda \in \C$ such that $\Re(\lambda) = 0$, $\Laplace_I(\lambda)$ corresponds to the characteristic function of $I$ and exists and is non-zero for all such $\lambda$ \cite[Thm. 4.6]{Haenggi12}.

We first establish the claim for $I_m$ as defined in the proof of part i). As with part ii), we have that the convergence of $\int_{\R^d} \E[H(x)^{n-1}] \Lambda(dx)$ implies that $\int_{A_m}\boldsymbol{\mathcal{L}}_H(\mathbf{J};x) \Lambda(dx) $ is convergent as well. Thus, we may apply Lemma \ref{lem:finite_gen_campbells} to obtain 
\begin{align}
	\mtxLaplace_{I_m}(\mathbf{J}) 
	&= \exp\left( -\int_{A_m}\left( \mathbf{I} - \boldsymbol{\mathcal{L}}_H(\mathbf{J};x) \right) \Lambda(dx) \right).
\end{align}
Now, note that for $j \in \{1, \dots, n\}$ we have
\begin{align}
	\E\left[ \left \lvert \exp\left(-\mathbf{J} I_m \right)_{1,j} \right \rvert \right] &= \E\left[ \left \lvert \exp\left(-\mathbf{J}_0 I_m \right)_{1,j} \right \rvert \right] \nonumber
	\\
	&\le \E\left[ \left \lvert \exp\left(-\mathbf{J}_0 I \right)_{1,j} \right \rvert \right] \nonumber
	\\
	&= \left \lvert \mtxLaplace_{I}(\mathbf{J}_0)_{1,j} \right \rvert.
\end{align}
Therefore, in light of the fact that $I$ is finite a.s., we may apply the dominated convergence theorem to obtain
\begin{align}
	\mtxLaplace_I(\mathbf{J}) &= \lim_{m \to \infty} \exp\left( -\int_{A_m}\left( \mathbf{I} - \boldsymbol{\mathcal{L}}_H(\mathbf{J};x) \right) \Lambda(dx) \right) \nonumber\\
	&= \lim_{m \to \infty} \exp\left( -\int_{A_m}\left( 1 - \Laplace_H(\lambda;x) \right) \Lambda(dx) \right) \exp\left( -\int_{A_m}\left( \mtxLaplace_H(\lambda \mathbf{I}; x) - \boldsymbol{\mathcal{L}}_H(\mathbf{J};x) \right) \Lambda(dx) \right)\nonumber \\
	&= \exp\left( -\int_{\R^d}\left( 1 - \Laplace_H(\lambda;x) \right) \Lambda(dx) \right) \exp\left( -\int_{\R^d}\left( \mtxLaplace_H(\lambda \mathbf{I}; x) - \boldsymbol{\mathcal{L}}_H(\mathbf{J};x) \right) \Lambda(dx) \right) \nonumber
	\\
	&= \exp\left(-\int_{\R^d}\left( \mathbf{I} - \boldsymbol{\mathcal{L}}_H(\mathbf{J};x) \right) \Lambda(dx) \right).
\end{align}
Note that the second line follows from an identical argument to that presented in Lemma \ref{lem:mtx_lt_int}. The third line follows from Campbell's theorem and the fact that $\int_{\R^d}\left( \mtxLaplace_H(\lambda \mathbf{I}; x) - \boldsymbol{\mathcal{L}}_H(\mathbf{J};x) \right) \Lambda(dx)$ is well defined (a consequence of the fact that $\int_{\R^d} \E[H(x)^{n-1}] \Lambda(dx) < \infty$).

\section{Proof of Proposition \ref{prp:approx_bounds}}
\label{app:approx_bounds_proof}

The proof of i) is a consequence of the dominated convergence theorem. Note that $\tilde{P}^{(N)}_I(\tau)$ may be expressed as
\begin{align}
	\tilde{P}^{(N)}_X(\tau) = \E\left[ \mathds{1}\{ X \ge \tilde{Z}_N \tau\} \right].
\end{align}
Consider the sequence $\chi_N = \mathds{1}\{ X \ge \tilde{Z}_N \tau\}$.
Since $\lim_{N \to \infty} \tilde{Z}_N = 1$ a.s., we have that $\lim_{N \to \infty} \chi_N = \mathds{1}\{ X \ge \tau\}$ a.s. Therefore, as $\lvert \chi_N \rvert \le 1$ for all $N$, we may apply the dominated convergence theorem to conclude that
\begin{align}
	\lim_{N \to \infty} \tilde{P}^{(N)}_X(\tau) = \bar{F}_{X}(\tau).
\end{align}

We now establish ii). The approximation error of $\tilde{P}^{(N)}_X(\tau)$ with respect to $\bar{F}_{X}(\tau)$ may be upper bounded as follows
\begin{align}
	\left \lvert \bar{F}_X(\tau) - \tilde{P}^{(N)}_X(\tau)  \right \rvert &= \left \lvert \E \left[\indicator\{X \ge \tau \} - \Pb(X \ge \tau \tilde{Z}_N \vert X ) \right] \right \rvert \nonumber
	\\
	&=\left \lvert \E \left[\Pb(\tilde{Z}_N > X \tau^{-1} \vert X ) \indicator\{X \ge \tau \} - \Pb(\tilde{Z}_N \le X \tau^{-1} \vert X ) \indicator\{X < \tau \} \right] \right \rvert \nonumber
	\\
	&\le\max\left\{\E \left[\Pb(\tilde{Z}_N > X \tau^{-1} \vert X ) \indicator\{X \ge \tau \}\right], \E\left[\Pb(\tilde{Z}_N \le X \tau^{-1} \vert X ) \indicator\{X < \tau \} \right]\right\}.
	\label{eq:exact_aprx_err}
\end{align}
The upper bound in (\ref{eq:aprx_err_pii}) then follows applying a Chernoff bound to the tail probabilities in (\ref{eq:exact_aprx_err})
\begin{align}
	\Pb(\tilde{Z}_N > X\tau^{-1} \vert X) \le \inf_{\theta \ge 0} \mathcal{L}_{\tilde{Z}_N}(-\theta)e^{-\theta X\tau^{-1}},
\end{align}
and
\begin{align}
	\Pb(Z_N \le X\tau^{-1} \vert X) \le \inf_{\theta \ge 0} \mathcal{L}_{\tilde{Z}_N}(\theta)e^{\theta X\tau^{-1}}.
\end{align}

Finally iii) is established as follows. First, note that (\ref{eq:erlang_aprx_err}) is a corollary of \cite[Thm. 3]{OCinneide91}, which states that $\tilde{Z}_N$ majorizes $Z_N$. Hence,
\begin{align}
	\inf_{\theta \ge 0} \mathcal{L}_{\tilde{Z}_N}(-\theta)e^{-\theta X\tau^{-1}} \ge \inf_{\theta \ge 0} \mathcal{L}_{{Z}_N}(-\theta)e^{-\theta X\tau^{-1}},
\end{align}
and
\begin{align}
	\inf_{\theta \ge 0} \mathcal{L}_{\tilde{Z}_N}(\theta)e^{\theta X\tau^{-1}} \ge \inf_{\theta \ge 0} \mathcal{L}_{{Z}_N}(\theta)e^{\theta X\tau^{-1}}.
\end{align}

The upper and lower bounds in (\ref{eq:aprx_bounds}) may be obtained as follows. First, note that $Z_N$ admits that following Chernoff tail bounds
\begin{align}
	\Pb(Z_N \le 1 + \delta) \le \inf_{\theta \ge 0} \mathcal{L}_{{Z}_N}(-\theta)e^{-\theta (1 + \delta)} =e^{-N(\delta - \log(1+ \delta))},
\end{align}
and
\begin{align}
	\Pb(Z_N \le 1 - \delta) \le \inf_{\theta \ge 0} \mathcal{L}_{{Z}_N}(\theta)e^{\theta (1 - \delta)} =e^{-N(\delta - \log(1+ \delta))}.
\end{align}
Moreover, by definition $\delta(\epsilon)$ statisfies $e^{-N(\delta(\epsilon) - \log(1+ \delta(\epsilon))} = \epsilon$. Thus, it follows that
\begin{align}
	\Pb(Z_N \ge 1 + \delta(\epsilon)) \le \epsilon && \Pb(Z_N \le 1 - \delta(\epsilon)) \le \epsilon.
\end{align}
Using these inequalities we have
\begin{align}
	{P}^{(N)}_X(\tau) &= \E_{Z_N}\left[ \bar{F}_{X}(\tau Z_N)\right] \nonumber
	\\
	&\le \E_{Z_N}\left[ \bar{F}_{X}(\tau Z_N) \vert Z_N \ge 1 - \delta(\epsilon) \right]\Pb(Z_N \ge 1 - \delta(\epsilon)) + \Pb(Z_N \le 1 - \delta(\epsilon)) \nonumber
	\\
	&\le \E_{Z_N}\left[ \bar{F}_{X}(\tau Z_N) \vert Z_N \ge 1 - \delta(\epsilon) \right] + \epsilon \nonumber
	\\
	&\le \bar{F}_{X}(\tau (1 - \delta(\epsilon)))  + \epsilon,
\end{align}
which establishes the lower bound on $\bar{F}_{X}(\tau)$. Similarly, for the upper bound we have
\begin{align}
	{P}^{(N)}_X(\tau) &= \E_{Z_N}\left[ \bar{F}_{X}(\tau Z_N)\right] \nonumber
	\\
	&\ge \E_{Z_N}\left[ \bar{F}_{X}(\tau Z_N) \vert Z_N \le 1 + \delta(\epsilon) \right]\Pb(Z_N \le 1 + \delta(\epsilon)) \nonumber
	\\
	&\ge \E_{Z_N}\left[ \bar{F}_{X}(\tau Z_N) \vert Z_N \le 1 + \delta(\epsilon) \right]\left(1 - \epsilon\right) \nonumber
	\\
	&\ge \bar{F}_{X}(\tau (1 + \delta(\epsilon))) \left(1 - \epsilon\right).
\end{align}

\bibliographystyle{IEEEtran}
\bibliography{../Latex/SpecCoexRefs.bib}
	
\end{document}